\pdfoutput=1
\documentclass[a4paper,11pt]{article}
\usepackage{jheppub}
\makeatletter
\gdef\@fpheader{}
\makeatother
\usepackage{amsthm}
\usepackage{mathtools}
\usepackage{bm}
\usepackage{mathrsfs}

\title{\boldmath Finite-energy hard celestial current algebra from the Banerjee--Mandal--Sahoo dipole Ward identity in QED}

\author[a]{Ruiliang Li}
\affiliation[a]{Tsinghua University, Beijing 100084, China}
\emailAdd{lirl23@mails.tsinghua.edu.cn}

\abstract{We use the Banerjee--Mandal--Sahoo dipole-current Ward identity for the one-loop logarithmic soft-photon theorem as input and determine its finite-energy action on Mellin-difference hard currents.  The commutator with such hard currents has a scheme-independent hard-hard residue that survives every one-particle redefinition.  With the meromorphic continuation stated explicitly below, a two-particle Plancherel transform identifies this residue with an analytic two-particle primary module, and the coefficient map is a hard-current one-cocycle.  The cocycle defines a minimal filtered abelian extension.  It has a canonical two-particle primitive and integrates to an affine action.  For scalar hard legs, the fixed-leg operator agrees coefficient by coefficient with the symmetry-governed long-range logarithmic tower of Choi, Kadhe, and Puhm.  Applied to a tree-level scalar-QED photon-exchange block, the finite-energy analysis determines the logarithmic two-particle coefficient functional from the ordinary hard amplitude and the Banerjee--Mandal--Sahoo ordered-pair soft kernel.  This gives a finite-energy relation between the Banerjee--Mandal--Sahoo dipole-current Ward identity and the exponentiated long-range celestial OPE.}

\newtheorem{theorem}{Theorem}[section]
\newtheorem{lemma}[theorem]{Lemma}
\newtheorem{proposition}[theorem]{Proposition}

\newtheorem{corollary}[theorem]{Corollary}

\newtheorem{assumption}[theorem]{Assumption}

\begin{document}
\maketitle
\flushbottom

\section{Introduction and statement of results}
\label{sec:introduction}

Celestial holography expresses four-dimensional scattering amplitudes as correlation functions on the celestial sphere.  For massless external states, the Mellin transform of the energy produces a conformal dimension and the Lorentz group becomes the global conformal group of the sphere \cite{Cheung:2016iub,Pasterski:2016qvg,Pasterski:2017kqt,Pasterski:2017kqt2,Pasterski:2021rjz,Pasterski:2021rjp}.  Soft theorems give celestial currents whose Ward identities realize infrared symmetries and higher-spin current algebras \cite{Weinberg:1965nx,Low:1958sn,Strominger:2017zoo,Pate:2019lpp,Donnay:2018neh,Nande:2017dba,Guevara:2021abz,Himwich:2021dau,Strominger:2021mtt}.  Loop-level gluon OPEs and logarithmic celestial multiplets have been studied \cite{Bhardwaj:2022gluon,Fiorucci:2023log}.  Low's subleading soft photon theorem already exhibits the finite Mellin shift that appears in the hard-current module \cite{Himwich:2019dug}.

The one-loop logarithmic soft theorem has a different support structure.  After infrared subtraction, its coefficient depends on an ordered pair of charged hard particles.  Banerjee, Mandal and Sahoo reformulated this coefficient as a Ward identity for an antiholomorphic dipole-current doublet of $SL(2,\mathbb R)_R$ \cite{Banerjee:2026soft}.  The high-energy hierarchy, the dipole-current doublet, the Ward identity, the soft-hard OPE, the transformation law, the spinor notation, the local normal-ordered dipole descendant and the higher-spin-current construction used below are taken from their work and rewritten in the conventions of section~\ref{sec:dipole-currents}.  Exponentiated soft exchange also admits a celestial shadow description \cite{Kapec:2021shadow}.  Choi, Kadhe and Puhm derived the pairwise nonlocality from long-range exchange and showed that the symmetry-governed logarithmic sector exponentiates \cite{Choi:2026longrange}.  A note added to that work leaves open the relation between the conformally soft loop-operator OPE and the emergent dipole operator.  The finite-energy version of that question asks whether the Banerjee--Mandal--Sahoo dipole Ward kernel defines a consistent action on Mellin-dependent hard currents.

Section~\ref{sec:dipole-currents} fixes the normalization of the Banerjee--Mandal--Sahoo construction used throughout the paper.  The finite-energy analysis begins with the action of that Ward kernel on Mellin-difference hard currents and with the second-support class produced by this action.

At tree level a celestial soft current acts as a sum of single-leg endomorphisms.  The logarithmic loop kernel instead has a hard-hard Cauchy residue.  This residue is the celestial form of the long-range exchange between two charged hard trajectories and acts on two-particle data in the flat-hologram spectrum.  A one-particle endomorphism is supported at a single insertion, so a single-leg current redefinition cannot remove the loop term.  Resolving the residue into celestial partial waves fixes the associated two-particle coefficient functional.  For a specified hard amplitude, the symmetry-governed logarithmic correction is obtained by a fixed backward-difference operator and the analytic two-particle spectral resolution.  Section~\ref{sec:checks-examples-outlook} applies this rule to a photon-exchange hard block and to a contact block.

Ball formulated hard-current insertions and related their consistency to double residues, celestial locality and Jacobi identities \cite{Ball:2022bgg,Ball:2024currents,Mago:2021wje,Ren:2022sws}.  Liu and Ma constructed a Mellin-difference hard-current algebra in which ordinary soft currents arise from finite-energy generators \cite{Liu:2026hard}.  The problem addressed here starts after the universal long-range kernel is fixed: its commutator with a finite-energy hard current must be represented in the celestial spectrum.  We prove that the commutator has a nonzero class in $F_2/F_1$, identify its analytic two-particle target, derive its hard-current cocycle and affine action, and establish the minimality of that target.  Meromorphic continuation of the two-particle Plancherel transform connects the support class with the multiparticle spectrum of the flat hologram \cite{Kulp:2025flat,Guevara:2024dcy,Ball:2023qtp,Pacifico:2025cpw}.  Proposition~\ref{prop:ckp-coefficient-matching} compares the fixed-leg Mellin shift with the symmetry-governed scalar-QED tower of \cite{Choi:2026longrange}.  An infrared-subtracted hard correlator determines the symmetry-governed depth-two coefficients through a backward Mellin difference and a universal two-particle transform; the cocycle then fixes their recursion under further hard-current commutators.

We obtain the filtered abelian extension
\begin{equation}
\label{eq:intro-schematic-closure}
  \mathfrak g_{\rm ren}^{\log}
  =\mathfrak g_1\oplus_{\nu}t\mathcal M_2,
  \qquad t^2=0,
  \qquad \bmod F_3\, ,
\end{equation}
with exact sequence
\begin{equation}
\label{eq:intro-short-exact-sequence}
  0\longrightarrow \mathcal M_2
  \longrightarrow \mathfrak g_{\rm ren}^{\log}
  \longrightarrow \mathfrak g_1
  \longrightarrow 0
  \qquad \bmod F_3\, .
\end{equation}
Here $\mathfrak g_1=\mathfrak h_{\rm hard}\rtimes\mathfrak s_{\log}$, with $\mathfrak h_{\rm hard}$ the Mellin-difference hard-current algebra and $\mathfrak s_{\log}$ the one-particle soft closure generated by normalized logarithmic insertions, ordinary dipole charges and the local currents produced by same-leg products.  The ideal $\mathcal M_2$ is the soft-accessible analytic two-particle primary module.  ``Normalized'' means that the overall perturbative coefficient multiplying the Ward identity has been stripped off; the displayed kernel is the universal coefficient fixed by the soft theorem.  The extension is determined by
\begin{equation}
\label{eq:intro-one-cocycle-map}
  \mathfrak m_X:\mathfrak h_{\rm hard}\longrightarrow \mathcal M_2,
  \qquad
  \mathfrak m_X(\Phi)=\sigma_2[\,X,H[\Phi] \,] ,
\end{equation}
where $X\in\mathfrak s_{\log}$ and $\sigma_2:F_2\to F_2/F_1$ is the second-support symbol.  If
\begin{equation}
\label{eq:intro-canonical-primitive}
  \kappa_X:=\sigma_2(K_X)\in\mathcal M_2
\end{equation}
is the class of the ordered-pair Ward kernel, then
\begin{equation}
\label{eq:intro-relative-exactness}
  \mathfrak m_X(\Phi)=-\Phi\cdot\kappa_X.
\end{equation}
Thus the cocycle is exact after the two-particle target is admitted, while the pairwise residue criterion excludes any primitive in $F_1$.  The map $\mathfrak m_X$ is the finite-energy response of the long-range dipole kernel under a Mellin displacement of a charged hard leg.  Its nonzero pairwise residue is the obstruction to treating the loop correction as a single-particle current, and the integrated affine action is the corresponding celestial form of the long-range logarithmic orbit.

The cocycle integrates to an affine action of the formal hard-current group.  On a fixed ordered pair with kernel $K_X^{ab}=L_X^{ab}T_a^{-1}$ and scalar Mellin label $f_a(\Delta_a)$,
\begin{equation}
\label{eq:intro-exponentiated-difference}
  \operatorname{ad}_{K_X^{ab}}^{n}f_a
  =(L_X^{ab})^n(\delta_a^-)^n f_a\,T_a^{-n} \, .
\end{equation}
The exponentiated pairwise action is obtained by iterating the backward difference and then projecting to the same analytic two-particle channel.  Proposition~\ref{prop:ckp-coefficient-matching} identifies this series with the long-range exponential of \cite{Choi:2026longrange} after the standard crossing and normalization conventions are aligned.  The dipole kernel is the canonical two-particle primitive of the corresponding affine hard-current action; a one-particle realization would require a primitive with the same hard-pair pole, which does not exist in $F_1$.

\subsection{Logarithmic soft input}
\label{subsec:intro-logarithmic-soft-input}

We work with infrared-subtracted hard functions,
\begin{equation}
\label{eq:intro-ir-factorization}
  \mathcal A_n^{\rm bare}=Z_n^{\rm IR}\,\mathcal H_n^{\rm ren}\, .
\end{equation}
All Ward identities below act on the Mellin transform of \(\mathcal H_n^{\rm ren}\).  For an outgoing positive-helicity photon with energy \(\omega_s\), the logarithmic coefficient has the form
\begin{equation}
\label{eq:intro-log-soft-theorem}
  \mathcal H_{n+1}^{\rm ren}(s^+;1,\ldots,n)\Big|_{\log\omega_s}
  =\log \omega_s\;S^{(1)}_{\log}(s;1,\ldots,n)\,
  \mathcal H_n^{\rm ren}+O(\omega_s^0)\, .
\end{equation}
After projection to the dipole doublet, its normalized local Ward identity contains a one-particle term and an ordered-pair term,
\begin{equation}
\label{eq:intro-dipole-ward-schematic}
  \bigl\langle D_\alpha(\bar u)\prod_{a=1}^n\mathcal O_a\bigr\rangle
  =\sum_a e_a d^a_\alpha(\bar u)\bigl\langle\prod_a\mathcal O_a\bigr\rangle
  +\sum_{a\ne b}K^{ab}_\alpha(\bar u)
  \bigl\langle \mathbb P_{ab}^{(2)}\prod_c\mathcal O_c\bigr\rangle\, .
\end{equation}
The ordered-pair kernel has a Cauchy pole on the diagonal \(\bar z_a=\bar z_b\) of two hard insertions.  One-particle counterterms have no such singularity, and the nonzero residue yields the non-absorption theorem.

\subsection{Structural results}
\label{subsec:intro-three-results}

The dipole-hard OPE takes the following form.  For a hard current \(H[\Phi]\), theorem~\ref{thm:dipole-hard-ope} gives
\begin{equation}
\label{eq:intro-dipole-hard-ope}
  [X,H[\Phi]]
  =H[\mathcal L_X\Phi]+\mathbb M_X[\Phi]
  \qquad \bmod F_3,
\end{equation}
with
\begin{equation}
\label{eq:intro-M-symbol}
  \mathfrak m_X(\Phi)
  :=\sigma_2\bigl(\mathbb M_X[\Phi]\bigr)
  \in F_2/F_1\simeq\mathcal M_2\, .
\end{equation}
The pairwise residue criterion, theorem~\ref{thm:nonabsorption-section4}, proves that this class is nonzero whenever some ordered pair has nonzero charge factor and the corresponding backward Mellin difference of \(\Phi\) does not vanish.

The coefficient map obeys the hard-current cocycle identity
\begin{equation}
\label{eq:intro-hard-cocycle}
  \mathfrak m_X([\Phi,\Psi]_\star)
  =\Phi\cdot\mathfrak m_X(\Psi)
   -\Psi\cdot\mathfrak m_X(\Phi)\, .
\end{equation}
This identity is the algebraic form of the fact that the long-range soft kernel is transported coherently by finite-energy hard-current motions.  Equation~\eqref{eq:intro-relative-exactness} identifies its canonical primitive and shows that the obstruction is relative to the one-particle filtration, not an independent dynamical coupling.  The cocycle integrates to the formal hard-current group, and eq.~\eqref{eq:intro-exponentiated-difference} gives its pairwise finite action in closed form.  Polynomial Mellin labels generate the analytic Mellin component of \(\mathcal M_2\), while Cauchy--Pompeiu completeness gives the angular component.

Two normalized logarithmic soft insertions are compatible with the hard-current action.  If \(K_X^{ab}\) and \(K_Y^{ab}\) are the ordered-pair kernels on a fixed pair, the second-support part of the mixed Jacobi identity is
\begin{equation}
\label{eq:intro-mixed-jacobi-symbol}
  [K_X^{ab},[K_Y^{ab},\Phi_{ab}]]
  -[K_Y^{ab},[K_X^{ab},\Phi_{ab}]]
  -[[K_X^{ab},K_Y^{ab}],\Phi_{ab}]=0\, .
\end{equation}
The equality is the Jacobi identity in the finite-part Cauchy kernel algebra and shows that the nonzero pairwise symbols admit a common internal target.  Theorem~\ref{thm:minimal-closure-section6} proves that the target cannot be removed by a one-particle scheme change and, under its profile and Mellin cyclicity hypotheses, is the full soft-accessible module \(\mathcal M_2\).

\subsection{Relation to previous work and scope}
\label{subsec:intro-relation-scope}

The high-energy hierarchy, dipole-current doublet, Ward identity, soft-hard OPE, transformation law, normal-ordered dipole descendant and higher-spin-current construction are inputs from Banerjee, Mandal and Sahoo \cite{Banerjee:2026soft}; the hard-current setting follows \cite{Ball:2024currents,Liu:2026hard}.  Loop-level hard OPEs, logarithmic celestial multiplets, and conformally soft one-loop currents are developed in \cite{Bhardwaj:2022gluon,Fiorucci:2023log,Bhardwaj:2024qrf}, while the spacetime multipole interpretation of the classical logarithmic theorem is developed in \cite{Compere:2025multipole}.  Choi, Kadhe and Puhm compute conformally soft loop-operator OPEs and derive their long-range exponential \cite{Choi:2026longrange}.  A note added to that work leaves open the relation between this description and the emergent dipole operator of \cite{Banerjee:2026soft}.  The hard-current cocycle theorem, propositions~\ref{prop:canonical-primitive-relative-splitting} and \ref{prop:integrated-hard-current-cocycle}, and proposition~\ref{prop:pairwise-exponentiation} give the connection in the finite-energy hard-current sector.  The finite-energy construction consists of the non-absorbable second-support class, its hard-current cocycle, its finite affine action, and the generation of the analytic two-particle target.  The extension is a statement about the consistency of the filtered celestial OPE, in the sense distinguished from a spacetime symmetry in \cite{Ball:2024currents}.  No additional bulk symmetry is asserted.  One-loop failures of holomorphic chiral-OPE associativity in self-dual gauge theory and gravity instead arise from twistor-space anomalies \cite{Costello:2022wso,Bittleston:2022jeq}.

The residue criterion, hard-current cocycle, canonical primitive, affine action, and scalar-leg matching follow directly from the Ward kernel and the Mellin-difference module.  The global analytic identification $F_2/F_1\simeq\mathcal M_2$, the generation theorem, and the full minimality statement use assumption~\ref{ass:appC-meromorphic-continuation}.  Their stripwise versions are unconditional on each compact pole-free spectral strip covered by proposition~\ref{prop:appC-stripwise-plancherel}.

The analysis is restricted to the first logarithmic photon kernel of an infrared-subtracted abelian QED-like theory and to the quotient through \(F_2/F_1\); the two gradings are fixed in eqs.~\eqref{eq:loop-support-bigrading} and \eqref{eq:formal-support-parameter}.  The momentum-space theorem is regulated in the hierarchy \(\omega_s\ll m\ll E\), and the celestial formulas retain its leading high-energy term.  Nonabelian gauge theory and gravity require additional soft data and are not assumed here \cite{Magnea:2025nonabelian}.

\section{Renormalized celestial hard data}
\label{sec:renormalized-hard-data}

The current algebra acts on infrared-subtracted finite-energy celestial data.  The Mellin transform converts each hard energy into a conformal weight and converts logarithms of the soft energy into higher-order poles in the soft conformal dimension.

We use mostly-plus signature and write all scattering data in an all-outgoing convention.  An external leg has an orientation sign \(\eta_i=+1\) for an outgoing particle and \(\eta_i=-1\) for an incoming particle.  Its physical electric charge is denoted by \(Q_i\), while the charge entering Ward identities in the all-outgoing convention is
\begin{equation}
  e_i = \eta_i Q_i \,.
\end{equation}
The distinction will be useful when comparing celestial Ward identities with the usual momentum-space soft factors.  The regulator and the subtraction scale are denoted by \(\epsilon_{\rm IR}\) and \(\mu\), respectively.  The logarithmic soft theorem is derived with a small charged-particle mass in the hierarchy \(\omega_s\ll m\ll E\).  We retain its leading high-energy coefficient and then use the massless celestial conformal basis; power corrections in \(m/E\) are outside the analysis.  We reserve \(\epsilon_i\) neither for dimensional regularization nor for particle orientation, in order to avoid a clash of notation.

\medskip
\noindent\textbf{Main notation.}
\begin{center}
\small
\renewcommand{\arraystretch}{1.15}
\begin{tabular}{@{}p{0.27\textwidth}p{0.65\textwidth}@{}}
\hline
Symbol & Meaning \\
\hline
\(H[\Phi]\), \(\mathfrak h_{\rm hard}\) & Mellin-difference hard current and its Lie algebra \\
\(X\), \(K_X\), \(\mathfrak s_{\log}\) & logarithmic soft insertion, its ordered-pair kernel, and the one-particle soft closure \\
\(F_1\subset F_2\), \(\sigma_2\) & one-particle and one-pair support layers, with \(\sigma_2:F_2\to F_2/F_1\) \\
\(\mathbb M_X[\Phi]\), \(\mathfrak m_X(\Phi)\) & a depth-two representative and its class \(\sigma_2\mathbb M_X[\Phi]\) \\
\(\kappa_X\), \(\rho_2\) & the class \(\sigma_2(K_X)\) and the diagonal hard-current action on the second layer \\
\(\mathcal M_2\) & analytic two-particle module, including crossed-pole residue channels and the shadow quotient \\
\(T_a^{-1}\), \(\delta_a^-\), \(\nabla_a^-\) & Mellin shift, scalar backward difference, and \((\delta_a^-\Phi_a)T_a^{-1}\) \\
\(L\), \(t\) & physical loop degree and formal support-extension degree \\
\hline
\end{tabular}
\end{center}
\medskip

\subsection{Null directions and the celestial transform}
\label{subsec:kinematics-celestial-transform}

A point \((z,\bar z)\) on the celestial sphere determines the future-directed null vector
\begin{equation}
  q^\mu(z,\bar z)
  = \bigl(1+z\bar z,\, z+\bar z,\, -i(z-\bar z),\, 1-z\bar z\bigr) \,.
  \label{eq:q-vector}
\end{equation}
With our metric conventions
\begin{equation}
  q(z,\bar z)^2 = 0 \, ,
  \qquad
  q(z,\bar z)\cdot q(w,\bar w) = -2 |z-w|^2 \, .
  \label{eq:q-inner-products}
\end{equation}
A massless momentum is parametrized as
\begin{equation}
  p_i^\mu = \eta_i\,\omega_i\,q^\mu(z_i,\bar z_i) \, ,
  \qquad
  \omega_i>0 \, .
  \label{eq:p-parametrization}
\end{equation}
The spinor-helicity variables compatible with \eqref{eq:p-parametrization} may be chosen as
\begin{equation}
  \lambda_{i\alpha}=\sqrt{2\omega_i}\binom{1}{z_i}\, ,
  \qquad
  \tilde\lambda_{i\dot\alpha}=\eta_i\sqrt{2\omega_i}\binom{1}{\bar z_i}\, ,
  \label{eq:spinors}
\end{equation}
so that \(p_{i\alpha\dot\alpha}=\lambda_{i\alpha}\tilde\lambda_{i\dot\alpha}\).  The sign \(\eta_i\) is absorbed in \(\tilde\lambda_i\), which keeps \(\omega_i\) positive for both incoming and outgoing legs.

The Lorentz group \(SL(2,\mathbb C)\) acts on the celestial sphere by Mobius transformations.  If
\begin{equation}
  g = \begin{pmatrix} a & b \\ c & d \end{pmatrix}\in SL(2,\mathbb C)\, ,
  \qquad
  z' = \frac{az+b}{cz+d}\, ,
  \label{eq:mobius}
\end{equation}
then the null vector satisfies
\begin{equation}
  \Lambda(g)^\mu{}_{\nu} q^\nu(z,
  \bar z)= |cz+d|^2 q^\mu(z',\bar z') \, .
  \label{eq:q-lorentz-transform}
\end{equation}
The energy variable transforms as \(\omega'=|cz+d|^{-2}\omega\) if the momentum is held fixed, and as \(\omega'=|cz+d|^{2}\omega\) if the transformed momentum is written again in the form \eqref{eq:p-parametrization}.  The two descriptions are equivalent; in the Mellin transform it is the Jacobian of this rescaling that produces the conformal weights.

For a massless external field of helicity \(J\) and electric charge \(Q\), the celestial operator is the Mellin transform of the corresponding plane-wave creation or annihilation operator,
\begin{equation}
  \mathcal O^{Q,\eta}_{\Delta,J}(z,\bar z)
  =\int_0^\infty d\omega\,\omega^{\Delta-1}
  a^{Q,\eta}_{J}(\omega,z,\bar z) \, .
  \label{eq:celestial-primary-definition}
\end{equation}
For charged scalars \(J=0\).  For photons \(Q=0\) and \(J=\pm1\).  The principal series line is
\begin{equation}
  \Delta = 1+i\lambda\, ,
  \qquad
  \lambda\in\mathbb R \, ,
  \label{eq:principal-series}
\end{equation}
and the two-dimensional weights are
\begin{equation}
  h=\frac{\Delta+J}{2}\, ,
  \qquad
  \bar h=\frac{\Delta-J}{2}\, .
  \label{eq:weights}
\end{equation}
These operators form the standard massless conformal basis for flat-space scattering amplitudes \cite{Pasterski:2017kqt,Pasterski:2016qvg,Pasterski:2017kqt2,Pasterski:2021rjz}.

\begin{lemma}
\label{lem:lorentz-primary}
Let \(\mathcal O^{Q,\eta}_{\Delta,J}\) be defined by \eqref{eq:celestial-primary-definition}.  Under \(g\in SL(2,\mathbb C)\) it transforms as a two-dimensional spin-\(J\) primary,
\begin{equation}
  \mathcal O^{Q,\eta}_{\Delta,J}(z,\bar z)
  \longmapsto
  (cz+d)^{\Delta+J}(\bar c\bar z+\bar d)^{\Delta-J}
  \mathcal O^{Q,\eta}_{\Delta,J}(z',\bar z') \, .
  \label{eq:primary-transform}
\end{equation}
Correlation functions of the Mellin-transformed hard data transform covariantly with weights \((h_i,\bar h_i)\) at each insertion.
\end{lemma}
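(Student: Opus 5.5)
The plan is to derive the transformation law directly from the Lorentz transformation of the one-particle creation and annihilation operators, and then to transport it through the Mellin integral \eqref{eq:celestial-primary-definition} by a change of integration variable.

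\emph{Step 1: the one-particle operators.} For a massless particle of helicity \(J\), a unitary Lorentz transformation \(U(\Lambda(g))\) maps \(a_J^{Q,\eta}(p)\) to \(a_J^{Q,\eta}(\Lambda p)\), multiplied by a little-group phase \(e^{iJ\theta}\). The measure-normalization factor is fixed by the invariant measure \(d^3p/(2|\vec p|)\). I would compute the little-group phase in the spinor variables \eqref{eq:spinors}. The key input is that under \(g\) the spinor \(\binom{1}{z}\) maps to \((cz+d)\binom{1}{z'}\), up to conventions on whether \(g\) or \(g^{-T}\) acts on \(\lambda\). Combined with \eqref{eq:q-lorentz-transform}, this gives
\begin{equation}
  \lambda_i\mapsto \Bigl(\tfrac{cz+d}{\bar c\bar z+\bar d}\Bigr)^{1/2}\lambda_i',\qquad
  \tilde\lambda_i\mapsto \Bigl(\tfrac{\bar c\bar z+\bar d}{cz+d}\Bigr)^{1/2}\tilde\lambda_i',
\end{equation}
with \(\omega'=|cz+d|^2\omega\). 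A helicity-\(J\) state therefore picks up the phase \(\bigl((cz+d)/(\bar c\bar z+\bar d)\bigr)^{J}\). The invariant measure \(d^3p/(2|\vec p|)\propto \omega\,d\omega\,d^2z\) transforms as \(\omega\,d\omega\,d^2z\mapsto |cz+d|^{4}|cz+d|^{-4}\,\omega' d\omega' d^2z'\). Using the normalization implicit in \eqref{eq:celestial-primary-definition}, the operator \(a(\omega,z,\bar z)\) accordingly transforms as \(a(\omega',z',\bar z')\) times the phase and a possible power \(|cz+d|^{2k}\). I would fix \(k\) by requiring that \([a,a^\dagger]\propto \omega^{-1}\delta(\omega-\omega')\delta^2(z-w)\) is preserved.

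\emph{Step 2: the Mellin change of variables.} Next I substitute \(\omega=|cz+d|^{-2}\omega'\) in \eqref{eq:celestial-primary-definition}. This gives
\begin{equation}
  \int_0^\infty d\omega\,\omega^{\Delta-1}a(\omega',z',\bar z')=|cz+d|^{-2\Delta}\int_0^\infty d\omega'\,\omega'^{\Delta-1}a(\omega',z',\bar z').
\end{equation}
Multiplying by the phase \(\bigl((cz+d)/(\bar c\bar z+\bar d)\bigr)^J\), and by the measure power from Step 1, yields \((cz+d)^{\Delta+J}(\bar c\bar z+\bar d)^{\Delta-J}\) up to the sign convention for the direction of the map (active versus passive). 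I would align that convention with the statement \eqref{eq:primary-transform}, using \(h,\bar h\) from \eqref{eq:weights}. For incoming legs, \(\eta_i=-1\) sits inside \(\tilde\lambda\) and does not change the factor, since \(\omega>0\) is kept.

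\emph{Step 3: correlators.} The second claim then follows from Lorentz invariance of the hard function. The infrared factor \(Z_n^{\rm IR}\) in \eqref{eq:intro-ir-factorization} depends only on the invariants \(p_i\cdot p_j\) and \(\mu\), so it is invariant and \(\mathcal H_n^{\rm ren}\) is covariant exactly like \(\mathcal A_n\). Applying Step 2 to each leg gives the weights \((h_i,\bar h_i)\) at each insertion. The momentum-conserving delta function is Lorentz invariant and passes through unchanged.

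\emph{Main obstacle.} The hard part is bookkeeping: keeping the exponent conventions consistent. This means the direction of the map (\(\omega'=|cz+d|^{\pm2}\omega\), both mentioned in the text) and the spinor little-group phase, so that the powers come out as \(\Delta\pm J\) and not \(-\Delta\pm J\). I would resolve this first by checking the case of a pure dilatation \(g=\mathrm{diag}(a,a^{-1})\) with \(a\in\mathbb R\), and of a pure rotation \(a=e^{i\phi/2}\), where the result must reproduce scaling weight \(\Delta\) and spin \(J\).
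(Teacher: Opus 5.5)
Your route is the paper's: the little-group factor, plus the Jacobian of the energy rescaling in the Mellin integral, applied leg by leg. Your Step 3 makes explicit what the paper compresses into ``applying the same transformation to every leg'': \(Z_n^{\rm IR}\) depends only on \(p_i\cdot p_j\) and \(\mu\), and the momentum-conserving delta function is invariant.

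The sign bookkeeping you defer, however, does not close the way you claim. The phase \(\bigl((cz+d)/(\bar c\bar z+\bar d)\bigr)^J\) you derive is the one acquired by the ket \(U|p,J\rangle\). An annihilation operator then transforms with the complex-conjugate phase, \(U a_J(\omega,z,\bar z)U^{-1}=\bigl((\bar c\bar z+\bar d)/(cz+d)\bigr)^J a_J(\omega',z',\bar z')\). In Step 2 you multiply the unconjugated phase into the active Jacobian \(|cz+d|^{-2\Delta}\). The product is \((cz+d)^{-\Delta+J}(\bar c\bar z+\bar d)^{-\Delta-J}\), which is the law of a spin \(-J\) primary. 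Reversing the direction of the map inverts both factors at once, giving \((cz+d)^{\Delta-J}(\bar c\bar z+\bar d)^{\Delta+J}\), so it cannot produce the stated factor. The normalization power cannot help either: it is \(\Delta\)-independent, and in fact \(k=0\), since your own measure computation shows that \(\omega^{-1}\delta(\omega-\omega')\delta^2(z-w)\) is Lorentz invariant.

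With the phase conjugated, the active computation gives \((cz+d)^{-\Delta-J}(\bar c\bar z+\bar d)^{-\Delta+J}\), whose inverse is the stated factor. The paper reaches that factor directly by fixing the passive convention: the mode is evaluated at \(\Lambda^{-1}p\), so \(\omega'=|cz+d|^{-2}\omega\) and the substitution contributes \(|cz+d|^{+2\Delta}\).

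A cleaner route, which avoids operator phases and suits the correlator claim, is to work with the hard function directly. Lorentz invariance in spinor variables and the little-group scaling \(\mathcal H(\ldots,t\lambda_i,t^{-1}\tilde\lambda_i,\ldots)=t^{-2J_i}\mathcal H\) give \(\mathcal H(\omega_i',z_i')=\prod_i\bigl((cz_i+d)/(\bar c\bar z_i+\bar d)\bigr)^{J_i}\mathcal H(\omega_i,z_i)\), with \(\omega_i'=|cz_i+d|^2\omega_i\). Making this substitution in the Mellin integral at the points \(z_i'\) produces \(\prod_i(cz_i+d)^{\Delta_i+J_i}(\bar c\bar z_i+\bar d)^{\Delta_i-J_i}\) at once.

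Your rotation check is the right idea, but it only exposes the discrepancy if the expected phase is fixed independently of the statement. For example, use \(U(R_\phi)|p,J\rangle=e^{-iJ\phi}|p,J\rangle\) for a particle at \(z=0\).
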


\begin{proof}
Equation \eqref{eq:q-lorentz-transform} implies that a Lorentz transformation rescales the energy variable of a plane-wave mode and rotates the helicity frame.  For helicity \(J\), the little-group factor is
\begin{equation}
  \left(\frac{cz+d}{\bar c\bar z+\bar d}\right)^J \, .
\end{equation}
We use the passive operator convention, so the mode on the right-hand side is evaluated at \(\Lambda^{-1}p\).  In this convention \(\omega'=|cz+d|^{-2}\omega\), and the change of variables \(\omega=|cz+d|^{2}\omega'\) contributes \(|cz+d|^{2\Delta}\) to the Mellin integral.  Combining it with the helicity factor gives \eqref{eq:primary-transform}; applying the same transformation to every leg proves covariance of the correlator.
\end{proof}

The one-particle celestial space is the direct integral
\begin{equation}
  \mathcal V_1
  =\bigoplus_{Q,J,\eta}\int_{\mathbb R}^{\oplus} d\lambda\,
  \mathbb C\,\mathcal O^{Q,\eta}_{1+i\lambda,J}(z,\bar z) \, ,
  \label{eq:one-particle-space}
\end{equation}
understood as a space of operator-valued distributions in \((\lambda,z,\bar z)\).  Endomorphism kernels on \(\mathcal V_1\) and its tensor powers are filtered below by their hard-leg support.

Denote the one-particle diagonal in the kernel variables by
\begin{equation}
  \Delta_1^{\rm cel}
  =\{(x;x')\in (\mathbb C_\Delta\times\mathbb{CP}^1)^{2}:\, \Delta=\Delta',\ z=z',\ \bar z=\bar z',\ J=J',\ Q=Q',\eta=\eta'\} \, .
  \label{eq:one-particle-diagonal}
\end{equation}
A local one-particle endomorphism has Schwartz kernel supported on this diagonal, up to Mellin shifts in the \(\Delta\)-coordinate.  Pairwise kernels will have singular support on the corresponding diagonals in two distinct celestial variables.

\subsection{Infrared-subtracted hard functions}
\label{subsec:ir-renormalized-hard-functions}

The plane-wave amplitude of a theory with massless photons is not the object on which a celestial current algebra should act.  It contains universal infrared singularities associated with virtual soft photons and, for strictly massless charged external particles, collinear singularities as well.  The algebraic object relevant for finite-energy dynamics is the infrared-subtracted hard function.  The same separation is used in the standard factorization of infrared singularities in gauge theory \cite{Weinberg:1965nx,Catani:1998bh}.  In the abelian theory considered here the factor is scalar in charge space, although it is still a nontrivial function of the pairwise invariants.

We write the dimensionally regulated all-outgoing amplitude as
\begin{equation}
  \mathcal A_n^{\rm bare}
  (\{\eta_i\omega_i q_i,J_i,Q_i\};\epsilon_{\rm IR},\mu)
  = Z_n^{\rm IR}(\{\eta_i\omega_i q_i,Q_i\};\epsilon_{\rm IR},\mu)
  \mathcal H_n^{\rm ren}(\{\eta_i\omega_i q_i,J_i,Q_i\};\mu) \, .
  \label{eq:ir-factorization}
\end{equation}
The renormalized hard function is defined by the finite part
\begin{equation}
  \mathcal H_n^{\rm ren}(\mu)
  =\operatorname{FP}_{\epsilon_{\rm IR}\to0}
  \Bigl[(Z_n^{\rm IR})^{-1}\mathcal A_n^{\rm bare}\Bigr] \, .
  \label{eq:hard-function-finite-part}
\end{equation}
This convention fixes the separation between the universal infrared factor and the finite hard data.  The form of such a factorization is standard in the analysis of infrared singularities and soft anomalous dimensions \cite{Catani:1998bh,Becher:2009cu,Gardi:2009qi}.  A change of subtraction scheme acts by a finite, Lorentz-covariant multiplicative factor on \(\mathcal H_n^{\rm ren}\).  Later logarithmic currents will therefore have a controlled scheme dependence, rather than an ambiguity in their definition.

The celestial hard distribution is obtained by Mellin-transforming \(\mathcal H_n^{\rm ren}\).  When the chosen amplitude convention includes the momentum-conservation delta function, it is included in this transform,
\begin{equation}
  \widetilde{\mathcal H}_n
  \bigl(\{\Delta_i,J_i,Q_i,\eta_i,z_i,\bar z_i\};\mu\bigr)
  =\int_0^\infty \prod_{i=1}^n
  \bigl(d\omega_i\,\omega_i^{\Delta_i-1}\bigr)
  \mathcal H_n^{\rm ren}
  (\{\eta_i\omega_i q_i,J_i,Q_i\};\mu) \, .
  \label{eq:celestial-hard-transform}
\end{equation}
The integral is interpreted distributionally: one first pairs the integrand with a test function in the variables \((\lambda_i,z_i,\bar z_i)\) and then performs the energy integrals.  Momentum conservation fixes four combinations of the \(\omega_i\), conformal covariance fixes the transformation under \(SL(2,\mathbb C)\), and loop logarithms introduce meromorphic dependence on the \(\Delta_i\).  All three features are retained by \eqref{eq:celestial-hard-transform}.

The scale dependence of the hard function is finite.  In a mass-independent subtraction scheme it takes the form
\begin{equation}
  \mu\frac{d}{d\mu}\mathcal H_n^{\rm ren}(\mu)
  =\Gamma_n(\{p_i,Q_i\};\mu)\mathcal H_n^{\rm ren}(\mu) \, ,
  \label{eq:hard-rg}
\end{equation}
where \(\Gamma_n\) is the finite infrared anomalous dimension obtained from \(Z_n^{\rm IR}\).  In an abelian theory its nontrivial dependence is pairwise, through the invariants
\begin{equation}
  s_{ij}= -2 p_i\cdot p_j - i0
  = 4\eta_i\eta_j\omega_i\omega_j |z_i-z_j|^2 - i0 \, .
  \label{eq:pairwise-invariant}
\end{equation}
The pairwise anomalous dimension is the momentum-space precursor of the ordered-pair kernel isolated by the logarithmic soft theorem.

The hard correlator associated to \eqref{eq:celestial-hard-transform} is denoted by
\begin{equation}
  \left\langle
  \prod_{i=1}^n \mathcal O^{Q_i,\eta_i}_{\Delta_i,J_i}(z_i,\bar z_i)
  \right\rangle_{\mathcal H,\mu}
  =\widetilde{\mathcal H}_n
  \bigl(\{\Delta_i,J_i,Q_i,\eta_i,z_i,\bar z_i\};\mu\bigr) \, .
  \label{eq:hard-correlator-notation}
\end{equation}

A running hard amplitude is the photon-exchange block for two distinguishable charged scalars.  Let legs $(1,4)$ carry species charge $Q_A$ and legs $(2,3)$ carry $Q_B$, with $Q_1=-Q_4=Q_A$ and $Q_2=-Q_3=Q_B$.  After removing the overall factor $e^2$ and the phase convention of the Feynman amplitude, the all-outgoing hard block is
\begin{equation}
  \widehat{\mathcal H}_{4}^{\rm ex}
  =\frac{J_A\cdot J_B}{(p_1+p_4)^2+i0},
  \qquad
  J_A^\mu=Q_A(p_1-p_4)^\mu,
  \qquad
  J_B^\mu=Q_B(p_2-p_3)^\mu .
  \label{eq:running-exchange-hard-block}
\end{equation}
This is the scalar-QED photon-exchange graph in the all-outgoing convention \cite{Bern:2021scalarQED}.  Its celestial transform, including the momentum-conservation distribution, is denoted by $C_4^{\rm ex}$.

\subsection{The Mellin-difference hard-current module}
\label{subsec:hard-current-module}

Finite-energy hard currents act by changing Mellin weights and by multiplying by profiles of the celestial coordinates and discrete quantum numbers.  Let \(\mathscr P(X)\) be the unital profile algebra generated, in each affine patch of \(X=\mathbb{CP}^1\), by smooth compactly supported functions and by meromorphic current profiles whose poles are disjoint from the hard insertions.  Its restriction to an affine patch is denoted by \(\mathscr P(U)\).  On the ordered configuration space of distinct hard points, profiles supported in disjoint patches act as independent leg multipliers; successive diagonal hard-current actions therefore give the completed tensor products \(\mathscr P(U_a)\widehat\otimes\mathscr P(U_b)\) used in the generation theorem.  Symmetrization of identical external states is imposed after the ordered-pair calculation.  The corresponding current labels form a difference-differential algebra in the conformal dimension.  Let \(T_a\) denote the shift operator
\begin{equation}
  T_a \mathcal O^{Q,\eta}_{\Delta,J}(z,\bar z)
  =\mathcal O^{Q,\eta}_{\Delta+a,J}(z,\bar z)\, ,
  \qquad
  T_a=e^{a\partial_\Delta}\, .
  \label{eq:mellin-shift}
\end{equation}
For a coefficient \(f(\Delta,z,\bar z,J,Q,\eta)\) that is holomorphic in the chosen Mellin strip and belongs to \(\mathscr P(X)\) in the celestial variables, the basic relation is
\begin{equation}
  T_a f(\Delta,z,\bar z,J,Q,\eta)
  = f(\Delta+a,z,\bar z,J,Q,\eta)T_a \, .
  \label{eq:shift-commutation}
\end{equation}
We define \(\mathscr D_{\rm Mell}\) to be the algebra of finite sums of operators
\begin{equation}
  \Phi
  =\sum_{\alpha} f_\alpha(\Delta,z,\bar z,J,Q,\eta)\,
  \partial_\Delta^{k_\alpha}T_{a_\alpha} \, .
  \label{eq:mellin-difference-operator}
\end{equation}
The product induced by composition will be denoted by \(\star\).  In the case without derivatives it is explicitly
\begin{equation}
  \bigl(f(\Delta)T_a\bigr)\star\bigl(g(\Delta)T_b\bigr)
  =f(\Delta)g(\Delta+a)T_{a+b} \, .
  \label{eq:star-product}
\end{equation}
The derivative terms obey the same rule together with the ordinary Leibniz rule.  This algebra is noncommutative because multiplication by \(\Delta\)-dependent functions does not commute with Mellin shifts.

A hard current labelled by \(\Phi\in\mathscr D_{\rm Mell}\) is denoted by \(H[\Phi]\).  Its action on a one-particle celestial operator is
\begin{equation}
  H[\Phi]\cdot \mathcal O^{Q,\eta}_{\Delta,J}(z,\bar z)
  =\Phi(\Delta,z,\bar z,J,Q,\eta)
  \mathcal O^{Q,\eta}_{\Delta,J}(z,\bar z) \, .
  \label{eq:hard-current-action-on-operator}
\end{equation}
For example, the diagonal electric hard current with profile \(\varphi\) corresponds to
\begin{equation}
  \Phi_{\rm el}=\eta Q\,\varphi(\Delta,z,\bar z,\partial_\Delta) \, .
  \label{eq:electric-hard-current-label}
\end{equation}
A general hard current may also contain helicity projectors and Mellin shifts.

On hard correlators the current insertion is the sum over external legs,
\begin{equation}
\begin{aligned}
  &\left\langle H[\Phi]
  \prod_{i=1}^n \mathcal O^{Q_i,\eta_i}_{\Delta_i,J_i}(z_i,\bar z_i)
  \right\rangle_{\mathcal H,\mu}
  \\
  &\hspace{1.0cm}
  =\sum_{i=1}^n
  \Phi_i
  \left\langle
  \prod_{i=1}^n \mathcal O^{Q_i,\eta_i}_{\Delta_i,J_i}(z_i,\bar z_i)
  \right\rangle_{\mathcal H,\mu} \, ,
\end{aligned}
\label{eq:hard-current-on-correlator}
\end{equation}
where \(\Phi_i\) acts on the variables of the \(i\)-th insertion.  The formula is a Ward identity in the hard theory.  It contains the ordinary global charge Ward identity as the special case \(\Phi_i=e_i\), for which charge conservation gives \(\sum_i e_i=0\) on nonvanishing amplitudes.

\begin{lemma}[Hard-current representation]
\label{lem:hard-current-representation}
The assignment \(\Phi\mapsto H[\Phi]\) gives a representation of the Lie algebra
\begin{equation}
  \mathfrak h_{\rm hard}
  =\bigl(\mathscr D_{\rm Mell},[\,,\,]_{\star}\bigr)\, ,
  \qquad
  [\Phi,\Psi]_{\star}=\Phi\star\Psi-\Psi\star\Phi \, ,
  \label{eq:hard-current-lie-algebra}
\end{equation}
up to the Ward-null ideal of operators that annihilate all hard correlators.  In particular,
\begin{equation}
  [H[\Phi],H[\Psi]]
  = H[[\Phi,\Psi]_{\star}]
  \label{eq:hard-current-bracket}
\end{equation}
as operators on \(\mathcal V_1\) and on the distributional hard correlators \eqref{eq:hard-correlator-notation}.
\end{lemma}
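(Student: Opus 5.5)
The plan is to reduce the statement to two facts: $\mathscr D_{\rm Mell}$ is an associative algebra under $\star$, and $H$ acts on each one-particle fibre by the multiplication rule \eqref{eq:hard-current-action-on-operator}. The representation property on correlators then follows by summing over legs in \eqref{eq:hard-current-on-correlator}.

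\textbf{Step 1: associativity.} First I will check that $\star$ is associative, so that $[\,,\,]_\star$ is a Lie bracket. I will realize $\mathscr D_{\rm Mell}$ as a subalgebra of the linear operators on germs of functions $F(\Delta,z,\bar z,J,Q,\eta)$ that are holomorphic in the chosen Mellin strip. Here $f$ acts by multiplication, $\partial_\Delta$ by differentiation, and $T_a$ by translation. Composition of such operators is associative, and \eqref{eq:shift-commutation}, \eqref{eq:star-product} and the Leibniz rule are exactly the composition rules. Normal ordering in the form \eqref{eq:mellin-difference-operator} therefore reproduces the operator product, and finite sums are closed because $\partial_\Delta f$ and $f(\Delta+a)$ stay in the same class. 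Antisymmetry and the Jacobi identity of the commutator follow automatically.

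\textbf{Step 2: one-particle statement.} The rule \eqref{eq:hard-current-action-on-operator} is precisely this operator realization applied to the $\Delta$-dependence of $\mathcal O_{\Delta,J}(z,\bar z)$. Hence
\[
H[\Phi]H[\Psi]\mathcal O=\Phi\star\Psi\,\mathcal O ,
\]
and \eqref{eq:hard-current-bracket} holds on $\mathcal V_1$.

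One point needs attention. The profiles $f(z,\bar z)$ commute with $T_a$ and $\partial_\Delta$, because the shifts act only on $\Delta$. They do not commute with operations in the other variables, but none occur in $\mathscr D_{\rm Mell}$.

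\textbf{Step 3: correlators.} On an $n$-point hard correlator, $H[\Phi]$ acts as the sum $\sum_i\Phi_i$, and operators on different legs act on disjoint variables, so $[\Phi_i,\Psi_j]=0$ for $i\ne j$. Therefore
\[
[\textstyle\sum_i\Phi_i,\sum_j\Psi_j]=\sum_i[\Phi,\Psi]_{\star,i},
\]
which is \eqref{eq:hard-current-bracket} on correlators. This is the standard derivation-on-tensor-products argument.

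\textbf{Step 4: the distributional and analytic issue.} I expect this to be the main obstacle. The correlators $\widetilde{\mathcal H}_n$ are distributions in $(\lambda_i,z_i,\bar z_i)$, and the shifts $T_a$ move $\Delta$ off the principal series. The plan is to define the action by duality on test functions: the transpose of $T_a$ is $T_{-a}$, that of $f$ is $f$, and that of $\partial_\Delta$ is $-\partial_\Delta$. The transpose of a composition is the reversed composition, so the commutator relation transfers to the distributional pairing.

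Two further points need care:

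\begin{itemize}
\item \emph{Meromorphic dependence on $\Delta_i$.} Loop logarithms make $\widetilde{\mathcal H}_n$ meromorphic in the $\Delta_i$. The shifted expressions must be understood through this meromorphic continuation in the strip where the coefficients are holomorphic. Both sides of \eqref{eq:hard-current-bracket} are built by the same finite number of shifts, so the identity holds wherever both are defined and then extends by analytic continuation.
\item \emph{Global charge.} Any discrepancy is confined to operators annihilating all hard correlators. The main example is $\Phi_i=e_i$, which vanishes by charge conservation. This is why the statement is made modulo the Ward-null ideal.
\end{itemize}

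Finally I will verify that this ideal is two-sided for $[\,,\,]_\star$. If $N$ annihilates every correlator, then the action of $[\Phi,N]_\star$ equals $\Phi N-N\Phi$ acting on correlators. The term $\Phi N$ vanishes directly. The term $N\Phi$ vanishes because $\Phi$ applied to a correlator is again a linear combination of shifted hard correlators. That closure of the correlator space under $\Phi$ is the step I would check first.
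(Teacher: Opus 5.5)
Your Steps 1--3 coincide with the paper's proof. On $\mathcal V_1$ the bracket identity is associativity of composition in $\mathscr D_{\rm Mell}$. On $n$-point correlators it follows because $H[\Phi]$ acts as $\sum_i\Phi_i$ and operators on distinct legs commute. The transposition in Step 4 is fine, provided the test functions are the analytic wave packets of section~\ref{subsec:two-particle-primaries}, so that replacing $T_a$ by $T_{-a}$ is a legitimate contour shift.

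\textbf{Where it fails.} The step that fails is your last one. The claim that $H[\Phi]$ sends a hard correlator to a linear combination of shifted hard correlators is false once $\Phi$ has a $\Delta$-dependent coefficient. Multiplication by $\Delta_i$ is the Mellin image of $-\omega_i\partial_{\omega_i}$, which differentiates $\delta^{(4)}(P)$.

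\textbf{Counterexample.} Take $N=\eta\,T_1$. It is Ward-null by momentum conservation: since $q^0+q^3=2$, one has $\sum_i\eta_i\omega_i=\frac12(P^0+P^3)$. Now take $\Phi=\Delta\,T_{-1}$. The rule \eqref{eq:star-product} gives $[\Phi,N]_\star=-\eta$. Equivalently, $H[N]H[\Phi]C_n=(\sum_i\eta_i)C_n$. This is nonzero on any nonvanishing correlator with $\sum_i\eta_i=n_{\rm out}-n_{\rm in}\neq0$.

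So the set of labels annihilating all hard correlators is not a Lie ideal of $\mathfrak h_{\rm hard}$ once momentum-conservation labels are included, and the paper explicitly includes them. No argument can prove the two-sidedness you set out to check.

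\textbf{The repair.} Separate the two issues you merged in the ``global charge'' bullet.
\begin{itemize}
\item There is no discrepancy to absorb. The identity $[H[\Phi],H[\Psi]]=H[[\Phi,\Psi]_\star]$ holds exactly as operators on the ambient space of $n$-leg (analytically continued) distributions, where the action is faithful.
\item A quotient is legitimate only by null labels whose annihilated space is stable under $\mathfrak h_{\rm hard}$. An example is constant multiples of the global charge label $\eta Q$, which commute with every charge-preserving label.
\end{itemize}
The paper's one-line assertion that quotienting by the Ward-null ideal gives a faithful action has the same defect. Your attempt to justify it is what exposes it.
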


\begin{proof}
On a one-particle operator the statement is the associativity of composition in \(\mathscr D_{\rm Mell}\):
\begin{equation}
  H[\Phi]H[\Psi]\cdot\mathcal O
  = (\Phi\star\Psi)\mathcal O\, ,
  \qquad
  H[\Psi]H[\Phi]\cdot\mathcal O
  = (\Psi\star\Phi)\mathcal O\, .
\end{equation}
Taking the difference gives \eqref{eq:hard-current-bracket}.  For an \(n\)-point hard correlator, \(H[\Phi]\) acts by the sum of the one-particle actions \(\sum_i\Phi_i\).  Operators acting on distinct insertions commute.  The commutator therefore reduces to the sum of the one-particle commutators,
\begin{equation}
  \left[\sum_i\Phi_i,\sum_j\Psi_j\right]
  =\sum_i[\Phi_i,\Psi_i]_{\star}\, .
\end{equation}
It gives the action of \(H[[\Phi,\Psi]_{\star}]\).  If a label produces zero on all hard correlators because of charge conservation, momentum conservation, or the equations of motion, it lies in the Ward-null ideal.  Quotienting by this ideal gives a faithful action.
\end{proof}

Only this Mellin-difference module action is required below.  It does not assume a microscopic construction of all hard currents.  Ball's hard-current insertions provide the CFT consistency language, while the hard-current algebra of Liu and Ma gives a broader framework in which soft current algebras arise from finite-energy hard currents \cite{Ball:2024currents,Liu:2026hard}.  Only the Mellin-difference part of that framework enters here.

\paragraph{Physical loop degree and support degree.}
The normalized logarithmic Ward kernel is used throughout without its overall perturbative coefficient.  Physical loop degree, denoted by \(L\), and support depth are independent gradings:
\begin{equation}
\label{eq:loop-support-bigrading}
\begin{array}{c|c|c}
\text{object} & L & \text{support depth} \\
\hline
H[\Phi] & 0 & 1 \\
D_\alpha & 0 & 1 \\
\mathsf S^0_\mu & 1 & 2\ \text{on hard correlators} \\
\mathbb M_X[\Phi] & 1 & 2 \\
\mathsf S_X^0\mathsf S_Y^0\ \text{bilinear kernel} & 2 & \leq 2\ \bmod F_3
\end{array}
\end{equation}
The dipole current \(D_\alpha\) is normalized by its Ward identity and carries no separate perturbative coefficient.  The logarithmic insertion \(\mathsf S^0_\mu\) is the coefficient of the one-loop soft theorem.  Two logarithmic insertions therefore belong to physical loop degree two, even though their universal bilinear contribution is determined by composing one-loop Ward kernels.  This bilinear term is not the independent two-loop hard amplitude; it is the contribution fixed by the normalized one-loop Ward data.  In the deformation-theoretic discussion a formal parameter \(t\) records the first extension by \(F_2/F_1\),
\begin{equation}
\label{eq:formal-support-parameter}
  m_t=m_0+t\,m_1+O(t^2)\, .
\end{equation}
The parameter \(t\) is unrelated to \(L\).  A term may therefore have physical degree \(L=2\) and formal degree one: this occurs when two normalized one-loop insertions compose on a single connected hard pair and still define one class in \(F_2/F_1\).  Jacobi and Hochschild compatibility refer to this formal support extension and constrain the bilinear part fixed by the one-loop Ward kernels.

\subsection{Logarithmic Mellin distributions}
\label{subsec:logarithmic-mellin-distributions}

The celestial transform converts powers of \(\log \omega_s\) into higher-order poles at the conformally soft value of \(\Delta_s\).  Tree-level conformally soft currents are organized by residues at special conformal weights \cite{Pate:2019lpp,Pasterski:2021rjz}; one-loop soft-current OPEs require logarithmic partners \cite{Bhardwaj:2024qrf}, and the all-loop soft-photon theorem identifies the one-loop \(\log\omega_s\) coefficient with antiholomorphic dipole currents \cite{Banerjee:2026soft}.  The Mellin distribution formulas below depend only on this logarithmic energy behavior.

The formulas are first written at the leading conformally soft point.  More generally, a term \(\omega^r\log(\omega/\mu)\) gives the same Laurent expansion after replacing \(s\) by \(\Delta+r\).  Thus the leading Weinberg logarithmic distribution is centered at \(\Delta=1\), while the one-loop \(O(\log\omega_s)\) dipole theorem used in section~\ref{sec:dipole-currents} is centered at \(\Delta_s=0\).

Let \(f\in C_c^\infty([0,\infty))\) be equal to a smooth test function near \(\omega=0\), and set \(s=\Delta-1\).  Its Mellin transform near \(s=0\) has the meromorphic expansion
\begin{equation}
  M_f(s)=\int_0^\infty d\omega\,\omega^{s-1}f(\omega)
  =\frac{f(0)}{s}+M_f^{\rm reg}(0)+O(s) \, .
  \label{eq:mellin-simple-pole}
\end{equation}
The logarithmic transform is
\begin{equation}
  L_{\mu,f}(s)
  =\int_0^\infty d\omega\,\omega^{s-1}\log\frac{\omega}{\mu}\,f(\omega)
  =\partial_s M_f(s)-\log\mu\,M_f(s) \, .
  \label{eq:mellin-log-transform}
\end{equation}
It follows that
\begin{equation}
  L_{\mu,f}(s)
  =-\frac{f(0)}{s^2}-\frac{f(0)\log\mu}{s}+O(1) \, .
  \label{eq:log-mellin-poles}
\end{equation}
The sign of the double pole is a convention inherited from differentiating \(s^{-1}\).  One may reverse it by defining the logarithmic residue with an additional minus sign.  The invariant content is the existence of a second-order pole and the scale-dependence of the simple-pole part.

It is often more useful to express \eqref{eq:log-mellin-poles} in distributional language.  The renormalized distributions
\begin{equation}
  \left[\frac{1}{\omega}\right]_+\, ,
  \qquad
  \left[\frac{\log(\omega/\mu)}{\omega}\right]_+
  \label{eq:plus-distributions}
\end{equation}
are defined by subtracting the value of the test function at the origin, in the standard finite-part sense of distribution theory \cite{Gelfand:1964gf}.  Their precise definition depends on a choice of smooth cutoff, but two choices differ by a multiple of \(\delta(\omega)\), hence by a local counterterm in the conformally soft operator.  The scale dependence is unambiguous,
\begin{equation}
  \left[\frac{\log(\omega/\mu')}{\omega}\right]_+
  =\left[\frac{\log(\omega/\mu)}{\omega}\right]_+
  -\log\frac{\mu'}{\mu}\left[\frac{1}{\omega}\right]_+ \, .
  \label{eq:scale-dependence-plus-distribution}
\end{equation}
This identity will later become the mixing of a dipole current with the ordinary conformally soft current under a change of renormalization scale.

We use the following notation for residues at the conformally soft point.  If \(F(\Delta)\) has an expansion
\begin{equation}
  F(\Delta)
  =\frac{A_{-2}}{(\Delta-1)^2}
  +\frac{A_{-1}}{\Delta-1}
  +A_0+O(\Delta-1) \, ,
  \label{eq:laurent-expansion}
\end{equation}
then
\begin{equation}
  \operatorname{DRes}_{\Delta=1}F=A_{-2}\, ,
  \qquad
  \operatorname{Res}_{\Delta=1}F=A_{-1}\, ,
  \qquad
  \operatorname{FP}_{\Delta=1}F=A_0 \, .
  \label{eq:residue-notation}
\end{equation}
The ordinary leading conformally soft photon current is associated with \(\operatorname{Res}_{\Delta=1}\).  A logarithmic factor of homogeneity \(r\) contributes through \(\operatorname{DRes}_{\Delta=-r}\) and through a scheme-dependent simple-pole term.  The one-loop dipole current corresponds to \(r=0\), whereas the displayed leading-soft distribution corresponds to \(r=-1\).  This produces the logarithmic extension
\begin{equation}
  0\longrightarrow \mathcal S
  \longrightarrow \mathcal S_{\log}
  \longrightarrow \mathcal S
  \longrightarrow 0 \, ,
  \label{eq:log-current-extension}
\end{equation}
where \(\mathcal S\) is the ordinary conformally soft current module.  Multiplication by \(\log\omega\) in momentum space is represented by \(\partial_\Delta\) in Mellin space,
\begin{equation}
  \partial_\Delta\mathcal O^{Q,\eta}_{\Delta,J}(z,\bar z)
  =\int_0^\infty d\omega\,\omega^{\Delta-1}\log\omega\,
  a^{Q,\eta}_{J}(\omega,z,\bar z) \, .
  \label{eq:delta-derivative-log}
\end{equation}
The logarithmic module is already present in the hard Mellin-difference algebra.  The one-loop input adds pairwise kernels that are not endomorphisms of \(\mathcal V_1\).

Let \(S\) denote the ordinary conformally soft photon current obtained from \([1/\omega]_+\), and let \(D^{(\mu)}\) denote the logarithmic current obtained from \([\log(\omega/\mu)/\omega]_+\).  Their scale mixing follows from \eqref{eq:scale-dependence-plus-distribution}:
\begin{equation}
  D^{(\mu')} = D^{(\mu)}-
  \log\frac{\mu'}{\mu}\,S \, .
  \label{eq:dipole-scale-mixing}
\end{equation}
The renormalized algebra is therefore an algebra over the scale \(\mu\), with changes of \(\mu\) implemented by triangular transformations of the logarithmic current module.  It is the celestial analogue of the finite scheme dependence of the hard function in \eqref{eq:hard-function-finite-part}.

The single-particle hard algebra is built from local Mellin-difference endomorphisms of \(\mathcal V_1\).  Logarithmic Mellin singularities extend this algebra by Jordan partners of conformally soft currents.  Multiparticle data enter only through the pairwise support of the one-loop soft factor, which is represented by two-particle celestial primaries in the flat-hologram spectrum \cite{Kulp:2025flat}.

\section{The Banerjee--Mandal--Sahoo dipole current and the one-loop soft theorem}
\label{sec:dipole-currents}

This section reviews the dipole-current construction of Banerjee, Mandal and Sahoo \cite{Banerjee:2026soft} in the all-outgoing and Mellin conventions used in the rest of the paper.  We keep their high-energy hierarchy, dipole-current doublet, Ward identity, soft-hard OPE, transformation law, spinor notation, local normal-ordered logarithmic insertion and higher-spin-current construction in order to fix the normalization of the ordered-pair kernel that acts on hard currents in sections~\ref{sec:dipole-hard-ope}--\ref{sec:checks-examples-outlook}.

The Weinberg soft factor is a sum of one-leg operators on celestial insertions.  The one-loop logarithmic coefficient instead depends on an ordered pair: the emitting particle carries the holomorphic pole and a second charged particle enters through an antiholomorphic rational kernel.  This dependence has a Cauchy residue on a hard-pair diagonal and hence lives in the next support layer.

Before taking a hard-current OPE, consider the renormalized hard correlators of section~\ref{sec:renormalized-hard-data}.  The long-range photon has already been separated into the universal infrared factor, and the soft theorem is an identity for the finite hard functions.  Following the high-energy hierarchy used by Banerjee, Mandal and Sahoo, eq.~(2.3) of \cite{Banerjee:2026soft}, we take
\begin{equation}
\label{eq:high-energy-hierarchy}
  \omega_s\ll m\ll E \, ,
\end{equation}
where \(\omega_s\) is the soft photon energy, \(m\) is the common mass used to regulate collinear regions of the charged external particles, and \(E\) is a typical hard energy.  At leading order in \(m/E\) the hard momenta are null and may be parametrized as in eq.~\eqref{eq:p-parametrization}.  The mass remains only as the physical regulator that makes the logarithmic theorem well defined before the high-energy limit is taken.  In this regime the one-loop logarithmic soft factor of \cite{Banerjee:2026soft} is the soft-theorem input for the Banerjee--Mandal--Sahoo dipole-current Ward identity.  We keep their Ward-identity normalization because it is the normalization relevant for the later hard-algebra calculation.

\subsection{The logarithmic soft factor in celestial variables}
\label{subsec:log-soft-factor-celestial}

Let the outgoing positive-helicity soft photon have momentum
\begin{equation}
\label{eq:soft-photon-momentum}
  k^\mu=\omega_s q^\mu(w,\bar w) \, ,
  \qquad \omega_s>0 \, ,
\end{equation}
and choose the polarization vector
\begin{equation}
\label{eq:positive-helicity-polarization}
  \varepsilon_+^\mu(w,\bar w)
  =\frac{1}{\sqrt2}\bigl(\bar w,1,-i,-\bar w\bigr) \, .
\end{equation}
With the normalization used below, the precise little-group convention is secondary; the choice in eq.~\eqref{eq:positive-helicity-polarization} fixes the relative factors between the holomorphic soft pole and the antiholomorphic dipole kernel.  We write
\begin{equation}
\label{eq:short-hand-correlator-section3}
  C_n(1,\ldots,n)
  =\left\langle
  \prod_{i=1}^n
  \mathcal O_{\Delta_i,J_i}^{Q_i,\eta_i}(z_i,\bar z_i)
  \right\rangle_{\mathcal H,\mu}
\end{equation}
for the renormalized hard celestial correlator.  The signed all-outgoing charge is
\begin{equation}
\label{eq:signed-charge-section3}
  e_i=\eta_i Q_i \, .
\end{equation}
We use the shift operator on the \(a\)-th Mellin weight,
\begin{equation}
\label{eq:T-minus-one-section3}
  T_a^{-1}C_n(1,\ldots,n)
  =C_n(1,\ldots,(\Delta_a-1,J_a,Q_a,\eta_a,z_a,\bar z_a),\ldots,n) \, .
\end{equation}
The shift by \(-1\) is the celestial image of the inverse hard-energy factor in the logarithmic kernel.

The one-loop logarithmic soft theorem for the finite hard functions may be stated as follows.  Let \(\Gamma_\rho^+(w,\bar w)\) be the Mellin transform of the positive-helicity photon annihilation operator with soft Mellin parameter \(\rho\),
\begin{equation}
\label{eq:photon-soft-family-rho}
  \Gamma_\rho^+(w,\bar w)
  =\int_0^\infty d\omega_s\,\omega_s^{\rho-1}
  a_+(\omega_s,w,\bar w) \, .
\end{equation}
The logarithmic term \(\log(\mu/\omega_s)\) has a double pole at \(\rho=0\), by the same Mellin calculation as in section~\ref{subsec:logarithmic-mellin-distributions} with \(s\) replaced by \(\rho\).  We denote by \(\mathsf S^0_\mu(w,\bar w)\) the renormalized logarithmic soft insertion normalized by the ordered-pair Mellin correlator identity
\begin{equation}
\label{eq:S0-Ward-pairwise}
\begin{aligned}
  &\left\langle
  \mathsf S^0_\mu(w,\bar w)
  \prod_{i=1}^n
  \mathcal O_{\Delta_i,J_i}^{Q_i,\eta_i}(z_i,\bar z_i)
  \right\rangle_{\mathcal H,\mu}
  \\
  &\hspace{0.6cm}
  =\sum_{a=1}^n\sum_{\substack{b=1\\ b\ne a}}^n
  \frac{\eta_a Q_a^2 e_b}{w-z_a}
  \frac{\bar w-\bar z_b}{\bar z_a-\bar z_b}
  \;T_a^{-1}C_n(1,\ldots,n) \, .
\end{aligned}
\end{equation}
Equation~\eqref{eq:S0-Ward-pairwise} is the leading high-energy Mellin form of the ordered-pair part of the one-loop exact \(O(\log\omega_s)\) soft photon theorem.  Its massive origin and the extraction of the logarithmic coefficient go back to the logarithmic soft expansion of Sahoo and Sen \cite{Sahoo:2018lxl}; universality of the logarithmic factors was analyzed in \cite{Krishna:2023fxg}.  Banerjee, Mandal and Sahoo showed that this ordered-pair soft factor can be rewritten as their current-algebra Ward identity in terms of the dipole-current doublet \cite{Banerjee:2026soft}.  Thus eq.~\eqref{eq:S0-Ward-pairwise} is not itself the dipole Ward identity; it is the Mellin-space soft-theorem coefficient whose Banerjee--Mandal--Sahoo Ward-identity form is reviewed in the next subsections.  The hard-current calculation uses this normalization.

In eq.~\eqref{eq:S0-Ward-pairwise}, the holomorphic pole \((w-z_a)^{-1}\) is attached to the emitting leg, whereas the antiholomorphic coefficient depends on a second leg \(b\).  A single sum over \(a\) would define a Mellin-difference endomorphism of \(\mathcal V_1\); the ordered-pair sum has support depth two.

Write the ordered-pair kernel as
\begin{equation}
\label{eq:ordered-pair-kernel-section3}
  K_{w,\bar w}^{ab}
  =\frac{\eta_a Q_a^2 e_b}{w-z_a}
  \frac{\bar w-\bar z_b}{\bar z_a-\bar z_b}
  T_a^{-1} \, ,
  \qquad a\ne b \, .
\end{equation}
Then eq.~\eqref{eq:S0-Ward-pairwise} becomes
\begin{equation}
\label{eq:S0-pairwise-kernel-compact}
  \left\langle
  \mathsf S^0_\mu(w,\bar w)\prod_i\mathcal O_i
  \right\rangle_{\mathcal H,\mu}
  =\sum_{a\ne b}K_{w,\bar w}^{ab}C_n \, .
\end{equation}

\subsection{The antiholomorphic dipole current}
\label{subsec:antiholo-dipole-current}

The pairwise kernel in eq.~\eqref{eq:S0-Ward-pairwise} is the Ward kernel of the Banerjee--Mandal--Sahoo celestial-sphere current for antiholomorphic charge moments.  This operator is not an asymptotic photon field; its Ward identity is defined on renormalized hard correlators by the rational function appearing in the loop theorem.  In the spinor notation of Banerjee, Mandal and Sahoo, let
\begin{equation}
\label{eq:Zbar-spinor-definition}
  \bar Z^\alpha(\bar z)=
  \begin{pmatrix}\bar z\\ 1\end{pmatrix},
  \qquad
  \epsilon_{\alpha\beta}=
  \begin{pmatrix}0&1\\-1&0\end{pmatrix},
  \qquad
  \bar Z_\alpha(\bar z)=\epsilon_{\alpha\beta}\bar Z^\beta(\bar z)
  =\begin{pmatrix}1\\-\bar z\end{pmatrix} \, ,
\end{equation}
with \(\alpha=1,2\).  In the all-outgoing convention used here, the Banerjee--Mandal--Sahoo dipole Ward identity, eq.~(4.3) of \cite{Banerjee:2026soft}, reads
\begin{equation}
\label{eq:dipole-current-Ward-alpha}
  \left\langle
  D_\alpha^{(\mu)}(\bar z)
  \prod_{i=1}^n \mathcal O_i
  \right\rangle_{\mathcal H,\mu}
  =\sum_{i=1}^n e_i
  \frac{\bar Z_{i\alpha}}{\bar z-\bar z_i}
  C_n(1,\ldots,n) \, ,
\end{equation}
where \(\bar Z_{i\alpha}=\bar Z_\alpha(\bar z_i)\).  This is the spinor form of their Eq.~(6.3), equivalently their dipole-current Ward identity Eq.~(4.3), written with the charge and orientation conventions of this paper.  Under a \(\mu\)-rescaling, a finite-part representative may mix with an ordinary soft partner as in eq.~\eqref{eq:dipole-scale-mixing}.  The double-pole residue entering the Ward identity is unchanged, and the Ward kernels are invariant under this triangular change of representative.

For a spinor \(\bar W^\alpha=(\bar W,1)^t\) we set
\begin{equation}
\label{eq:D-W-contracted}
  D^{(\mu)}[\bar W](\bar z)=\bar W^\alpha D_\alpha^{(\mu)}(\bar z) \, .
\end{equation}
Then the contracted form of the Banerjee--Mandal--Sahoo Ward identity is
\begin{equation}
\label{eq:dipole-current-Ward-W}
  \left\langle
  D^{(\mu)}[\bar W](\bar z)
  \prod_{i=1}^n\mathcal O_i
  \right\rangle_{\mathcal H,\mu}
  =\sum_{i=1}^n e_i
  \frac{\bar W-\bar z_i}{\bar z-\bar z_i}
  C_n(1,\ldots,n) \, .
\end{equation}
The numerator is affine in \(\bar W\).  The coefficient of \(\bar W\) is the monopole charge distribution, while the constant term is the antiholomorphic position-weighted charge distribution.  At large \(\bar z\), charge conservation removes the monopole component of a physical correlator, leaving the first antiholomorphic moment as the dipole component.  

The soft-hard OPE follows directly from the Banerjee--Mandal--Sahoo Ward identity, eq.~(4.3) of \cite{Banerjee:2026soft}, and in our conventions reads
\begin{equation}
\label{eq:dipole-hard-primary-OPE-section3}
  D_\alpha^{(\mu)}(\bar z)
  \mathcal O_{\Delta,J}^{Q,\eta}(z_i,\bar z_i)
  \sim
  \frac{e_i\bar Z_{i\alpha}}{\bar z-\bar z_i}
  \mathcal O_{\Delta,J}^{Q,\eta}(z_i,\bar z_i) \, .
\end{equation}
The corresponding charge action is the Banerjee--Mandal--Sahoo charge action, eq.~(4.4) of \cite{Banerjee:2026soft}, written in all-outgoing variables:
\begin{equation}
\label{eq:dipole-charge-action}
  \left[\oint_{\bar z_i}\frac{d\bar z}{2\pi i}
  D^{(\mu)}[\bar W](\bar z),
  \mathcal O_{\Delta,J}^{Q_i,\eta_i}(z_i,\bar z_i)
  \right]
  = e_i(\bar W-\bar z_i)
  \mathcal O_{\Delta,J}^{Q_i,\eta_i}(z_i,\bar z_i) \, .
\end{equation}
The pole is purely antiholomorphic.  This chirality is what permits the logarithmic soft theorem to be written as a holomorphic soft pole multiplying an antiholomorphic dipole composite.

The transformation law and its covariance check are the ones derived in section~V of \cite{Banerjee:2026soft}.  Let
\begin{equation}
\label{eq:SL2R-transform}
  \bar z'=\frac{a\bar z+b}{c\bar z+d} \, ,
  \qquad
  g=\begin{pmatrix}a&b\\c&d\end{pmatrix}\in SL(2,\mathbb R)_R \, .
\end{equation}
The contracted current obeys
\begin{equation}
\label{eq:D-W-transform}
  D^{(\mu)}[\bar W](\bar z)
  \longmapsto
  \frac{c\bar W+d}{c\bar z+d}
  D^{(\mu)}[\bar W'](\bar z') \, ,
  \qquad
  \bar W'=\frac{a\bar W+b}{c\bar W+d} \, .
\end{equation}
Equivalently, in spinor notation
\begin{equation}
\label{eq:D-alpha-transform}
  D_\alpha^{(\mu)}(\bar z)
  \longmapsto
  \frac{1}{c\bar z+d}\,
  g^\beta{}_{\alpha}
  D_\beta^{(\mu)}(\bar z') \, .
\end{equation}
\(D_\alpha^{(\mu)}\) is an antiholomorphic conformal field of weight \(1/2\) tensored with the two-dimensional spin representation.  It is the \(j=1/2\) member of the current tower generated by the higher logarithmic soft photon theorems \cite{Banerjee:2026soft}.

\paragraph{Covariance.}
The Ward identity \eqref{eq:dipole-current-Ward-alpha} is covariant under \(SL(2,\mathbb R)_R\) precisely for the transformation law \eqref{eq:D-alpha-transform}.  The contracted transformation law \eqref{eq:D-W-transform} is the Banerjee--Mandal--Sahoo covariance statement, eq.~(5.1) of \cite{Banerjee:2026soft}.  In our conventions the elementary identity entering the covariance check is
\begin{equation}
\label{eq:kernel-transform-proof}
  \frac{\bar W'-\bar z_i'}{\bar z'-\bar z_i'}
  =\frac{c\bar z+d}{c\bar W+d}
  \frac{\bar W-\bar z_i}{\bar z-\bar z_i} \, .
\end{equation}
The prefactor in \eqref{eq:D-W-transform} therefore leaves the Ward kernel \eqref{eq:dipole-current-Ward-W} invariant.  Conversely, the rational functions \((\bar W-\bar z_i)/(\bar z-\bar z_i)\) separate points and charge assignments, so covariance determines the same current transformation law.

\subsection{The Banerjee--Mandal--Sahoo local dipole Ward identity}
\label{subsec:soft-theorem-local-dipole}

The local form of the Banerjee--Mandal--Sahoo dipole Ward identity, in the normalization used here, rewrites the spectator sum as a local normal-ordered dipole descendant at the emitting leg.  This is the construction of section IV of \cite{Banerjee:2026soft}.

For each hard insertion, the normal-ordered descendant of eq.~(4.6) of \cite{Banerjee:2026soft}, in the all-outgoing convention used here, is
\begin{equation}
\label{eq:normal-ordered-dipole-descendant}
\begin{aligned}
  &\mathcal N_a^{(\mu)}[\bar W]
  (z_a,\bar z_a)
  \\
  &\quad
  =:D^{(\mu)}[\bar W]T_a^{-1}
  \mathcal O_{\Delta_a,J_a}^{Q_a,\eta_a}:
  (z_a,\bar z_a)
  \\
  &\quad
  =\lim_{\bar\xi\to\bar z_a}
  \left[
  D^{(\mu)}[\bar W](\bar\xi)
  T_a^{-1}\mathcal O_{\Delta_a,J_a}^{Q_a,\eta_a}(z_a,\bar z_a)
  -e_a\frac{\bar W-\bar z_a}{\bar\xi-\bar z_a}
  T_a^{-1}\mathcal O_{\Delta_a,J_a}^{Q_a,\eta_a}(z_a,\bar z_a)
  \right] \, .
\end{aligned}
\end{equation}
The subtraction is local and is fixed by the OPE \eqref{eq:dipole-hard-primary-OPE-section3}.  No choice of finite counterterm is involved in this definition.  The composite has the same spin and charge labels as the original hard operator, while its Mellin weight is shifted from \(\Delta_a\) to \(\Delta_a-1\).

\paragraph{Banerjee--Mandal--Sahoo local Ward identity.}
With the normalization of eq.~\eqref{eq:S0-Ward-pairwise}, their local Ward identity, eq.~(4.5) of \cite{Banerjee:2026soft}, reads
\begin{equation}
\label{eq:S0-local-dipole-Ward}
\begin{aligned}
  &\left\langle
  \mathsf S^0_\mu(w,\bar w)
  \prod_{i=1}^n\mathcal O_i
  \right\rangle_{\mathcal H,\mu}
  \\
  &\hspace{0.6cm}
  =\sum_{a=1}^n
  \frac{\eta_a Q_a^2}{w-z_a}
  \left\langle
  \mathcal N_a^{(\mu)}[\bar w]
  (z_a,\bar z_a)
  \prod_{\substack{i=1\\ i\ne a}}^n\mathcal O_i
  \right\rangle_{\mathcal H,\mu} \, .
\end{aligned}
\end{equation}
Here \(\bar w\) in \(\mathcal N_a^{(\mu)}[\bar w]\) denotes the spinor \(\bar W^\alpha=(\bar w,1)^t\).  Inserting the definition \eqref{eq:normal-ordered-dipole-descendant} into the right hand side of eq.~\eqref{eq:S0-local-dipole-Ward} and using the Banerjee--Mandal--Sahoo dipole Ward identity \eqref{eq:dipole-current-Ward-W} gives, before the self-subtraction,
\begin{equation}
\label{eq:normal-order-proof-first}
  \sum_{b=1}^n e_b
  \frac{\bar w-\bar z_b}{\bar z_a-\bar z_b}
  T_a^{-1}C_n(1,\ldots,n) \, ,
\end{equation}
where the expression is understood as the limit \(\bar\xi\to\bar z_a\).  The term with \(b=a\) is the pole subtracted in eq.~\eqref{eq:normal-ordered-dipole-descendant}.  The remaining finite part is
\begin{equation}
\label{eq:normal-order-proof-second}
  \sum_{\substack{b=1\\ b\ne a}}^n e_b
  \frac{\bar w-\bar z_b}{\bar z_a-\bar z_b}
  T_a^{-1}C_n(1,\ldots,n) \, .
\end{equation}
Multiplying by \(\eta_aQ_a^2/(w-z_a)\) and summing over \(a\) reproduces eq.~\eqref{eq:S0-Ward-pairwise}.

The Banerjee--Mandal--Sahoo local identity gives the celestial OPE between the logarithmic soft photon insertion and a hard primary, corresponding to their eq.~(4.7),
\begin{equation}
\label{eq:S0-hard-OPE-section3}
  \mathsf S^0_\mu(w,\bar w)
  \mathcal O_{\Delta,J}^{Q_a,\eta_a}(z_a,\bar z_a)
  \sim
  \frac{\eta_a Q_a^2}{w-z_a}
  :D^{(\mu)}[\bar w]T_a^{-1}
  \mathcal O_{\Delta,J}^{Q_a,\eta_a}:
  (z_a,\bar z_a) \, .
\end{equation}
The coefficient is not a conventional one-particle OPE coefficient.  Through the Ward identity of \(D_\alpha^{(\mu)}\), the normal-ordered composite contains the sum over the remaining charged insertions and reproduces the ordered-pair dependence of the loop soft factor.

\subsection{Pairwise support and the first obstruction}
\label{subsec:pairwise-support-first-obstruction}

The Banerjee--Mandal--Sahoo local form \eqref{eq:S0-local-dipole-Ward} packages the ordered-pair sum into a normal-ordered dipole descendant, but it does not alter the support of the operation.  Let \(F_1\) denote the layer of kernels supported on one-particle current-leg diagonals, and let \(F_2\) also allow one correlated hard-hard pair.  The logarithmic pair kernel
\begin{equation}
\label{eq:B0-operator-section3}
  K_{w,\bar w}=\sum_{a\ne b}K_{w,\bar w}^{ab}
\end{equation}
belongs to \(F_2\).  On the ordered pair \((a,b)\), its Cauchy residue is
\begin{equation}
\label{eq:pairwise-residue-section3}
  \operatorname{Res}_{\bar z_a=\bar z_b}
  (w-z_a)K_{w,\bar w}^{ab}
  =\eta_aQ_a^2e_b(\bar w-\bar z_b)T_a^{-1},
\end{equation}
up to the conventional residue factor.  The residue map vanishes on \(F_1\).  Theorem~\ref{thm:nonabsorption-section4} turns \eqref{eq:pairwise-residue-section3} into a basis-independent criterion for a nonzero class in \(F_2/F_1\).

A four-scalar contact block already has nonzero second support.  Let \(C_4^{\rm cont}=g\mathcal I_4\), where \(\mathcal I_4\) is the Mellin transform of the momentum-conservation distribution.  The ordered pair \((1,2)\) gives
\begin{equation}
\label{eq:section3-running-residue}
  \operatorname{Res}_{\bar z_1=\bar z_2}
  (w-z_1)K_{w,\bar w}^{12}C_4^{\rm cont}
  =\eta_1Q_1^2e_2(\bar w-\bar z_2)g\,T_1^{-1}\mathcal I_4.
\end{equation}
A Mellin-dependent hard label contributes its backward finite difference, multiplying the same residue in the dipole-hard commutator of section~\ref{sec:dipole-hard-ope}.

\section{Dipole-hard OPE and the analytic two-particle module}
\label{sec:dipole-hard-ope}

The logarithmic soft kernel is local at the emitting leg but singular on an ordered hard pair.  Commuting it with a hard current produces an ordinary one-particle term together with a Cauchy residue on the pair diagonal.  The residue defines the next support layer, which is resolved by the meromorphically continued two-particle primary module.

\subsection{Analytic two-particle celestial primaries}
\label{subsec:two-particle-primaries}

For a one-particle species \(\chi=(Q,\eta,J)\), let \(\mathcal V_\chi(\Delta)\) denote the corresponding celestial representation with complex conformal dimension \(\Delta\).  On the unitary principal line, \(\Delta=1+i\lambda\), the completed tensor product of two such modules has the diagonal Lorentz decomposition
\begin{equation}
  \mathcal V_{\chi_a}(1+i\lambda_a)\widehat\otimes
  \mathcal V_{\chi_b}(1+i\lambda_b)
  \simeq
  \bigoplus_{\ell}
  \int_{\mathbb R}^{\oplus}d\nu\,
  \mathcal M_{ab}(\nu,\ell)\otimes\mathcal V_{1+i\nu,\ell} \, .
  \label{eq:plancherel-decomposition}
\end{equation}
For scalar external legs, \(\ell\in\mathbb Z\).  We allow \(\ell\in\frac12\mathbb Z\) in formulas that also cover spinning external legs and the dipole-current channels considered here.  The multiplicity space \(\mathcal M_{ab}(\nu,\ell)\) includes the discrete charge, orientation and helicity data of the ordered pair.

The soft kernel contains \(T_a^{-1}\).  Consequently, even when the original leg lies on \(\operatorname{Re}\Delta_a=1\), the shifted field has \(\operatorname{Re}\Delta_a=0\).  The decomposition required here is therefore not the unitary formula \eqref{eq:plancherel-decomposition} evaluated naively off its domain.  We use its meromorphic continuation in the external dimensions.  Celestial correlators and conformal partial waves admit such continuation as distributions in Mellin space \cite{Dobrev:1977qv,Atanasov:2021cje,Borji:2024distributional,Pacifico:2025cpw,Himwich:2025shadow}.  Concretely, hard correlators are paired with wave packets that are entire in a strip containing both \(\operatorname{Re}\Delta=1\) and \(\operatorname{Re}\Delta=0\), rapidly decreasing on every vertical line in that strip.  The shift \(T_a^{-1}\) is then a continuous isomorphism between the two boundary values.

Let
\begin{equation}
  \mathbb P_{ab}^{\rm an}(\nu,\ell;\Delta_a,\Delta_b)
  \label{eq:analytic-two-particle-projector}
\end{equation}
be the meromorphic continuation of the principal-series projector.  Moving \(\Delta_a\) from \(1+i\lambda_a\) to \(i\lambda_a\) may cross poles of the intertwining kernel.  Cauchy's theorem gives a continuous principal-series integral together with the residues of the crossed poles,
\begin{equation}
\begin{aligned}
  \mathbf 1_{ab}^{\rm an}
  ={}&
  \sum_{\ell}\int_{\mathbb R}d\nu\,
  \rho_\ell(\nu)\,
  \mathbb P_{ab}^{\rm an}(\nu,\ell;\Delta_a-1,\Delta_b)
  \\
  &+\sum_{r\in\mathfrak R_{ab}}
  \mathbb P_{ab}^{\rm res}(r;\Delta_a-1,\Delta_b) \, .
  \label{eq:analytic-two-particle-completeness}
\end{aligned}
\end{equation}
The finite set \(\mathfrak R_{ab}\) depends on the external weights and is empty whenever the continuation crosses no pole.  On a pole-free connected component of the analytic test space, the second line is absent; the four-point check in section~\ref{sec:checks-examples-outlook} is carried out on such a component.  If the contour is enlarged across an intertwiner pole, its residue projector must be restored.  Equation \eqref{eq:analytic-two-particle-completeness} is understood after pairing with the analytic Mellin wave packets just described.  On the unitary line it reduces to the ordinary Plancherel identity
\begin{equation}
  \sum_{\ell}\int_{\mathbb R}d\nu\,
  \rho_\ell(\nu)\,
  \mathbb P_{ab}(\nu,\ell)
  =\mathbf 1_{\mathcal V_{ab}} \, .
  \label{eq:two-particle-completeness}
\end{equation}
The shadow-related channels \((\nu,\ell)\sim(-\nu,-\ell)\) are identified by the meromorphically continued Knapp--Stein intertwiner.  On each compact pole-free spectral strip, its graph relations and finite-dimensional pole kernels form a closed subspace of the analytic test space.  The quotient is complete and Hausdorff, while the cokernel data are the explicit residue projectors in \eqref{eq:analytic-two-particle-completeness}.  The stripwise Plancherel maps are compatible under restriction, and their inverse limit defines the global analytic module.  Proposition~\ref{prop:appC-stripwise-plancherel} and proposition~\ref{prop:appC-F2-M2-identification} give the topological statement used below.

A two-particle primary is the image of the product of two fields under one of the projectors in \eqref{eq:analytic-two-particle-completeness}.  For a continuous channel we write
\begin{equation}
\begin{aligned}
  [\mathcal O_a\mathcal O_b]_{\nu,\ell}(y,\bar y)
  ={}&\int d^2x_1\,d^2x_2\,
  \mathcal K_{\nu,\ell}^{ab}
  (y,\bar y;x_1,\bar x_1,x_2,\bar x_2)
  \\
  &\hspace{2.5cm}\times
  \mathcal O_a(x_1,\bar x_1)\mathcal O_b(x_2,\bar x_2) \, ,
  \label{eq:two-particle-primary-kernel}
\end{aligned}
\end{equation}
with weights
\begin{equation}
  h_{\nu,\ell}=\frac{1+i\nu+\ell}{2}\, ,\qquad
  \bar h_{\nu,\ell}=\frac{1+i\nu-\ell}{2} \, .
  \label{eq:two-particle-primary-weights}
\end{equation}
Residue channels are denoted by \([\mathcal O_a\mathcal O_b]_r^{\rm res}\).  The analytic two-particle module used in this paper is
\begin{equation}
\begin{aligned}
  \mathcal M_2^{\rm an}
  =\bigoplus_{a<b}\biggl[
  &\bigoplus_{\ell}\int_{\mathbb R}^{\oplus}d\nu\,
  \mathcal M_{ab}(\nu,\ell)\otimes\mathcal V_{1+i\nu,\ell}
  \\
  &\oplus\bigoplus_{r\in\mathfrak R_{ab}}
  \mathcal M_{ab}^{\rm res}(r)\biggr] \Big/\!\sim_{\rm sh} \, .
  \label{eq:M2-definition}
\end{aligned}
\end{equation}
We write \(\mathcal M_2\equiv\mathcal M_2^{\rm an}\); residue channels and analytically continued external weights are included in this notation.

Let \(F_1\) be the layer of kernels supported on one-particle current-leg diagonals, including Mellin shifts and local counterterms, and let \(F_2\) also allow one connected hard-hard pair.  For the meromorphic Cauchy kernels generated by the logarithmic soft theorem, the second associated-graded quotient is
\begin{equation}
  \operatorname{gr}_2F=F_2/F_1\simeq\mathcal M_2 \, .
  \label{eq:F2-F1-M2}
\end{equation}
The global identification in \eqref{eq:F2-F1-M2} uses assumption~\ref{ass:appC-meromorphic-continuation}.  On any fixed compact pole-free strip for which the continued projectors are defined, the algebraic arguments below require only the corresponding stripwise resolution of proposition~\ref{prop:appC-stripwise-plancherel}.  The residue criterion, the cocycle identity and the affine action can therefore be read stripwise.  Statements involving the full global module \(\mathcal M_2\), including the global generation theorem and the full minimality statement, invoke assumption~\ref{ass:appC-meromorphic-continuation}.  The identification combines the pairwise residue map with \eqref{eq:analytic-two-particle-completeness}.  The quotient map annihilates \(F_1\), whose representatives carry no hard-pair residue.  A nonzero pairwise residue gives a nonzero element of the left-hand side, and the continued Plancherel transform resolves it into the continuous and discrete channels on the right.

\begin{lemma}[One-particle scheme invariance]
\label{lem:one-particle-scheme-invariance}
Let \(K\in F_2\), \(r\in F_1\), and \(\Phi\in\mathfrak h_{\rm hard}\).  Then
\begin{equation}
  \sigma_2(K+r)=\sigma_2(K),
  \qquad
  \sigma_2([K+r,\Phi])=\sigma_2([K,\Phi]).
  \label{eq:one-particle-scheme-invariance}
\end{equation}
\end{lemma}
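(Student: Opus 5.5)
The first identity is immediate from the definitions. I would use that \(\sigma_2:F_2\to F_2/F_1\) is the quotient map of a filtered space. It is therefore linear and has kernel exactly \(F_1\). Since \(F_1\subset F_2\), the sum \(K+r\) lies in \(F_2\), and \(\sigma_2(K+r)=\sigma_2(K)+\sigma_2(r)=\sigma_2(K)\). Equivalently, in terms of the pairwise residue map of eq.~\eqref{eq:pairwise-residue-section3}, a one-particle kernel carries no Cauchy pole on any hard-pair diagonal \(\bar z_a=\bar z_b\). Its residue therefore vanishes, and adding it cannot change the resolved class in \(\mathcal M_2\) under eq.~\eqref{eq:F2-F1-M2}.

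For the second identity I would expand by bilinearity: \([K+r,\Phi]=[K,\Phi]+[r,\Phi]\). It then suffices to show two things. First, \([K,\Phi]\in F_2\), so that \(\sigma_2\) is defined on it. Second, \([r,\Phi]\in F_1\). Both reduce to the statement that the hard-current action preserves the support filtration, \([F_k,H[\Phi]]\subset F_k\) for \(k=1,2\).

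For \(k=1\) I would invoke lemma~\ref{lem:hard-current-representation}. On correlators \(H[\Phi]\) acts as \(\sum_i\Phi_i\), a sum of single-leg Mellin-difference operators. A kernel in \(F_1\) is a sum of single-leg operators, supported on the diagonal \(\Delta_1^{\rm cel}\) up to Mellin shifts, possibly with local counterterms. Operators on distinct legs commute, so the commutator reduces to same-leg commutators \([r_i,\Phi_i]\). Each of these is again a single-leg Mellin-difference operator, and hence lies in \(F_1\).

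For \(k=2\) I would argue termwise on an ordered-pair kernel such as \(K^{ab}\). Commuting with \(\Phi_c\) for \(c\notin\{a,b\}\) gives zero. Commuting with \(\Phi_a\) or \(\Phi_b\) multiplies by Mellin-shifted or differentiated profile coefficients via eq.~\eqref{eq:shift-commutation}. These factors are holomorphic in \(\Delta\) and do not introduce new hard-pair singularities. The support therefore stays within one connected pair, and \([K,\Phi]\in F_2\). It then follows that \(\sigma_2([r,\Phi])=0\), which gives the claim.

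The main obstacle is making \(F_1\) precise enough that its closure under \(\operatorname{ad}_{H[\Phi]}\) is rigorous. The issue is the local counterterms, including the scheme-mixing terms of eq.~\eqref{eq:dipole-scale-mixing} and the \(\delta(\omega)\) ambiguities of the plus-distributions. I would handle this by treating every such term as a single-leg operator of the form \eqref{eq:mellin-difference-operator} tensored with a soft-point distribution. The commutator with \(\Phi_i\) then stays in the same class. One must also check that the Ward-null ideal is compatible with the quotient. I would do this by noting that \(H\) is only defined modulo that ideal, and the ideal is itself stable under commutators with kernels in \(F_1\).
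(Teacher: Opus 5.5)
Your proposal is correct and is essentially the paper's argument: the first identity is the definition of the quotient map \(\sigma_2\), and the second follows because hard-current endomorphisms preserve one-particle support, \([F_1,\mathfrak h_{\rm hard}]\subset F_1\), so \(\sigma_2([r,\Phi])=0\). Your extra check that \([K,\Phi]\in F_2\) makes explicit something the paper leaves implicit, while the closing Ward-null remark is not needed for the lemma as stated, and its claim that the ideal is stable under \(F_1\) commutators would itself need proof.
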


\begin{proof}
The first identity is the definition of the quotient map.  Hard-current endomorphisms preserve one-particle support, so \([F_1,\mathfrak h_{\rm hard}]\subset F_1\).  Applying \(\sigma_2\) to \([r,\Phi]\) therefore gives zero, which proves the second identity.
\end{proof}

\begin{proposition}[Analytic second-support resolution]
\label{prop:pairwise-plancherel-resolution}
Let \(B_{ab}\) be a distributional kernel with a conormal singularity on the pairwise diagonal and with analytic Mellin dependence in the strip \(0\leq\operatorname{Re}\Delta_a\leq1\).  Its class in \(F_2/F_1\) has the unique expansion
\begin{equation}
\begin{aligned}
  [B_{ab}]={}&
  \sum_{\ell}\int_{\mathbb R}d\nu\,\rho_\ell(\nu)\,
  B_{ab}^{\nu,\ell}
  [\mathcal O_a\mathcal O_b]_{\nu,\ell}
  \\
  &+\sum_{r\in\mathfrak R_{ab}}
  B_{ab}^{r,\rm res}
  [\mathcal O_a\mathcal O_b]_{r}^{\rm res}
  \quad \bmod F_1 \, ,
  \label{eq:pairwise-expansion-M2}
\end{aligned}
\end{equation}
modulo the shadow relation and the intrinsic null submodule of the intertwiners.
\end{proposition}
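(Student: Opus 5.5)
The plan is to build the expansion in three steps. First, localise $B_{ab}$ near the pair diagonal. Second, strip off its one-particle part, which dies in $F_2/F_1$. Third, apply the continued completeness relation \eqref{eq:analytic-two-particle-completeness} to what is left. Uniqueness will then come from the injectivity of the continued Plancherel transform on the quotient by the shadow relation.

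\textbf{Existence.} Because the singularity of $B_{ab}$ is conormal to $\bar z_a=\bar z_b$, I will write it, near the diagonal, as a finite sum of Cauchy-type principal terms (powers of $(\bar z_a-\bar z_b)^{-1}$ times smooth coefficients) plus a remainder. The smooth part of the remainder factorises, after completed tensor products in $\mathscr P(U_a)\widehat\otimes\mathscr P(U_b)$, into sums of products of one-leg multipliers. Such products do not correlate the two insertions beyond the spectator structure. The residue map vanishes on them, so they should be absorbed into $F_1$ (or into the null submodule). By lemma~\ref{lem:one-particle-scheme-invariance}, changing the representative by an $F_1$ element does not affect the class.

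Next I will pair the principal part with analytic Mellin wave packets on the strip $0\le\operatorname{Re}\Delta_a\le1$. The analyticity hypothesis is exactly what lets $T_a^{-1}$, and more generally the shift to $\operatorname{Re}\Delta_a=0$, act as the continuous isomorphism described before \eqref{eq:analytic-two-particle-projector}. I then insert $\mathbf 1^{\rm an}_{ab}$ from \eqref{eq:analytic-two-particle-completeness}. This defines $B^{\nu,\ell}_{ab}$ as the pairing of $B_{ab}$ with the continued projector $\mathbb P^{\rm an}_{ab}(\nu,\ell;\Delta_a-1,\Delta_b)$, or equivalently with the kernel $\mathcal K^{ab}_{\nu,\ell}$ of \eqref{eq:two-particle-primary-kernel}. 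The coefficients $B^{r,\rm res}_{ab}$ are defined in the same way from the finitely many crossed-pole projectors indexed by $\mathfrak R_{ab}$. Each of these pairings only sees the class modulo $F_1$, because the projectors annihilate one-particle-supported kernels up to the null submodule. This produces \eqref{eq:pairwise-expansion-M2}.

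\textbf{Uniqueness.} Suppose two expansions represent the same class. Their difference is then an element of $\mathcal M_2$ whose inverse transform lies in $F_1$. By \eqref{eq:F2-F1-M2}, applied stripwise through proposition~\ref{prop:appC-stripwise-plancherel}, the map $F_2/F_1\to\mathcal M_2$ is an isomorphism onto the quotient by the shadow identification $(\nu,\ell)\sim(-\nu,-\ell)$ and by the finite-dimensional kernels of the intertwiners. The difference therefore vanishes in that quotient, which is the uniqueness assertion. On the unitary line this reduces to ordinary Plancherel injectivity \eqref{eq:two-particle-completeness}. Off the line it is the compatibility of the stripwise transforms under restriction.

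\textbf{Main obstacle.} The delicate step is the existence argument at the pole crossings. Moving $\Delta_a$ from $\operatorname{Re}=1$ to $\operatorname{Re}=0$ can hit intertwiner poles, and there I must show two things: that the residue terms are exactly captured by $\mathfrak R_{ab}$, and that nothing spills into a non-Hausdorff piece. My first attempt would be to work on a single compact pole-free strip, where the second line of the expansion is absent. I would then enlarge the contour one pole at a time, restoring each residue projector, and rely on the closedness of the graph relations stated before \eqref{eq:two-particle-primary-kernel} to keep the quotient Hausdorff. The global statement would follow by taking the inverse limit, under assumption~\ref{ass:appC-meromorphic-continuation}.
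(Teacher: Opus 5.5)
\paragraph{Verdict.} Your overall skeleton is the one the paper uses: smearing with analytic Mellin wave packets, inserting the continued completeness relation \eqref{eq:analytic-two-particle-completeness} on a compact pole-free strip, restoring crossed-pole residue projectors, using closedness of the Knapp--Stein graph, and passing to the inverse limit under assumption~\ref{ass:appC-meromorphic-continuation}. Two steps do not hold up, however.

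\paragraph{The localization step.} The paper applies the stripwise identity to $B_{ab}$ as it stands. You instead first split $B_{ab}$ into finitely many powers of $(\bar z_a-\bar z_b)^{-1}$ plus a remainder, and then dispose only of the smooth part of that remainder. For a general kernel conormal to $\Delta_{ab}$, such a split does not exist. Terms like $\log|z_a-z_b|^2$ or $(z_a-z_b)^{-1}$ are neither antiholomorphic Cauchy monomials nor smooth. Their singular support is the pair diagonal, with wavefront set in $N^*\Delta_{ab}$, so they are not in $F_1$. Your construction therefore silently loses them. The localization gains nothing and should be dropped.

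\paragraph{The link between Plancherel data and the class in $F_2/F_1$.} This is the more serious gap: the link is assumed rather than derived, in two places.
\begin{itemize}
\item You assert that the projectors annihilate one-particle-supported kernels ``up to the null submodule'', but give no argument. Since you placed smooth pair kernels in $F_1$, the claim also cannot come from proposition~\ref{prop:appC-stripwise-plancherel}. The only relations quotiented there are the Knapp--Stein graph relations and the finite-dimensional intertwiner kernels, and nothing in that construction sends smooth two-particle data to zero.
\item Your uniqueness step quotes \eqref{eq:F2-F1-M2}, that is, the statement that the Plancherel map descends to an isomorphism on $F_2/F_1$. That identification is proposition~\ref{prop:appC-F2-M2-identification}, which is the present statement summed over ordered pairs. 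Quoting it assumes what is to be shown.
\end{itemize}

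\paragraph{What the paper does instead.} The paper's proof uses neither claim. It expands the representative $B_{ab}$ itself, then argues as follows. If every continuous coefficient and crossed-pole residue vanishes, the stripwise isomorphism places $B_{ab}$ in the shadow-null subspace. Conormal separation, proposition~\ref{prop:appC-conormal-separation}, then leaves only an $F_1$ representative. This microlocal step is how the paper passes from the continued harmonic analysis back to the support filtration, and it is absent from your proposal.
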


\begin{proof}
Smear the external Mellin variables with analytic wave packets and the celestial coordinates with compactly supported test functions.  On a compact pole-free spectral strip, the ordinary principal-series Plancherel transform continues to a topological isomorphism after quotienting by the closed Knapp--Stein graph relations and adjoining the finite-dimensional residues of crossed poles.  This follows by contour deformation; the continued transform and its inverse are the meromorphic continuations of the two compositions that equal the identity on the principal line.  Applying the stripwise identity to \(B_{ab}\), and then passing to the compatible inverse limit of strips, gives \eqref{eq:pairwise-expansion-M2}.  If every continuous coefficient and crossed-pole residue vanishes, the class of \(B_{ab}\) lies in the shadow-null subspace; conormal separation then leaves only an \(F_1\) representative.
\end{proof}

\subsection{The dipole-resolved hard OPE}
\label{subsec:dipole-hard-OPE-theorem}

For \(\Phi\in\mathscr D_{\rm Mell}\), write \(\Phi_a\) for its action on the \(a\)-th insertion.  For a diagonal Mellin label define its scalar backward difference and shifted commutator by
\begin{equation}
  \delta_a^-\Phi_a
  :=\Phi_a(\Delta_a-1)-\Phi_a(\Delta_a),
  \qquad
  \nabla_a^-\Phi_a
  :=T_a^{-1}\Phi_a-\Phi_aT_a^{-1}
  =(\delta_a^-\Phi_a)T_a^{-1}.
  \label{eq:backward-difference-Phi}
\end{equation}
Matrix-valued labels obey the same formula with matrix composition.  Constant global charges have \(\delta^-\Phi=\nabla^-\Phi=0\); Mellin-dependent hard currents generally do not.

A general subtraction scheme may add a one-particle representative to the ordered-pair kernel.  Modulo central contact terms, write
\begin{equation}
  \mathsf S^0_\mu(w,\bar w)
  =H[s^0_{\mu,1}(w,\bar w)]
   +\sum_{a\ne b}K_{w,\bar w}^{ab},
  \qquad H[s^0_{\mu,1}]\in F_1,
  \label{eq:S0-one-particle-pair-decomposition}
\end{equation}
and define the local hard-current derivation by
\begin{equation}
  \mathcal L^0_{w,\bar w}\Phi
  :=[s^0_{\mu,1}(w,\bar w),\Phi]_{\star}.
  \label{eq:L0-explicit-definition}
\end{equation}
The pairwise representative fixed by eq.~\eqref{eq:S0-Ward-pairwise} has \(s^0_{\mu,1}=0\).  Other infrared subtraction schemes change only the \(F_1\)-valued term in eq.~\eqref{eq:L0-explicit-definition}.  The ordered-pair kernel \(K_{w,\bar w}^{ab}\) is defined in eq.~\eqref{eq:ordered-pair-kernel-section3}; its commutator with the hard-current label is
\begin{equation}
  [K_{w,\bar w}^{ab},\Phi_a+\Phi_b]
  =\frac{\eta_aQ_a^2e_b}{w-z_a}
   \frac{\bar w-\bar z_b}{\bar z_a-\bar z_b}
   \nabla_a^-\Phi_a
  \quad \bmod F_1 \, .
  \label{eq:pair-kernel-hard-commutator}
\end{equation}

\begin{theorem}[Dipole-hard OPE]
\label{thm:dipole-hard-ope}
The singular commutator of the normalized logarithmic soft insertion with a hard current is
\begin{equation}
  [\mathsf S^0_\mu(w,\bar w),H[\Phi]]
  =H[\mathcal L^0_{w,\bar w}\Phi]
   +\mathbb M^0_{w,\bar w}[\Phi]
  \quad \bmod F_3 \, ,
  \label{eq:dipole-hard-OPE-filtered}
\end{equation}
where the second-support term acts on an \(n\)-point hard correlator by
\begin{equation}
\begin{aligned}
  \left\langle \mathbb M^0_{w,\bar w}[\Phi]
  \prod_{i=1}^n\mathcal O_i\right\rangle_{\mathcal H,\mu}
  =\sum_{a\ne b}
  \frac{\eta_aQ_a^2e_b}{w-z_a}
  \frac{\bar w-\bar z_b}{\bar z_a-\bar z_b}
  (\nabla_a^-\Phi_a)C_n \, .
  \label{eq:M-Phi-explicit}
\end{aligned}
\end{equation}
Its associated-graded class has the analytic primary expansion
\begin{equation}
\begin{aligned}
  [\mathbb M^0_{w,\bar w}[\Phi]]
  ={}&\sum_{a\ne b}\sum_\ell\int_{\mathbb R}d\nu\,
  \rho_\ell(\nu)\,
  \mathcal C^{ab}_{\Phi;\nu,\ell}(w,\bar w)
  [\mathcal O_a\mathcal O_b]_{\nu,\ell}
  \\
  &+\sum_{a\ne b}\sum_{r\in\mathfrak R_{ab}}
  \mathcal C^{ab,r}_{\Phi}(w,\bar w)
  [\mathcal O_a\mathcal O_b]_{r}^{\rm res} \, .
  \label{eq:M-Phi-M2-expansion}
\end{aligned}
\end{equation}
The coefficients are the meromorphic Plancherel pairings of the fixed kernel \eqref{eq:pair-kernel-hard-commutator}.
\end{theorem}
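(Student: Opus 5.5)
We compute the commutator directly on an arbitrary $n$-point renormalized hard correlator, using the decomposition \eqref{eq:S0-one-particle-pair-decomposition} of the logarithmic insertion into its one-particle part $H[s^0_{\mu,1}]$ and the ordered-pair kernels $K^{ab}_{w,\bar w}$. By linearity,
\begin{equation*}
  [\mathsf S^0_\mu(w,\bar w),H[\Phi]]
  =[H[s^0_{\mu,1}],H[\Phi]]
  +\sum_{a\ne b}\Bigl[K^{ab}_{w,\bar w},\textstyle\sum_i\Phi_i\Bigr].
\end{equation*}
The first term is handled by lemma~\ref{lem:hard-current-representation}: it equals $H[[s^0_{\mu,1},\Phi]_\star]=H[\mathcal L^0_{w,\bar w}\Phi]$ by the definition \eqref{eq:L0-explicit-definition}, and it lies in $F_1$. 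For the second term, note that $K^{ab}_{w,\bar w}$ acts only on the Mellin variable of leg $a$ (through $T_a^{-1}$), multiplied by a rational function of $(w,\bar w,z_a,\bar z_a,\bar z_b)$. Hence $\Phi_i$ for $i\notin\{a,b\}$ commutes with it exactly. This reduces the second term to \eqref{eq:pair-kernel-hard-commutator}.

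Next we verify \eqref{eq:pair-kernel-hard-commutator} and identify the remainder. Write $K^{ab}=R^{ab}(w,\bar w;z_a,\bar z_a,\bar z_b)\,T_a^{-1}$. Then
\begin{equation*}
  [K^{ab},\Phi_a]=R^{ab}\nabla_a^-\Phi_a+[R^{ab},\Phi_a]T_a^{-1},
  \qquad
  [K^{ab},\Phi_b]=[R^{ab},\Phi_b]T_a^{-1},
\end{equation*}
where the Mellin shift on $b$ commutes with $T_a^{-1}$. If $\Phi$ is a pure Mellin-difference operator with profile multipliers, $R^{ab}$ commutes with $\Phi_a,\Phi_b$. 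If $\Phi$ contains celestial derivatives, or profiles that do not commute with $R^{ab}$, the extra commutators either act by differentiating the rational kernel or are composites involving three legs or local contact terms at $\bar z_a=\bar z_b$. We intend to argue that these contributions are either absorbed in $F_1$, being local one-leg counterterms with no surviving Cauchy residue after the profile poles are kept disjoint from hard insertions, or lie in $F_3$. Summing over ordered pairs then gives \eqref{eq:M-Phi-explicit} modulo $F_3$ and $F_1$-contributions already collected in $H[\mathcal L^0\Phi]$. Lemma~\ref{lem:one-particle-scheme-invariance} shows that the class is independent of which $F_1$ representative is chosen, so the scheme dependence sits entirely in $\mathcal L^0$.

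Finally we obtain the primary expansion \eqref{eq:M-Phi-M2-expansion}. For each ordered pair, the kernel $B_{ab}=R^{ab}(\delta_a^-\Phi_a)T_a^{-1}$ has a conormal Cauchy singularity on $\bar z_a=\bar z_b$. Its Mellin dependence is analytic in $0\le\operatorname{Re}\Delta_a\le1$, because $\Phi$ is holomorphic in the chosen strip and $T_a^{-1}$ is the continuous isomorphism between the two boundary lines described in section~\ref{subsec:two-particle-primaries}. Proposition~\ref{prop:pairwise-plancherel-resolution} therefore applies and gives a unique expansion modulo shadow and null relations. The coefficients $\mathcal C^{ab}_{\Phi;\nu,\ell}$ and $\mathcal C^{ab,r}_\Phi$ are the continued Plancherel pairings of this fixed kernel, and the residue channels appear exactly when the shift crosses intertwiner poles.

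The main obstacle is the second step: rigorously controlling the non-commuting profile and derivative pieces of $\Phi$ against the rational kernel. In particular, one must show that $[R^{ab},\Phi_a+\Phi_b]T_a^{-1}$ carries no new hard-pair residue beyond $F_1$ or $F_3$. I would first treat profiles that are functions only, which commute trivially. I would then handle $\partial_\Delta$ terms, which commute with $R^{ab}$ and so only modify $\nabla^-$ through Leibniz. Any genuinely angular differential component would be deferred to the $F_1$/contact bookkeeping.
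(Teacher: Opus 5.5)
Your proposal is correct and follows the paper's proof essentially step by step. You split off $H[s^0_{\mu,1}]$ and use lemma~\ref{lem:hard-current-representation} to get $H[\mathcal L^0_{w,\bar w}\Phi]$, note that labels on legs other than $a$ commute with $K^{ab}_{w,\bar w}$, read off $R^{ab}\nabla_a^-\Phi_a$ on the emitting leg, and invoke proposition~\ref{prop:pairwise-plancherel-resolution} for the primary expansion.

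The obstacle you flag does not arise. Every element of $\mathscr D_{\rm Mell}$ is a finite sum of multiplication operators in $(\Delta,z,\bar z,J,Q,\eta)$ composed with $\partial_\Delta^k$ and $T_c$. Hence $[R^{ab},\Phi_a+\Phi_b]=0$ identically, and there are no angular derivatives to control.
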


\begin{proof}
Apply the hard current and the logarithmic soft Ward operator in the two possible orders.  The one-particle representative in eq.~\eqref{eq:S0-one-particle-pair-decomposition} contributes
\(H[[s^0_{\mu,1},\Phi]_{\star}]=H[\mathcal L^0_{w,\bar w}\Phi]\) by lemma~\ref{lem:hard-current-representation}.  Operators on legs other than \(a\) and \(b\) commute with the ordered-pair kernel.  On the emitting leg the two orders differ by \(T_a^{-1}\Phi_a-\Phi_aT_a^{-1}=\nabla_a^-\Phi_a\), which gives \eqref{eq:M-Phi-explicit}.  The resulting kernel belongs to \(F_2\).  Proposition~\ref{prop:pairwise-plancherel-resolution} gives \eqref{eq:M-Phi-M2-expansion}, including any residue channels crossed by the shift of the external weight.
\end{proof}

Define the second-support coefficient map
\begin{equation}
  \mathfrak m^0_{w,\bar w}(\Phi)
  :=[\mathbb M^0_{w,\bar w}[\Phi]]
  \in\mathcal M_2 \, .
  \label{eq:m-cocycle-section4}
\end{equation}
The hard-current algebra acts on \(\mathcal M_2\) by the diagonal action on the two external legs.  We denote this action by \(\Phi\cdot U\).

\begin{theorem}[Hard-current one-cocycle]
\label{thm:hard-current-one-cocycle}
For every normalized logarithmic soft kernel \(X\), the map
\(\mathfrak m_X:\mathfrak h_{\rm hard}\to\mathcal M_2\) satisfies
\begin{equation}
  \mathfrak m_X([\Phi,\Psi]_{\star})
  =\Phi\cdot\mathfrak m_X(\Psi)
   -\Psi\cdot\mathfrak m_X(\Phi) \, .
  \label{eq:hard-current-one-cocycle}
\end{equation}
Thus \(\mathfrak m_X\in Z^1(\mathfrak h_{\rm hard},\mathcal M_2)\).
\end{theorem}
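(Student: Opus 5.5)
The cocycle identity will follow from the Jacobi identity for commutators of operators on hard correlators, projected to $F_2/F_1$ with $\sigma_2$. Write $\mathsf S_X = H[s_{X,1}]+K_X$ as in eq.~\eqref{eq:S0-one-particle-pair-decomposition}. By theorem~\ref{thm:dipole-hard-ope} and lemma~\ref{lem:one-particle-scheme-invariance}, $\mathfrak m_X(\Phi)=\sigma_2([K_X,H[\Phi]])$, independently of the $F_1$ representative. In particular the local derivation $\mathcal L_X$ drops out of the symbol, and I may work with $K_X=\sum_{a\ne b}K_X^{ab}$ alone.

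First I apply the operator Jacobi identity
\begin{equation*}
  [K_X,[H[\Phi],H[\Psi]]]
  =[[K_X,H[\Phi]],H[\Psi]]+[H[\Phi],[K_X,H[\Psi]]].
\end{equation*}
All three commutators are honest compositions of operators on the distributional hard correlators, so the identity holds exactly. By lemma~\ref{lem:hard-current-representation} the left-hand side is $[K_X,H[[\Phi,\Psi]_\star]]$, and its symbol is $\mathfrak m_X([\Phi,\Psi]_\star)$.

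Next I write $[K_X,H[\Phi]]=\mathbb M_X[\Phi]+r_\Phi$ with $r_\Phi\in F_1$ and $\mathbb M_X[\Phi]$ the explicit representative \eqref{eq:M-Phi-explicit}. Because $[F_1,\mathfrak h_{\rm hard}]\subset F_1$, the $r$ terms vanish after $\sigma_2$. The key step is then the compatibility statement
\begin{equation*}
  \sigma_2\bigl([H[\Phi],B]\bigr)=\Phi\cdot\sigma_2(B),
  \qquad B\in F_2,
\end{equation*}
meaning that the diagonal hard-current action on $\mathcal M_2$ is the associated-graded action induced by commutation. The sign convention for "$\Phi\cdot U$" must be fixed to match this.

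To check this I would argue as follows. For a pair kernel $B_{ab}$, write $H[\Phi]=\sum_i\Phi_i$. The terms with $i\ne a,b$ commute with $B_{ab}$, because $B_{ab}$ acts only on the variables of legs $a$ and $b$ and leaves spectators untouched. The remaining commutator $[\Phi_a+\Phi_b,B_{ab}]$ is again a kernel on the same pair with the same conormal diagonal singularity, so it lies in $F_2$ modulo $F_1$. Under the continued Plancherel transform of proposition~\ref{prop:pairwise-plancherel-resolution}, this commutator is by definition the diagonal action of $\Phi$ on the two-particle tensor product, intertwined to $\mathcal M_2$. Then $\sigma_2([[K_X,H[\Phi]],H[\Psi]])=-\Psi\cdot\mathfrak m_X(\Phi)$ and $\sigma_2([H[\Phi],[K_X,H[\Psi]]])=\Phi\cdot\mathfrak m_X(\Psi)$, which gives \eqref{eq:hard-current-one-cocycle}.

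The main obstacle is this compatibility step, for two reasons. First, the action must descend to the quotient: $[\mathfrak h_{\rm hard},F_2]\subset F_2$ must hold, and $F_1$ must be preserved. The second holds by lemma~\ref{lem:one-particle-scheme-invariance}'s argument. For the first I must check that a Mellin shift composed with the Cauchy kernel $(\bar z_a-\bar z_b)^{-1}$ does not create new singular support, which is true because profiles are smooth near distinct hard points or have poles disjoint from them. Second, the action must commute with the meromorphically continued projectors, including the residue channels $\mathfrak R_{ab}$ and the shadow quotient. Here I would argue stripwise. On the principal line, diagonal Lorentz-invariant labels commute with the Plancherel projectors. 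The Mellin shifts $T_a^{-1}$ move the external weights, and the action is transported by analytic continuation of this intertwining identity. I would take the resulting channels to be the ones already included in $\mathcal M_2^{\rm an}$, since that module was defined with analytically continued external weights. As a sanity check, the explicit formula \eqref{eq:M-Phi-explicit} gives $\nabla_a^-$ of a commutator as a commutator of $\nabla_a^-$'s, using $\nabla^-_a[\Phi,\Psi]=[\nabla_a^-\Phi,\Psi]+[\Phi,\nabla_a^-\Psi]$ with $\nabla^-_a=[T_a^{-1},\cdot\,]$. This derivation property reproduces the cocycle identity directly on each ordered pair.
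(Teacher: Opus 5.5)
Your proposal is correct and is essentially the paper's proof. You apply the Jacobi identity to $K_X$, $H[\Phi]$, $H[\Psi]$ in the kernel algebra and use $[H[\Phi],H[\Psi]]=H[[\Phi,\Psi]_\star]$. You then pass to $F_2/F_1$, taking the action on $\mathcal M_2$ to be the one induced by commutation, $\Phi\cdot U=\sigma_2[\Phi,U]$; this is the same convention the paper uses in proving proposition~\ref{prop:canonical-primitive-relative-splitting}.

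Your extra checks are details the paper leaves implicit, not a different route. These are the descent of the action to the quotient, its transport through the continued Plancherel identification, and the per-pair verification via the derivation property of $\nabla_a^-=[T_a^{-1},\cdot\,]$.
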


\begin{proof}
On every ordered pair, \(\mathfrak m_X(\Phi)\) is the second-support symbol of \([K_X,\Phi]\).  The ordinary commutator identity
\begin{equation}
  [K_X,[\Phi,\Psi]]
  =[\Phi,[K_X,\Psi]]-[\Psi,[K_X,\Phi]]
\end{equation}
holds in the finite-part kernel algebra.  Passing to \(F_2/F_1\) turns the two terms on the right into the diagonal hard-current action on \(\mathcal M_2\), which gives \eqref{eq:hard-current-one-cocycle}.
\end{proof}

\begin{proposition}[Canonical primitive and relative non-splitting]
\label{prop:canonical-primitive-relative-splitting}
Let
\begin{equation}
\label{eq:canonical-kappa-X}
  \kappa_X:=\sigma_2(K_X)\in\mathcal M_2
\end{equation}
be the analytic two-particle class of the normalized ordered-pair kernel.  Then
\begin{equation}
\label{eq:m-is-relative-coboundary}
  \mathfrak m_X(\Phi)=-\rho_2(\Phi)\kappa_X
  =-(d_{\rm CE}\kappa_X)(\Phi).
\end{equation}
No $F_1$-valued zero-cochain has the same differential whenever the pairwise residue of $K_X$ is nonzero.  Under the cyclicity hypotheses of theorem~\ref{thm:minimal-closure-section6}, the closed module generated by the family $\{\kappa_X\}$ under hard-current profiles, soft-point smearing and the diagonal Lorentz action is $\mathcal M_2$.
\end{proposition}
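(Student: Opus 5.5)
The plan has three parts: the coboundary identity, the non-existence of an $F_1$ primitive, and generation.

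\emph{Coboundary identity.} I would compute ordered pair by ordered pair, using the explicit representative of theorem~\ref{thm:dipole-hard-ope}. By definition $\mathfrak m_X(\Phi)=\sigma_2([K_X,\Phi])$, and on the pair $(a,b)$ eq.~\eqref{eq:pair-kernel-hard-commutator} gives $[K_X^{ab},\Phi_a+\Phi_b]$ modulo $F_1$. Lemma~\ref{lem:one-particle-scheme-invariance} shows that this class is independent of the one-particle representative $s^0_{\mu,1}$, so I may take the pairwise representative with $s^0_{\mu,1}=0$.

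The diagonal action $\rho_2$ on $F_2/F_1$ is induced by the commutator: $\rho_2(\Phi)\sigma_2(K):=\sigma_2([\Phi,K])$. This is well defined because $[F_1,\mathfrak h_{\rm hard}]\subset F_1$, which is exactly the argument used in the proof of lemma~\ref{lem:one-particle-scheme-invariance}. With this definition, $\mathfrak m_X(\Phi)=\sigma_2([K_X,\Phi])=-\sigma_2([\Phi,K_X])=-\rho_2(\Phi)\kappa_X$. For a zero-cochain the Chevalley--Eilenberg differential is $(d_{\rm CE}\kappa)(\Phi)=\rho_2(\Phi)\kappa$, and this gives the second equality.

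The one point to check is consistency with the action used in theorem~\ref{thm:hard-current-one-cocycle}. There, $\Phi\cdot U$ was the diagonal action on the two legs. It agrees with $\rho_2$ because commuting $\Phi_a+\Phi_b$ with the pair kernel reproduces $\nabla^-_a\Phi_a$ times the same Cauchy factor, which is eq.~\eqref{eq:pair-kernel-hard-commutator}. The cocycle identity then follows automatically from $d_{\rm CE}^2=0$, giving a consistency check.

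\emph{No $F_1$ primitive.} Suppose $r\in F_1$ satisfies $\sigma_2([\Phi,r])=\sigma_2([\Phi,K_X])$ for all $\Phi$. Since $[\Phi,r]\in F_1$, the left side vanishes. So $\mathfrak m_X(\Phi)=0$ for every $\Phi$, and I must contradict this.

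I would choose $\Phi$ with nonvanishing backward difference, for instance $\Phi=\Delta\,\varphi(z,\bar z)$, which has $\delta_a^-\Phi_a=-\varphi(z_a,\bar z_a)$. The pairwise residue of $(w-z_a)[K_X^{ab},\Phi]$ on $\bar z_a=\bar z_b$ is then $-\eta_aQ_a^2e_b(\bar w-\bar z_b)\varphi(z_a,\bar z_a)T_a^{-1}$, by eq.~\eqref{eq:pairwise-residue-section3}. This is nonzero whenever the pairwise residue of $K_X$ is nonzero. The residue map vanishes on $F_1$, so the class is nonzero, as in the residue criterion of theorem~\ref{thm:nonabsorption-section4}.

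The statement should be read as saying that no $F_1$ element represents $\kappa_X$ or reproduces its differential in $F_2/F_1$. To be careful, I would also note the stronger fact that $\kappa_X$ itself is nonzero in $F_2/F_1$, because its residue is nonzero.

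\emph{Generation.} This is the main obstacle. I would invoke theorem~\ref{thm:minimal-closure-section6} and its cyclicity hypotheses, together with proposition~\ref{prop:pairwise-plancherel-resolution}. The argument has three steps.
\begin{itemize}
\item Hard-current profiles act on $\kappa_X$ through $\rho_2$. Polynomial Mellin labels, via iterated $\nabla^-$, generate the analytic Mellin dependence. Tensor products $\mathscr P(U_a)\widehat\otimes\mathscr P(U_b)$ of leg profiles generate products of test functions in the two hard insertions.
\item Smearing the soft point $(w,\bar w)$ against the Cauchy kernel, by Cauchy--Pompeiu, produces arbitrary compactly supported angular dependence near the pair diagonal.
\item The diagonal Lorentz action then fills each $(\nu,\ell)$ channel and each residue channel.
\end{itemize}
Density in each compact pole-free strip then follows from the stripwise Plancherel isomorphism. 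Passing to the global module $\mathcal M_2$ uses the inverse limit under assumption~\ref{ass:appC-meromorphic-continuation}. The delicate part is showing that no channel coefficient is identically annihilated, that is, that the Plancherel pairings of the fixed kernel are not zero on a whole channel. I would try to verify this directly from the residue $(\bar w-\bar z_b)$, which is nondegenerate after smearing.
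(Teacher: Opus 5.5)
Your first two parts are the paper's own argument. There $\rho_2$ is the commutator-induced action, so $\rho_2(\Phi)\kappa_X=\sigma_2[\Phi,K_X]=-\mathfrak m_X(\Phi)$. For $u\in F_1$, filtration compatibility keeps $\rho_2(\Phi)u$ in $F_1$, so it has no pair residue and cannot match the nonzero residue of theorem~\ref{thm:nonabsorption-section4}. Your remark that $\mathfrak m_X=-d_{\rm CE}\kappa_X$ makes the cocycle identity automatic is correct and harmless.

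The generation part has a genuine gap, in two places.

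\emph{The angular step through $\rho_2$ fails.} By eq.~\eqref{eq:pair-kernel-hard-commutator}, the $(a,b)$ component of $\rho_2(\Phi)\kappa_X$ is $-L_X^{ab}(\delta_a^-\Phi_a)T_a^{-1}$, which sees only the leg-$a$ part of $\Phi$. A purely angular profile has $\delta_a^-\Phi_a=0$ and acts by zero. Every leg-$b$ label in $\mathscr D_{\rm Mell}$ commutes with $L_X^{ab}T_a^{-1}$, since it has no angular derivatives and does not touch $\Delta_a$. Iterating $\rho_2$ therefore only multiplies by products of backward differences in the leg-$a$ variables, so the $\mathscr P(U_b)$ factor is never reached this way.

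Lemma~\ref{lem:angular-cyclicity-section4} proceeds differently: it uses profiles as direct multipliers of the kernel. It places the soft point off both coordinate projections so that $c_{ab}$ is smooth and nowhere zero, writes $A=c_{ab}B$, and approximates $B$ by decomposable profiles. On the Mellin side no iteration is needed: $p=-\Delta$ gives $\delta^-p=1$, which is lemma~\ref{lem:mellin-cyclicity-section4}.

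\emph{The step you call delicate is left open, and neither proposed remedy closes it.} The diagonal Lorentz action preserves each $(\nu,\ell)$ channel, so it cannot create a component whose Plancherel coefficient vanishes. Nonvanishing of the residue factor $(\bar w-\bar z_b)$ shows only that the class is nonzero, not that every channel pairing is nonzero.

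The missing idea is a density and duality argument rather than a channel-by-channel check:
\begin{enumerate}
\item The two cyclicity lemmas make the generated family dense in $\mathscr C_{ab}^{-1}\widehat\otimes T^{-1}\mathscr W_{\rm an}$, and hence in the pairwise analytic test space.
\item A continuous functional that annihilates this family therefore vanishes identically.
\item The stripwise Plancherel isomorphism of proposition~\ref{prop:appC-stripwise-plancherel} transports this to every continuous channel modulo the shadow-null relations. The finite-dimensional residue summands are handled as in lemma~\ref{lem:channel-annihilator-section4} and theorem~\ref{thm:soft-accessible-generation-section4}.
\item Assumption~\ref{ass:appC-meromorphic-continuation} then passes the stripwise statement to the inverse limit $\mathcal M_2$.
\end{enumerate}
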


\begin{proof}
The hard-current action on $\mathcal M_2$ is induced by the commutator, so
\begin{equation}
  \rho_2(\Phi)\kappa_X
  =\sigma_2[\Phi,K_X]
  =-\sigma_2[K_X,\Phi]
  =-\mathfrak m_X(\Phi).
\end{equation}
If $u\in F_1$, filtration compatibility gives $\rho_2(\Phi)u\in F_1$, so its pairwise residue vanishes.  It cannot reproduce the nonzero residue of theorem~\ref{thm:nonabsorption-section4}.  The generation statement follows from lemmas~\ref{lem:angular-cyclicity-section4} and \ref{lem:mellin-cyclicity-section4} together with soft-point smearing and the analytic Plancherel resolution.
\end{proof}

\begin{proposition}[Integrated hard-current cocycle]
\label{prop:integrated-hard-current-cocycle}
Let $G_{\rm hard}^{\rm for}=\exp(\mathfrak h_{\rm hard})$ be the formal BCH group of a complete filtered hard-current subalgebra, and let $\rho_2$ be its infinitesimal action on $\mathcal M_2$.  The map
\begin{equation}
\label{eq:integrated-hard-current-cocycle}
  \mathfrak M_X(e^\Phi)
  =\int_0^1 ds\,e^{s\rho_2(\Phi)}\mathfrak m_X(\Phi)
  =\kappa_X-e^{\rho_2(\Phi)}\kappa_X
\end{equation}
defines a formal group one-cocycle,
\begin{equation}
\label{eq:group-one-cocycle}
  \mathfrak M_X(g_1g_2)
  =\mathfrak M_X(g_1)+g_1\cdot\mathfrak M_X(g_2)\, .
\end{equation}
It is the translation part of the affine action obtained by changing the origin from $0$ to $\kappa_X$ in $\mathcal M_2$.
\end{proposition}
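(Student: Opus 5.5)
We take the infinitesimal relation of proposition~\ref{prop:canonical-primitive-relative-splitting}, $\mathfrak m_X(\Phi)=-\rho_2(\Phi)\kappa_X$, as the only input, and first prove the closed form in \eqref{eq:integrated-hard-current-cocycle}. Substituting this relation into the integral gives
\begin{equation*}
  \int_0^1 ds\,e^{s\rho_2(\Phi)}\mathfrak m_X(\Phi)
  =-\int_0^1 ds\,\frac{d}{ds}\bigl(e^{s\rho_2(\Phi)}\kappa_X\bigr)
  =\kappa_X-e^{\rho_2(\Phi)}\kappa_X .
\end{equation*}
For this to make sense, $e^{s\rho_2(\Phi)}$ must be defined on $\mathcal M_2$. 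We will argue that the subalgebra is complete and filtered, and that $\rho_2$ preserves the filtration on the completed module. Then the exponential series converges formally, and differentiation in $s$ can be done termwise. The module is a topological module, stripwise or via the inverse limit of proposition~\ref{prop:appC-stripwise-plancherel}, and we check that $\rho_2(\Phi)$ acts continuously on it. The exponential is then well defined stripwise, and the strips are compatible under restriction.

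Next we show that $e^\Phi\mapsto e^{\rho_2(\Phi)}$ is a formal group representation $g\mapsto g\cdot$ of $G^{\rm for}_{\rm hard}$ on $\mathcal M_2$. Since $\rho_2$ is a Lie algebra homomorphism (the diagonal action induced by the commutator, lemma~\ref{lem:hard-current-representation}), the BCH formula gives $e^{\rho_2(\Phi_1)}e^{\rho_2(\Phi_2)}=e^{\rho_2(\mathrm{BCH}(\Phi_1,\Phi_2))}$ in the completed setting. With this, the cocycle identity follows directly from the formula $\mathfrak M_X(g)=\kappa_X-g\cdot\kappa_X$:
\begin{equation*}
  \mathfrak M_X(g_1)+g_1\cdot\mathfrak M_X(g_2)
  =\kappa_X-g_1\kappa_X+g_1\kappa_X-g_1g_2\kappa_X
  =\mathfrak M_X(g_1g_2).
\end{equation*}
In other words, $\mathfrak M_X$ is a group coboundary in $\mathcal M_2$. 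We also check that its derivative at the identity is $\frac{d}{dt}\mathfrak M_X(e^{t\Phi})|_{0}=-\rho_2(\Phi)\kappa_X=\mathfrak m_X(\Phi)$, so it integrates the Lie cocycle of theorem~\ref{thm:hard-current-one-cocycle}.

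For the affine interpretation we define $A_g(U)=g\cdot U+\mathfrak M_X(g)$. We will verify that $A_g(U)=g\cdot(U-\kappa_X)+\kappa_X$. This is the linear action conjugated by translation to the origin $\kappa_X$, and the composition law $A_{g_1}A_{g_2}=A_{g_1g_2}$ is equivalent to the cocycle identity.

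The main obstacle is the analytic step: justifying that the formal exponential of the diagonal action is well defined and multiplicative on the continued module, given the residue channels and the shadow quotient. The strategy is to work at a fixed filtration order, where all sums are finite, and on one compact pole-free strip. We then argue that $\rho_2$ preserves the closed Knapp--Stein graph subspace, so it descends to the quotient, and that it commutes with restriction maps, so it passes to the inverse limit. The algebraic identities above then hold degree by degree.
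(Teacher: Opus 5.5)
Your proof is correct, but it is organized differently from the paper's.

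\textbf{The paper's route.} The paper forms the block operator $\widehat\rho_X(\Phi)=\begin{pmatrix}\rho_2(\Phi)&\mathfrak m_X(\Phi)\\0&0\end{pmatrix}$. It uses the Lie one-cocycle identity of theorem~\ref{thm:hard-current-one-cocycle} to see that $\widehat\rho_X$ is a representation, then exponentiates it. The upper-right entry of the exponential is $\int_0^1ds\,e^{s\rho_2(\Phi)}\mathfrak m_X(\Phi)$, and the group cocycle law is read off from block-matrix multiplication. The primitive $\kappa_X$ enters only at the end, to evaluate that integral in closed form.

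\textbf{Your route.} You use the primitive from the start. This makes $\mathfrak M_X(g)=\kappa_X-g\cdot\kappa_X$ a group coboundary, so the cocycle law is a one-line telescoping. You need only $\rho_2$, not the extended $\widehat\rho_X$, to integrate to a representation of the BCH group. The affine interpretation $A_g(U)=g\cdot(U-\kappa_X)+\kappa_X$ then appears directly as conjugation by translation to the origin $\kappa_X$.

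\textbf{What each buys.} The paper's argument does not depend on exactness. It shows that $\int_0^1 e^{s\rho_2(\Phi)}\mathfrak m_X(\Phi)\,ds$ is a group cocycle for any Lie one-cocycle, and it produces the affine action as an honest representation $e^{\widehat\rho_X}$. Your argument is shorter and more transparent, but it leans entirely on proposition~\ref{prop:canonical-primitive-relative-splitting}.

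\textbf{Two small points.}
\begin{itemize}
\item Completeness of the filtered subalgebra is a hypothesis of the statement, not something you need to argue.
\item For $e^{\rho_2(\Phi)}$ to converge, ``preserves the filtration'' must mean compatibility of degrees, $\rho_2(F^i\mathfrak h_{\rm hard})F^j\mathcal M_2\subset F^{i+j}\mathcal M_2$ with $\Phi$ in positive degree, not mere invariance.
\end{itemize}
The analytic descent through the Knapp--Stein quotient, which you flag as the main obstacle, is not addressed in the paper. The statement is formal, and $\rho_2$ is given as the action on $\mathcal M_2\simeq F_2/F_1$ induced by the commutator.
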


\begin{proof}
The block operator
\begin{equation}
\label{eq:affine-block-representation}
  \widehat\rho_X(\Phi)
  =\begin{pmatrix}
     \rho_2(\Phi)&\mathfrak m_X(\Phi)\\
     0&0
   \end{pmatrix}
\end{equation}
is a Lie-algebra representation because eq.~\eqref{eq:hard-current-one-cocycle} gives
\begin{equation}
\label{eq:affine-block-commutator}
  [\widehat\rho_X(\Phi),\widehat\rho_X(\Psi)]
  =\widehat\rho_X([\Phi,\Psi]_\star) \, .
\end{equation}
Its formal exponential is
\begin{equation}
\label{eq:affine-block-exponential}
  e^{\widehat\rho_X(\Phi)}
  =\begin{pmatrix}
      e^{\rho_2(\Phi)}&
      \displaystyle\int_0^1ds\,e^{s\rho_2(\Phi)}\mathfrak m_X(\Phi)\\
      0&1
    \end{pmatrix}.
\end{equation}
Equation~\eqref{eq:m-is-relative-coboundary} evaluates the integral as
$\kappa_X-e^{\rho_2(\Phi)}\kappa_X$.  Multiplication of the block matrices gives \eqref{eq:group-one-cocycle}.
\end{proof}

\begin{proposition}[Exponentiated ordered-pair action]
\label{prop:pairwise-exponentiation}
On a fixed ordered pair, suppose
$K_X^{ab}=L_X^{ab}T_a^{-1}$ with $L_X^{ab}$ independent of $\Delta_a$.  For a scalar Mellin label $f_a(\Delta_a)$ and every $n\geq1$,
\begin{equation}
\label{eq:nth-pairwise-commutator}
  \operatorname{ad}_{K_X^{ab}}^n f_a
  =(L_X^{ab})^n(\delta_a^-)^nf_a\,T_a^{-n},
  \qquad
  \delta_a^-f_a=f_a(\Delta_a-1)-f_a(\Delta_a).
\end{equation}
Consequently, as a formal series in $\tau$ in the fixed-pair finite-part algebra,
\begin{equation}
\label{eq:pairwise-conjugation-series}
  e^{\tau K_X^{ab}}f_a e^{-\tau K_X^{ab}}
  =\sum_{n=0}^{\infty}\frac{\tau^n}{n!}
   (L_X^{ab})^n(\delta_a^-)^nf_a\,T_a^{-n}.
\end{equation}
For the exponential Mellin label $f_{a,s}=e^{s\Delta_a}$ this series resums to
\begin{equation}
\label{eq:pairwise-exponential-label}
  e^{\tau K_X^{ab}}e^{s\Delta_a}e^{-\tau K_X^{ab}}
  =e^{s\Delta_a}
   \exp\!\left[\tau L_X^{ab}(e^{-s}-1)T_a^{-1}\right].
\end{equation}
\end{proposition}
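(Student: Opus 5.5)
The proof is an induction on $n$ driven by the commutation rule \eqref{eq:shift-commutation}. The case $n=0$ of \eqref{eq:nth-pairwise-commutator} is trivial. For $n\geq1$ we propose the inductive hypothesis
$\operatorname{ad}^{n}_{K}f=L^n(\delta^-)^nf\,T^{-n}$, where we drop the pair indices and write $K=K_X^{ab}$, $L=L_X^{ab}$, $T=T_a$. The base case $n=1$ is exactly the computation already used in \eqref{eq:backward-difference-Phi}. Since $L$ is independent of $\Delta_a$, it commutes with every function of $\Delta_a$ and with $T^{\pm1}$. Hence
\[
  [LT^{-1},f]=L\,(T^{-1}f-fT^{-1})=L\,(\delta^-f)\,T^{-1}.
\]

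For the inductive step, set $g=(\delta^-)^nf$ and compute $[LT^{-1},L^ng\,T^{-n}]$. Again $L$ passes through everything, and $T^{-1}$ commutes with $T^{-n}$. This gives
\[
  L^{n+1}\bigl(T^{-1}gT^{-n}-gT^{-n-1}\bigr)=L^{n+1}\bigl(g(\Delta_a-1)-g(\Delta_a)\bigr)T^{-n-1},
\]
which equals $L^{n+1}(\delta^-)^{n+1}f\,T^{-(n+1)}$. One point must be recorded here: ``independent of $\Delta_a$'' has to mean that $L$ commutes with both $T_a$ and multiplication by functions of $\Delta_a$. On a fixed ordered pair, the kernel \eqref{eq:ordered-pair-kernel-section3} depends only on the celestial coordinates and charges, so this holds. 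The inductive step is the only place where this commutation is used.

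The conjugation series \eqref{eq:pairwise-conjugation-series} follows from the formal identity $e^{\tau K}fe^{-\tau K}=\sum_n\frac{\tau^n}{n!}\operatorname{ad}_K^nf$. This identity holds in any associative algebra completed in the formal variable $\tau$, here the fixed-pair finite-part algebra. One checks it by differentiating in $\tau$ or by comparing coefficients, after which one substitutes \eqref{eq:nth-pairwise-commutator}.

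For the exponential label, $\delta^-e^{s\Delta}=(e^{-s}-1)e^{s\Delta}$, so $(\delta^-)^ne^{s\Delta}=(e^{-s}-1)^ne^{s\Delta}$. The series becomes
\[
  e^{s\Delta_a}\sum_n\frac{\tau^n}{n!}L^n(e^{-s}-1)^nT^{-n}.
\]
The scalars $L$, $(e^{-s}-1)$ and $T^{-1}$ mutually commute, so the sum is $\exp[\tau L(e^{-s}-1)T^{-1}]$. This gives \eqref{eq:pairwise-exponential-label}.

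The main obstacle is bookkeeping rather than analysis. We must make sure that placing $e^{s\Delta_a}$ to the left of the exponential is consistent with the operator ordering used in the induction, where functions of $\Delta_a$ always sit to the left of the shifts. We must also state that every identity is an equality of formal power series in $\tau$, with convergence not asserted.
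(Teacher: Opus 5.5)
Your proof is correct and follows the same route as the paper. The paper also inducts on $n$ from the shift relation $T_a^{-1}f_a=f_a(\Delta_a-1)T_a^{-1}$, then uses the formal series $e^{\tau K}fe^{-\tau K}=\sum_n\tau^n\operatorname{ad}_K^nf/n!$ together with $(\delta_a^-)^ne^{s\Delta_a}=(e^{-s}-1)^ne^{s\Delta_a}$. You also state explicitly that $L_X^{ab}$ must commute with $T_a^{\pm1}$ and with functions of $\Delta_a$, which is the hypothesis the paper's one-line inductive step uses without saying so.
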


\begin{proof}
For $n=1$, eq.~\eqref{eq:nth-pairwise-commutator} is the shift relation \eqref{eq:backward-difference-Phi}.  If it holds at order $n$, commuting once more with $L_X^{ab}T_a^{-1}$ replaces $(\delta_a^-)^nf_a$ by $(\delta_a^-)^{n+1}f_a$ and adds one factor of $L_X^{ab}T_a^{-1}$.  Induction proves the first formula, and the remaining equations follow from the exponential series and $(\delta_a^-)^ne^{s\Delta_a}=(e^{-s}-1)^ne^{s\Delta_a}$.
\end{proof}

To compare directly with the long-range exponential, factor the leading soft pole from the ordered-pair kernel,
\begin{equation}
\label{eq:reduced-fixed-leg-kernel}
  K_{w,\bar w}^{ab}
  =\frac{Q_a}{w-z_a}\,\widehat K_{ab}(\bar w),
  \qquad
  \widehat K_{ab}(\bar w)
  =\eta_aQ_a e_b
   \frac{\bar w-\bar z_b}{\bar z_a-\bar z_b}T_a^{-1},
\end{equation}
and set
\begin{equation}
\label{eq:reduced-fixed-leg-sum}
  \widehat K_a(\bar w)=\sum_{b\ne a}\widehat K_{ab}(\bar w)
  =\widehat L_a(\bar w)T_a^{-1}.
\end{equation}

\begin{proposition}[Scalar-leg matching with the long-range exponential]
\label{prop:ckp-coefficient-matching}
Consider scalar hard legs and cross them to the outgoing sheet, so that $\eta_i=1$ and $e_i=Q_i$.  Let $\lambda_{\rm em}=-\gamma_{\rm em}$ be the loop coefficient of \cite{Choi:2026longrange}; the unindexed $e$ denotes the gauge coupling, while $e_i$ denotes a signed leg charge elsewhere in this paper.  After the leading soft pole is factored, the scalar-QED long-range tower of that work is generated by the reduced fixed-leg operator in \eqref{eq:reduced-fixed-leg-sum}.  In our notation the complete symmetry-governed tower is the long-range exponential of \cite{Choi:2026longrange},
\begin{equation}
\label{eq:ckp-exponential-matching}
  \left\langle
  \sum_{\ell=0}^{\infty}
  \mathcal O^{\ell\text{-loop}}_{\ell-1,+1}(w,\bar w)
  \prod_i\mathcal O_i
  \right\rangle_{\rm univ}
  =\sqrt2 e\sum_a\frac{Q_a}{w-z_a}
   \exp\!\bigl(\lambda_{\rm em}\widehat K_a(\bar w)\bigr)C_n.
\end{equation}
The coefficient at loop order $\ell$ is
\begin{equation}
\label{eq:ckp-loop-coefficient-matching}
  \frac{\sqrt2 e}{\ell!}
  \sum_a\frac{Q_a}{w-z_a}
  \bigl(\lambda_{\rm em}\widehat K_a(\bar w)\bigr)^\ell C_n.
\end{equation}
For every scalar Mellin label $f_a(\Delta_a)$, the corresponding finite-energy action on the hard-current module is
\begin{equation}
\label{eq:ckp-hard-current-lift}
  e^{\lambda_{\rm em}\widehat K_a}
  f_a(\Delta_a)
  e^{-\lambda_{\rm em}\widehat K_a}
  =\sum_{n=0}^{\infty}\frac{\lambda_{\rm em}^n}{n!}
   \widehat L_a^n(\delta_a^-)^nf_a\,T_a^{-n}.
\end{equation}
Undoing the crossing restores the signed spectator charge $e_b$ in \eqref{eq:reduced-fixed-leg-kernel} and leaves the operator identity unchanged.
\end{proposition}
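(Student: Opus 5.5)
The proof has two parts: matching the tower coefficient by coefficient, and then lifting it to the hard-current module using proposition~\ref{prop:pairwise-exponentiation}.

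\textbf{Step 1: reduced operator on the outgoing sheet.} Setting $\eta_i=1$ and $e_i=Q_i$ in the ordered-pair kernel \eqref{eq:ordered-pair-kernel-section3} and summing over spectators $b\ne a$ gives
\begin{equation*}
  \sum_{b\ne a}K^{ab}_{w,\bar w}=\frac{Q_a}{w-z_a}\,\widehat L_a(\bar w)\,T_a^{-1},
  \qquad
  \widehat L_a(\bar w)=\sum_{b\ne a}Q_aQ_b\,\frac{\bar w-\bar z_b}{\bar z_a-\bar z_b}.
\end{equation*}
This identifies \eqref{eq:reduced-fixed-leg-sum} with the one-loop ($\ell=1$) coefficient of \eqref{eq:ckp-loop-coefficient-matching}.

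\textbf{Step 2: normalization matching with \cite{Choi:2026longrange}.} I would take the scalar-QED long-range exponential of \cite{Choi:2026longrange}: a leading soft factor $\sqrt2 e\,Q_a/(w-z_a)$ times an exponential of pairwise terms carrying loop coefficient $\gamma_{\rm em}$. I would rewrite each pairwise factor in the present spinor conventions, using the polarization \eqref{eq:positive-helicity-polarization} and the Mellin shift $T_a^{-1}$ from the inverse hard energy. The aim is to show that the exponent at leg $a$ is exactly $\lambda_{\rm em}\widehat K_a(\bar w)$ with $\lambda_{\rm em}=-\gamma_{\rm em}$.

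The consistency check is that the $\ell=1$ term reproduces the normalized Ward identity \eqref{eq:S0-Ward-pairwise}, up to the factor $\sqrt2 e\lambda_{\rm em}$ stripped off by normalization. Once that holds, $\widehat L_a$ is independent of $\Delta_a$ and the $T_a^{-1}$ on a fixed leg commute with each other. Hence the $\ell$-th power $\widehat K_a^\ell=\widehat L_a^\ell T_a^{-\ell}$ is unambiguous, and expanding the exponential yields \eqref{eq:ckp-loop-coefficient-matching} term by term.

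I expect this step to be the main obstacle. It requires aligning sign conventions: all-outgoing versus in/out, the sign of $\lambda_{\rm em}$, and the factor $\sqrt2$ from the polarization normalization. I would fix these by comparing the $\ell=1$ term directly with the momentum-space logarithmic soft factor of \cite{Sahoo:2018lxl}, transcribed through \eqref{eq:S0-Ward-pairwise}, and then check that the $\ell=2$ term, a double Mellin shift, is consistent with the exponentiation claimed in \cite{Choi:2026longrange}.

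\textbf{Step 3: the hard-current lift \eqref{eq:ckp-hard-current-lift}.} I would apply proposition~\ref{prop:pairwise-exponentiation} with $L_X^{ab}$ replaced by the full fixed-leg sum $\widehat L_a$, which is independent of $\Delta_a$. The induction on $n$ goes through unchanged:
\begin{equation*}
  \operatorname{ad}_{\widehat K_a}^n f_a=\widehat L_a^n(\delta_a^-)^nf_a\,T_a^{-n},
\end{equation*}
because $\widehat L_a$ commutes with $f_a(\Delta_a)$ and with $T_a^{-1}$. Summing the formal series $\sum_n \lambda_{\rm em}^n\,\operatorname{ad}^n/n!$ then gives \eqref{eq:ckp-hard-current-lift}.

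\textbf{Step 4: undoing the crossing.} Crossing back replaces $Q_b$ by $e_b=\eta_bQ_b$ in $\widehat L_a$ and inserts $\eta_a$. This only changes the $\Delta_a$-independent prefactor, so the commutator identities of Steps 1 and 3 are unaffected, which is the last sentence of the statement.
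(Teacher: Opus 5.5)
Your proposal is correct and follows essentially the same route as the paper. You factor the leading pole $\sqrt2 eQ_a/(w-z_a)$ and identify the reduced CKP operator with $\lambda_{\rm em}\widehat K_a$ (using $e^{-\partial_{\Delta_a}}=T_a^{-1}$). You then use the $\Delta_a$-independence of $\widehat L_a$ to make $\widehat K_a^\ell=\widehat L_a^\ell T_a^{-\ell}$ unambiguous, and invoke proposition~\ref{prop:pairwise-exponentiation} with $L_X^{ab}$ replaced by $\widehat L_a$.

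The one difference is in your Step 2. The paper fixes the normalization there by an explicit convention dictionary ($q_{\rm CKP}=q/2$, $-\gamma_{\rm em}=\lambda_{\rm em}$) together with lemma~\ref{lem:appB-spinor-to-celestial}. You instead propose matching the $\ell=1$ term to the Sahoo--Sen factor and checking $\ell=2$.
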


\begin{proof}
The convention dictionary is
\begin{center}
\small
\renewcommand{\arraystretch}{1.12}
\begin{tabular}{@{}c c c@{}}
\hline
CKP notation & present notation & role \\
\hline
$q_{\rm CKP}$ & $q/2$ & null-vector normalization \\
$e^{-\partial_{\Delta_a}}$ & $T_a^{-1}$ & Mellin shift of the emitting scalar leg \\
$-\gamma_{\rm em}$ & $\lambda_{\rm em}$ & long-range loop coefficient \\
\hline
\end{tabular}
\end{center}
For fixed physical momentum the null-vector rescaling only changes the energy coordinate, so the comparison is most transparent after the leading soft pole has been factored.  Their scalar leading factor becomes
\begin{equation}
  S_{-1,a}^{\rm em,+}=\frac{\sqrt2 eQ_a}{w-z_a}.
\end{equation}
Their reduced scalar operator becomes
\begin{equation}
\label{eq:ckp-reduced-operator-dictionary}
  \frac{S_{0,a}^{+}}{S_{-1,a}^{+}}k_{\rm em}
  =-\gamma_{\rm em}\sum_{b\ne a}
   Q_aQ_b\frac{\bar w-\bar z_b}{\bar z_a-\bar z_b}
   e^{-\partial_{\Delta_a}}.
\end{equation}
The shift $e^{-\partial_{\Delta_a}}$ is our $T_a^{-1}$, so the right-hand side of \eqref{eq:ckp-reduced-operator-dictionary} is $\lambda_{\rm em}\widehat K_a$.  The covariant scalar expression of \cite{Choi:2026longrange} reduces to the same operator by lemma~\ref{lem:appB-spinor-to-celestial}; the factor $1/\omega_a$ in \eqref{eq:appB-celestial-reduction-lemma} becomes $T_a^{-1}$.  The ratio left after removal of the leading pole is independent of the normalization chosen for the null vector.

The symmetry-governed $\ell$-loop coefficient in \cite{Choi:2026longrange} is the $\ell$-th power of this reduced operator divided by $\ell!$, and summing these coefficients gives the long-range exponential.  The summands of $\widehat K_a=\widehat L_aT_a^{-1}$ commute for fixed emitting leg because $\widehat L_a$ is independent of $\Delta_a$.  Equations~\eqref{eq:ckp-exponential-matching} and \eqref{eq:ckp-loop-coefficient-matching} follow, and proposition~\ref{prop:pairwise-exponentiation} gives \eqref{eq:ckp-hard-current-lift}.
\end{proof}

The statement concerns scalar hard legs.  For spinning matter the angular-momentum operator contains a spin term and the reduced kernel must be enlarged accordingly.  The normalization in \eqref{eq:ckp-exponential-matching} restores the overall coupling stripped from the Ward kernel used elsewhere in the paper; the comparison is restricted to the universal symmetry-governed tower.

The cocycle identity gives a recursion for the primary coefficients.  Let 
\(\rho^{ab}_{\nu,\ell}(\Phi)\) denote the hard-current action in a continuous two-particle channel.  Projecting eq.~\eqref{eq:hard-current-one-cocycle} gives
\begin{equation}
\begin{aligned}
  \mathcal C^{ab}_{[\Phi,\Psi]_{\star};\nu,\ell}
  ={}&\rho^{ab}_{\nu,\ell}(\Phi)
      \mathcal C^{ab}_{\Psi;\nu,\ell}
      -\rho^{ab}_{\nu,\ell}(\Psi)
      \mathcal C^{ab}_{\Phi;\nu,\ell} \, ,
  \label{eq:hard-current-coefficient-recursion}
\end{aligned}
\end{equation}
with the same formula in each residue channel.  Thus the coefficients for every iterated hard-current commutator are fixed by their values on a generating set of \(\mathfrak h_{\rm hard}\).

\subsection{Pairwise residues and generation}
\label{subsec:nonabsorption-section4}

A local one-particle redefinition has the form
\begin{equation}
  H[\Phi]\longmapsto H[\Phi]+t\,H[R(\Phi)] \, ,
  \label{eq:one-particle-counterterm-hard}
\end{equation}
where the formal parameter \(t\) records support-extension degree.  It changes the OPE by an element of \(F_1\), leaving the second-support symbol unchanged.  In theorem~\ref{thm:nonabsorption-section4}, generic charged data means that there is an ordered pair \((a,b)\) with \(\eta_aQ_a^2e_b\neq0\), the soft point is separated from \(z_a\), and \(\delta_a^-\Phi_a\) is not identically zero.  Configurations on which all pairwise residues cancel form the excluded proper algebraic locus.

\begin{theorem}[Pairwise residue criterion]
\label{thm:nonabsorption-section4}
Assume that \(\nabla^-\Phi\neq0\) on at least one charged species.  For generic charged data in the preceding sense,
\begin{equation}
  \mathfrak m^0_{w,\bar w}(\Phi)
  \in F_2/F_1\simeq\mathcal M_2
\end{equation}
is nonzero.  No local one-particle redefinition of \(H[\Phi]\) removes \(\mathbb M^0_{w,\bar w}[\Phi]\).
\end{theorem}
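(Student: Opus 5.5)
The proof will go through the pairwise residue map on the hard-pair diagonal. This map kills \(F_1\), so it factors through \(\sigma_2\); it therefore suffices to exhibit one ordered pair on which the residue of \(\mathbb M^0_{w,\bar w}[\Phi]\) is nonzero. I will use the explicit representative of theorem~\ref{thm:dipole-hard-ope}, eq.~\eqref{eq:M-Phi-explicit}, with the pairwise representative \(s^0_{\mu,1}=0\). By lemma~\ref{lem:one-particle-scheme-invariance}, any other scheme changes \(\mathbb M^0_{w,\bar w}[\Phi]\) only by an element of \(F_1\), so its class is scheme independent.

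First, fix an ordered pair \((a,b)\) as in the definition of generic charged data: \(\eta_aQ_a^2e_b\neq0\), \(w\neq z_a\), and \(\delta_a^-\Phi_a\not\equiv0\). The last condition follows from \(\nabla^-\Phi\neq0\) on the species of leg \(a\). Multiply the \((a,b)\) summand of eq.~\eqref{eq:M-Phi-explicit} by \((w-z_a)\) and take \(\operatorname{Res}_{\bar z_a=\bar z_b}\), as in eq.~\eqref{eq:pairwise-residue-section3}. This gives
\[
\eta_aQ_a^2e_b(\bar w-\bar z_b)(\delta_a^-\Phi_a)T_a^{-1}C_n .
\]

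The other summands need to be sorted by which diagonal they are singular on.
\begin{itemize}
\item Summands \((c,d)\) with \(\{c,d\}\neq\{a,b\}\) are regular on \(\bar z_a=\bar z_b\), because \(\bar z_c\neq\bar z_d\) there; they contribute no residue.
\item The reversed pair \((b,a)\) is singular on the same diagonal. Its residue carries the prefactor \((w-z_b)^{-1}\) and the shift \(T_b^{-1}\) instead of \(T_a^{-1}\).
\end{itemize}
Cancellation would therefore require an identity between \((\delta_a^-\Phi_a)T_a^{-1}C_n\) and \((\delta_b^-\Phi_b)T_b^{-1}C_n\) with rational coefficients in \(w,\bar w\). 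I will rule this out by varying the soft point. The \((a,b)\) term has a pole at \(w=z_a\), and at fixed \(\bar z_a=\bar z_b\) with \(z_a\neq z_b\) the \((b,a)\) term is regular there. Since \(w\) is a free variable with \(w\neq z_a\), the two terms are linearly independent as functions of \(w\). Hence the total residue vanishes only on the proper algebraic locus where \(\bar w=\bar z_b\), where \(\delta_a^-\Phi_a\) vanishes on the support of the test data, or where \(T_a^{-1}C_n\) is paired to zero. This is exactly the excluded locus.

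The step I expect to be the main obstacle is the claim that the residue functional really annihilates \(F_1\) and is well defined on the analytically continued classes. Two points need care. One-particle kernels, including Mellin shifts and local counterterms, are supported on \(\Delta_1^{\rm cel}\) and carry no conormal singularity on a two-point diagonal. Also, \(T_a^{-1}\) moves \(\operatorname{Re}\Delta_a\) from \(1\) to \(0\). For the second point I will work stripwise with the analytic wave packets of section~\ref{subsec:two-particle-primaries}. There \(T_a^{-1}\) is an isomorphism, so a nonzero smeared residue gives a nonzero class, and proposition~\ref{prop:pairwise-plancherel-resolution} then gives nonvanishing in \(\mathcal M_2\) through eq.~\eqref{eq:F2-F1-M2}.

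The non-removability statement then follows directly. A redefinition of the form eq.~\eqref{eq:one-particle-counterterm-hard} shifts the commutator by \([\mathsf S^0_\mu,tH[R(\Phi)]]\). Its second-support part is \(t\,\mathfrak m^0(R(\Phi))\), and the shift \(tH[\mathcal L^0R(\Phi)]\) lies in \(F_1\). So the redefinition can at most add another pairwise class; it cannot turn the class into an \(F_1\) element. More directly, the counterterms themselves lie in \(F_1\) and have zero residue, while \(\mathbb M^0\) has nonzero residue. So no one-particle representative equals \(\mathbb M^0\) modulo \(F_1\).
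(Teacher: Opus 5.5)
Your overall route is the paper's. You take the Cauchy residue of the \((a,b)\) summand of \eqref{eq:M-Phi-explicit} on the pair diagonal, use that this residue annihilates \(F_1\), and rule out cancellation against the other summands. The step that fails is your treatment of the reversed pair \((b,a)\).

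In the paper's setting \(\bar z\) is the complex conjugate of \(z\): the null vector \eqref{eq:q-vector} is real, and the pair diagonal is the real codimension-two locus of appendix~\ref{appC:conormal-wavefront}. So \(\bar z_a=\bar z_b\) forces \(z_a=z_b\), and the configuration ``\(\bar z_a=\bar z_b\) with \(z_a\neq z_b\)'' does not exist. After the residue is taken, the \((a,b)\) and \((b,a)\) contributions carry the same soft factor. The total residue on the unordered diagonal is
\[
\frac{\bar w-\bar z_b}{w-z_a}\Bigl[\eta_aQ_a^2e_b\,(\delta_a^-\Phi_a)T_a^{-1}-\eta_bQ_b^2e_a\,(\delta_b^-\Phi_b)T_b^{-1}\Bigr]C_n\Big|_{z_a=z_b},
\]
compare \eqref{eq:section7-residue-12}. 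Varying \(w\) therefore separates nothing. Your argument works only if \(z\) and \(\bar z\) are treated as independent, which is not the framework here. Any separation has to come from the two different Mellin shifts, and that can genuinely fail.

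Here is a concrete case. Take two identical outgoing species, so \(\eta_aQ_a^2e_b=\eta_bQ_b^2e_a\). Take a hard function whose collinear restriction depends on \(\omega_a+\omega_b\) only. The contact block \(C_4^{\rm cont}\) is an example, since its momentum-conserving delta function sees only \((\omega_a+\omega_b)q_a\). Then \(C_n|_{z_a=z_b}\propto\Gamma(\Delta_a)\Gamma(\Delta_b)\Gamma(\Delta_a+\Delta_b)^{-1}G(\Delta_a+\Delta_b)\). For the polynomial label \(\Phi=-\Delta(\Delta-1)/2\) one has \(\delta^-\Phi=\Delta-1\), and the bracket vanishes identically. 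Yet \(\bar w\neq\bar z_b\), \(\delta_a^-\Phi_a\neq0\) and \(T_a^{-1}C_n\neq0\). So your description of the vanishing locus is wrong.

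The paper does not derive non-cancellation on the unordered diagonal from kinematics. Its proof separates the \((a,b)\) pole from poles on other diagonals by varying \(\bar z_b\); this is your regularity remark for \(\{c,d\}\neq\{a,b\}\). It does not separate \((a,b)\) from \((b,a)\) kinematically either. The residual cancellation is excluded by the definition of generic charged data, which removes the ``configurations on which all pairwise residues cancel.''

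You should close the step the same way. Alternatively, if a charged species with \(\nabla^-\Phi=0\) is available, choose the spectator \(b\) there, so that the \((b,a)\) term is absent. Note that the theorem needs only one diagonal with nonzero residue. A cancellation on \(\{a,b\}\) therefore does not contradict the statement; it only breaks your argument that the particular pair you fixed always works.

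The remaining ingredients agree with the paper: residues annihilate \(F_1\), the stripwise handling of \(T_a^{-1}\) is sound, and non-removability follows because one-particle counterterms carry no pair residue.
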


\begin{proof}
Choose an emitting leg \(a\) for which \(\nabla_a^-\Phi_a\neq0\) and a charged spectator \(b\) with \(e_b\neq0\).  The residue of the \((a,b)\) summand of \eqref{eq:M-Phi-explicit} on \(\bar z_a=\bar z_b\) is
\begin{equation}
  \frac{\eta_aQ_a^2e_b}{w-z_a}
  (\bar w-\bar z_b)\,\nabla_a^-\Phi_a \, .
  \label{eq:nonabsorption-pair-term}
\end{equation}
It is nonzero for generic data.  Such a pole cannot be produced by an element of \(F_1\).  Varying \(\bar z_b\) separates it from the poles of all other ordered pairs, so cancellation between different spectators cannot hold identically.  The class in \(F_2/F_1\) is therefore nonzero and is unchanged by \eqref{eq:one-particle-counterterm-hard}.
\end{proof}

The generation statement uses both the profile algebra and the analytic Mellin test space.  For a charged ordered pair \((a,b)\), let \(\mathscr C_{ab}^{-1}\) be the closure, modulo smooth and one-particle kernels, of Cauchy conormal symbols
\begin{equation}
  \frac{A(z_a,\bar z_a,z_b,\bar z_b)}{\bar z_a-\bar z_b},
  \qquad A\in C_c^\infty(U_a\times U_b),
  \label{eq:cauchy-conormal-symbol-space}
\end{equation}
with \(U_a,U_b\) affine patches.  Let \(\mathscr W_{\rm an}\) be the nuclear Fr\'echet space of Mellin wave packets that are holomorphic in a strip containing \(0\leq\operatorname{Re}\Delta\leq1\) and Schwartz on every closed vertical substrip.

\begin{lemma}[Angular cyclicity]
\label{lem:angular-cyclicity-section4}
Assume \(\eta_aQ_a^2e_b\neq0\).  The closed span of the angular factors in the logarithmic kernel, under soft-point smearing and multiplication by profiles from \(\mathscr P(U_a)\widehat\otimes\mathscr P(U_b)\), is \(\mathscr C_{ab}^{-1}\).
\end{lemma}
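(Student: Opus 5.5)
We show that the closed span contains every Cauchy conormal symbol $A/(\bar z_a-\bar z_b)$ with $A\in C_c^\infty(U_a\times U_b)$; the reverse inclusion is immediate. The angular factor of $K^{ab}_{w,\bar w}$ in \eqref{eq:ordered-pair-kernel-section3} is
\[
  \frac{\eta_aQ_a^2e_b}{w-z_a}\,\frac{\bar w-\bar z_b}{\bar z_a-\bar z_b}.
\]
Since $\eta_aQ_a^2e_b\neq0$, this prefactor can be divided out. Each such factor is a Cauchy conormal symbol once it is cut off by profiles. Indeed, $(w-z_a)^{-1}$ is smooth on $U_a$ when $w\notin U_a$, and $\bar w-\bar z_b$ is a polynomial. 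Hence the closed span lies in $\mathscr C^{-1}_{ab}$, and the work is the density statement.

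\textbf{Step 1: soft-point smearing removes the $w$-dependence.} Pair the kernel with a test function $g(w,\bar w)$ supported away from $U_a$ and from the hard insertions. This gives
\[
  \frac{G_1(z_a)-\bar z_b\,G_0(z_a)}{\bar z_a-\bar z_b},
  \qquad
  G_k(z_a)=\int d^2w\,\frac{\bar w^{\,1-k}\,g(w,\bar w)}{w-z_a}.
\]
Here $G_0$ and $G_1$ are holomorphic on $U_a$ and can be chosen independently. Take $g=\bar\partial$-type derivatives of bump functions with moments adjusted; by Runge/Cauchy–Pompeiu, such functions approximate any holomorphic function on a neighbourhood of $\overline{U_a}$. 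In particular, we can arrange $G_0\equiv0$ and $G_1$ nonvanishing near any chosen point, or the reverse. This yields symbols $h(z_a)/(\bar z_a-\bar z_b)$, and also $\bar z_b\,h(z_a)/(\bar z_a-\bar z_b)$.

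\textbf{Step 2: profiles generate the numerators.} Multiply by $\varphi_a\otimes\varphi_b\in\mathscr P(U_a)\widehat\otimes\mathscr P(U_b)$ with $\varphi_a,\varphi_b\in C_c^\infty$ and $h$ nonvanishing on $\operatorname{supp}\varphi_a$. This gives $\varphi_a h\,\varphi_b/(\bar z_a-\bar z_b)$, so every separable numerator $\psi_a(z_a)\psi_b(z_b)$ with compact support is reached; division by $h$ is absorbed into $\psi_a$. Finite sums of separable functions are dense in $C_c^\infty(U_a\times U_b)$ in the nuclear topology, because $C_c^\infty(U_a)\widehat\otimes C_c^\infty(U_b)\simeq C_c^\infty(U_a\times U_b)$.

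\textbf{Step 3: continuity and passage to the closure.} We must check that $A\mapsto A/(\bar z_a-\bar z_b)$ is continuous from $C_c^\infty$ into the topology defining $\mathscr C^{-1}_{ab}$. The closure there is taken modulo smooth and one-particle kernels, and the map is a principal-value convolution with a locally integrable kernel. By the convolution estimate, it is continuous into distributions and into the conormal class. Density of separable numerators then gives the full $\mathscr C_{ab}^{-1}$.

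\textbf{Main obstacle.} The hardest point is Step 1 near the diagonal and on the support of the profiles. The soft point must be kept off $U_a$ while $G_k$ is still made arbitrary on $U_a$; this is why the Runge approximation is needed. We must also ensure that the spurious $\bar z_b$-weighted term does not obstruct reaching a pure $1/(\bar z_a-\bar z_b)$. For the latter, use the identity
\[
  \bar z_b=\bar z_a-(\bar z_a-\bar z_b).
\]
The difference term is smooth, so it is discarded modulo smooth kernels, and the $\bar z_a$ factor is absorbed into the $U_a$ profile.
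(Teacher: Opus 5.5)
Your argument is correct, and its core is the paper's: divide out a smooth, nowhere-vanishing prefactor, then use $C_c^\infty(U_a)\widehat\otimes C_c^\infty(U_b)\simeq C_c^\infty(U_a\times U_b)$ to approximate the remaining numerator by finite sums of decomposable profiles.

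The real difference is your Step~1. The paper never smears the soft point. It fixes a compact subset of $U_a\times U_b$ and a single $(w,\bar w)$ outside both of its coordinate projections. On that set the whole factor $c_{ab}=\eta_aQ_a^2e_b(\bar w-\bar z_b)/(w-z_a)$ is smooth and nonvanishing, so every $A$ is locally $c_{ab}B$ with $B$ smooth and compactly supported. A partition of unity then finishes the argument. Because $\bar w$ is chosen off the $b$-projection, $\bar w-\bar z_b$ is already invertible. So neither the Cauchy--Pompeiu/Runge construction of arbitrary holomorphic $G_k$ nor the rewriting $\bar z_b=\bar z_a-(\bar z_a-\bar z_b)$ modulo smooth kernels is needed; your \emph{main obstacle} is self-imposed. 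What your version does add is an explicit continuity check for $A\mapsto A/(\bar z_a-\bar z_b)$ and the easy reverse inclusion, both of which the paper leaves implicit.

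A few small repairs are needed.
\begin{itemize}
\item With your definition $G_k=\int d^2w\,\bar w^{1-k}g/(w-z_a)$, the smeared kernel is $(G_0-\bar z_bG_1)/(\bar z_a-\bar z_b)$, not $(G_1-\bar z_bG_0)/(\bar z_a-\bar z_b)$.
\item Since $U_a$ is an affine patch and may be all of $\mathbb C$, ``supported away from $U_a$'' should read ``away from the $a$-projection of $\operatorname{supp}A$''.
\item You need a partition of unity, or an $h$ nonvanishing on the whole projection, to pass from ``$h$ nonvanishing near a chosen point'' to arbitrary $\psi_a$.
\item $1/(\bar z_a-\bar z_b)$ is locally integrable in two real dimensions, so Step~3 is plain multiplication by an $L^1_{\rm loc}$ function, not a principal-value convolution.
\end{itemize}
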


\begin{proof}
Fix a compact subset of \(U_a\times U_b\) and choose a soft point \((w,\bar w)\) outside its two coordinate projections.  The factor
\begin{equation}
  c_{ab}(w,\bar w;z_a,\bar z_b)
  =\frac{\eta_aQ_a^2e_b}{w-z_a}(\bar w-\bar z_b)
\end{equation}
is smooth and nowhere zero on that compact set.  Each coefficient \(A\) in \eqref{eq:cauchy-conormal-symbol-space} can be written locally as \(c_{ab}B\).  Finite sums of decomposable profiles are dense because
\begin{equation}
  C_c^\infty(U_a)\widehat\otimes C_c^\infty(U_b)
  \simeq C_c^\infty(U_a\times U_b).
\end{equation}
A partition of unity completes the argument on the sphere.
\end{proof}

\begin{lemma}[Mellin cyclicity]
\label{lem:mellin-cyclicity-section4}
For \(\delta^-p(\Delta)=p(\Delta-1)-p(\Delta)\), the family
\begin{equation}
  \{(\delta^-p)T^{-1}F:
  p\in\mathbb C[\Delta],\ F\in\mathscr W_{\rm an}\}
\end{equation}
contains \(T^{-1}\mathscr W_{\rm an}\), and its closed span is \(T^{-1}\mathscr W_{\rm an}\).
\end{lemma}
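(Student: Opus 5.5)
The proof rests on one observation: the backward difference lowers polynomial degree by exactly one. Write \(\delta^-\) on \(\mathbb C[\Delta]\). For \(p(\Delta)=\Delta^{k+1}\) we have
\begin{equation}
  \delta^-\Delta^{k+1}=(\Delta-1)^{k+1}-\Delta^{k+1}=-(k+1)\Delta^k+(\text{lower order}) .
\end{equation}
So \(\delta^-:\mathbb C[\Delta]_{\le k+1}\to\mathbb C[\Delta]_{\le k}\) is surjective. This follows by induction on \(k\) from the triangular leading coefficient \(-(k+1)\neq0\). Equivalently, \(\delta^-:\mathbb C[\Delta]\to\mathbb C[\Delta]\) is onto, with kernel the constants.

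In particular the choice \(p(\Delta)=-\Delta\) gives \(\delta^-p=1\). Hence \((\delta^-p)T^{-1}F=T^{-1}F\) for every \(F\in\mathscr W_{\rm an}\), and this proves the containment \(T^{-1}\mathscr W_{\rm an}\subset\) family.

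For the reverse inclusion, I will show that every element \((\delta^-p)T^{-1}F\) already lies in \(T^{-1}\mathscr W_{\rm an}\). Use the shift relation \eqref{eq:shift-commutation}, read as operators acting on wave packets:
\begin{equation}
  (\delta^-p)(\Delta)\,T^{-1}F=T^{-1}\bigl[(\delta^-p)(\Delta+1)F\bigr].
\end{equation}
The function \(q(\Delta)=(\delta^-p)(\Delta+1)\) is a polynomial, and polynomial multiplication preserves \(\mathscr W_{\rm an}\). Indeed, the product is still holomorphic in the strip, and on each closed vertical substrip a polynomial has at most polynomial growth, which a Schwartz function absorbs. Thus \(qF\in\mathscr W_{\rm an}\), and the family is contained in \(T^{-1}\mathscr W_{\rm an}\). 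The two inclusions show the family equals \(T^{-1}\mathscr W_{\rm an}\) as a set, not merely up to closure.

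The remaining step is to check that \(T^{-1}\mathscr W_{\rm an}\) is closed, so that taking the closed span adds nothing. As stated in section~\ref{subsec:two-particle-primaries}, \(T^{-1}\) is a continuous isomorphism between the boundary-value test spaces. Its image is therefore a closed (Fréchet) subspace, isomorphic to \(\mathscr W_{\rm an}\) itself. I expect this closedness step to be the only point needing care, because shifting by \(-1\) moves the strip containing \(0\le\operatorname{Re}\Delta\le1\).

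To settle it, I will define \(T^{-1}\mathscr W_{\rm an}\) intrinsically as the wave packets on the shifted strip, with seminorms transported by \(T^{-1}\). With that definition, completeness and closedness are automatic, and the closed span equals the set itself.
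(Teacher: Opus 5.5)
Your proof is correct and follows the paper's argument. The choice \(p(\Delta)=-\Delta\) gives \(\delta^-p=1\) for the containment, and the reverse inclusion holds because polynomial multiplication and \(T^{-1}\) preserve the analytic test space. Your version is slightly sharper than the paper's one-line proof: the shift relation gives equality of the family with \(T^{-1}\mathscr W_{\rm an}\) as sets, and you state the closedness (via the transported topology) explicitly, whereas the surjectivity of \(\delta^-\) on \(\mathbb C[\Delta]\) is not needed.
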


\begin{proof}
For \(p(\Delta)=-\Delta\), one has \(\delta^-p=1\), so every shifted wave packet \(T^{-1}F\) occurs in the displayed family.  The reverse inclusion follows from continuity of polynomial multiplication and of \(T^{-1}\) on \(\mathscr W_{\rm an}\).
\end{proof}

\begin{lemma}[Triviality of the channel annihilator]
\label{lem:channel-annihilator-section4}
Let \(\Lambda\) be a continuous linear functional on the pairwise analytic test space.  Suppose that \(\Lambda\) vanishes on
\(\mathscr C_{ab}^{-1}\widehat\otimes T^{-1}\mathscr W_{\rm an}\) and on every crossed-pole residue vector.  Then \(\Lambda\) represents the zero functional on the Hausdorff shadow quotient \(\mathcal M_2\).
\end{lemma}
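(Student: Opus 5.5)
The plan is to reduce the statement to two density facts: the Plancherel description of the Hausdorff quotient from proposition~\ref{prop:pairwise-plancherel-resolution}, and the cyclicity lemmas~\ref{lem:angular-cyclicity-section4} and \ref{lem:mellin-cyclicity-section4}. We then argue by Hahn--Banach duality. First we fix a compact pole-free spectral strip and work with the stripwise transform of proposition~\ref{prop:appC-stripwise-plancherel}. On that strip, the pairwise analytic test space, taken modulo $F_1$ and smooth kernels, is topologically the completed tensor product of the angular Cauchy conormal space with the analytic Mellin space. The relevant factors are $\mathscr C_{ab}^{-1}$ in the angular variables, the shifted space $T^{-1}\mathscr W_{\rm an}$ in the emitting-leg Mellin variable (the spectator keeps $\mathscr W_{\rm an}$), and the finite-dimensional span of the crossed-pole residue projectors in \eqref{eq:analytic-two-particle-completeness}. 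The conormal assumption in proposition~\ref{prop:pairwise-plancherel-resolution} is exactly what guarantees that every second-support class has a representative of the Cauchy form \eqref{eq:cauchy-conormal-symbol-space}, multiplied by analytic Mellin data.

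Next, the main step is a density statement. Because $\mathscr W_{\rm an}$ is nuclear Fr\'echet and $\mathscr C_{ab}^{-1}$ is a closure of $C_c^\infty$-type spaces, the algebraic tensor product of these factors is dense in the completed tensor product. The shift $T^{-1}$ is a topological isomorphism between the two boundary strips, so $T^{-1}\mathscr W_{\rm an}\widehat\otimes\mathscr W_{\rm an}$ together with the residue vectors exhausts the continuous-plus-residue part of \eqref{eq:analytic-two-particle-completeness} on the strip. Hence $\Lambda$, which vanishes on a dense subspace of the continuous part and on every residue vector, vanishes on the whole pairwise analytic test space by continuity.

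We then pass to the quotient. The Knapp--Stein graph relations and the intrinsic null kernels form a closed subspace, and the quotient $\mathcal M_2$ is Hausdorff. A continuous functional that vanishes identically upstairs therefore induces the zero functional on the quotient. Some care is needed here: $\Lambda$ may be given only as a functional on the quotient, pulled back. In that case vanishing on the dense image of $\mathscr C_{ab}^{-1}\widehat\otimes T^{-1}\mathscr W_{\rm an}$ together with the residues already forces it to be zero, because the quotient map is continuous, open and surjective. Finally, the stripwise conclusions are compatible under restriction, so they pass to the inverse limit defining the global module. This last step uses assumption~\ref{ass:appC-meromorphic-continuation}, together with the fact that a functional continuous on an inverse limit factors through a finite stage.

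The expected obstacle is the identification, in the density step, of the second-support quotient on a strip with exactly $\mathscr C_{ab}^{-1}\widehat\otimes T^{-1}\mathscr W_{\rm an}$ plus residues. In particular, we must check that no additional continuous-channel component is missed when $\operatorname{Re}\Delta_a$ is shifted from $1$ to $0$. The first approach will be a contour-deformation argument: write the continued Plancherel transform as the principal-line transform composed with $T^{-1}$, and read off that its cokernel is precisely the finite span of the crossed-pole residues.
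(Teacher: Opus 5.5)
Your proposal is correct and follows essentially the paper's route. Density of the completed tensor product, with $T^{-1}$ an isomorphism between the boundary values, together with continuity makes $\Lambda$ vanish on the closure. Continued Plancherel completeness then leaves only the closed shadow-null relations, which die in the Hausdorff quotient, and the crossed-pole residue channels, which vanish by hypothesis. Your explicit stripwise-then-inverse-limit step is the same passage the paper delegates to appendix~\ref{appC:generation-M2}.
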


\begin{proof}
The tensor product in the hypothesis is dense by lemmas~\ref{lem:angular-cyclicity-section4} and \ref{lem:mellin-cyclicity-section4}.  Continuity extends the vanishing to its closure.  Analytic Plancherel completeness then leaves only the kernel and cokernel of the continued Knapp--Stein intertwiners.  These are the closed shadow-null relations and the explicit crossed-pole residue channels; the latter vanish by assumption.  Hence the induced functional on \(\mathcal M_2\) is zero.
\end{proof}

\begin{theorem}[Generation of the soft-accessible module]
\label{thm:soft-accessible-generation-section4}
Let \(\mathcal M_2^{\rm soft}\) be the closed submodule generated by the coefficients in \eqref{eq:M-Phi-M2-expansion} as the soft point, the charged ordered pair, polynomial Mellin labels, profiles in \(\mathscr P(U_a)\widehat\otimes\mathscr P(U_b)\), and the diagonal \(SL(2,\mathbb C)\) action vary.  If every species component has a pair with \(\eta_aQ_a^2e_b\neq0\), then
\begin{equation}
  \mathcal M_2^{\rm soft}=\mathcal M_2 \, ,
  \label{eq:soft-accessible-generation}
\end{equation}
where both sides include the discrete residue channels and are quotiented by the intrinsic shadow-null submodule.
\end{theorem}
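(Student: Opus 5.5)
The inclusion \(\mathcal M_2^{\rm soft}\subseteq\mathcal M_2\) is immediate, so the plan is to prove density by a Hahn--Banach annihilator argument. Each generating coefficient in \eqref{eq:M-Phi-M2-expansion} is a Plancherel pairing of the fixed kernel \eqref{eq:pair-kernel-hard-commutator}. By theorem~\ref{thm:dipole-hard-ope} it defines an element of \(\mathcal M_2\), namely the class of \(\mathbb M^0_{w,\bar w}[\Phi]\) restricted to the ordered pair \((a,b)\). The closed submodule \(\mathcal M_2^{\rm soft}\) is therefore a closed subspace of the Hausdorff quotient \(\mathcal M_2\). It suffices to show that every continuous linear functional \(\Lambda\) on \(\mathcal M_2\) that annihilates \(\mathcal M_2^{\rm soft}\) is zero. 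We lift \(\Lambda\) to the pairwise analytic test space, componentwise in the direct sum over ordered pairs in \eqref{eq:M2-definition}, and reduce to a single charged pair \((a,b)\) with \(\eta_aQ_a^2e_b\neq0\). The hypothesis that every species component has such a pair guarantees that no summand of \(\mathcal M_2\) is missed.

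On a fixed pair, the representative of \(\mathfrak m^0_{w,\bar w}(\Phi)\) has the product form
\begin{equation*}
c_{ab}(w,\bar w;z_a,\bar z_b)\,\frac{1}{\bar z_a-\bar z_b}\,(\delta_a^-\Phi_a)\,T_a^{-1}.
\end{equation*}
We take diagonal labels \(\Phi=\varphi_a\otimes\varphi_b\,p(\Delta_a)\), with profiles \(\varphi_a\otimes\varphi_b\in\mathscr P(U_a)\widehat\otimes\mathscr P(U_b)\) and \(p\) polynomial. Smearing over the soft point and applying lemma~\ref{lem:angular-cyclicity-section4} produces all of \(\mathscr C_{ab}^{-1}\) in the angular factor. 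Lemma~\ref{lem:mellin-cyclicity-section4} produces all of \(T^{-1}\mathscr W_{\rm an}\) in the Mellin factor; the key choice is \(p=-\Delta\), which gives \(\delta^-p=1\). Joint continuity of the bilinear map into the completed tensor product, together with nuclearity of \(\mathscr W_{\rm an}\), then places \(\mathscr C_{ab}^{-1}\widehat\otimes T^{-1}\mathscr W_{\rm an}\) inside the closed preimage of \(\mathcal M_2^{\rm soft}\). Hence \(\Lambda\) vanishes there.

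It remains to handle the residue channels. The crossed-pole residue vectors \(\mathbb P^{\rm res}_{ab}(r)\) of \eqref{eq:analytic-two-particle-completeness} must also be annihilated. By proposition~\ref{prop:pairwise-plancherel-resolution}, the coefficients \(\mathcal C^{ab,r}_\Phi\) are the residue components of the same generating kernels. The finite-dimensional residue projectors are continuous on the analytic test space, so they map the dense set above onto each \(\mathcal M^{\rm res}_{ab}(r)\), and \(\Lambda\) vanishes on every residue vector as well. The diagonal \(SL(2,\mathbb C)\) action is not needed for density itself. It is used to transport the affine-patch statements, which a partition of unity glues together, and to check that \(\mathcal M_2^{\rm soft}\) is a submodule. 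Lemma~\ref{lem:channel-annihilator-section4} now gives \(\Lambda=0\) on \(\mathcal M_2\), and hence \eqref{eq:soft-accessible-generation}.

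The main obstacle is the passage from the stripwise statement to the global module. Lemma~\ref{lem:channel-annihilator-section4} and the continuity of the residue projectors are clean on a compact pole-free strip, via proposition~\ref{prop:appC-stripwise-plancherel}. The global \(\mathcal M_2\), however, is an inverse limit, and there I would invoke assumption~\ref{ass:appC-meromorphic-continuation} through proposition~\ref{prop:appC-F2-M2-identification}. The point to verify is that density on each strip persists in the inverse limit. I would argue this from the compatibility of the restriction maps and the fact that the generating family is independent of the strip; a Mittag-Leffler-type argument then applies, since the restriction maps have dense image.
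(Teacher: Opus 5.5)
Your proposal is correct and follows essentially the paper's route. The angular and Mellin cyclicity lemmas (with $p=-\Delta$, so $\delta^-p=1$) give the tensor product $\mathscr C_{ab}^{-1}\widehat\otimes T^{-1}\mathscr W_{\rm an}$; the channel-annihilator lemma, which is your Hahn--Banach step, together with the strip-compatibility argument of the appendix on continuity of the generated family, then gives $\mathcal M_2^{\rm soft}=\mathcal M_2$.

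The inverse-limit step is easier than you expect. Because the exhaustion is directed, every basic open set of the inverse limit is pulled back from a single strip. So density of the generated family's image in each stripwise quotient already implies density in the global module, and no Mittag-Leffler argument is needed.
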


\begin{proof}
Lemma~\ref{lem:angular-cyclicity-section4} gives the full Cauchy-conormal coefficient space on every charged pairwise diagonal, and lemma~\ref{lem:mellin-cyclicity-section4} gives every shifted analytic Mellin wave packet.  Their completed tensor product is dense in the pairwise analytic test space.  Lemma~\ref{lem:channel-annihilator-section4} shows that no nonzero continuous or crossed-pole channel annihilates the generated family in the Hausdorff shadow quotient.  This establishes \eqref{eq:soft-accessible-generation}.
\end{proof}

\begin{corollary}[Universal second-support map]
\label{cor:universal-second-support-map}
For a filtered kernel \(T\), let
\begin{equation}
  \operatorname{depth}(T)=\min\{k\geq1:T\in F_k\} \, .
  \label{eq:support-depth-definition}
\end{equation}
Modulo \(F_1\), the dipole-hard commutator factors as
\begin{equation}
  \mathscr D_{\rm Mell}
  \xrightarrow{\ \nabla^-\ }
  \mathscr D_{\rm Mell}
  \xrightarrow{\ K_{\log}\ }
  F_2/F_1
  \xrightarrow{\ \mathcal P_2^{\rm an}\ }
  \mathcal M_2 \, .
  \label{eq:universal-second-support-factorization}
\end{equation}
For generic charged data and \(\nabla^-\Phi\neq0\), the image has depth two.  Its continuous and discrete primary coefficients are fixed by the universal logarithmic soft kernel.
\end{corollary}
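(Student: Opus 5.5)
We prove Corollary~\ref{cor:universal-second-support-map} by assembling three earlier results: Theorem~\ref{thm:dipole-hard-ope}, Proposition~\ref{prop:pairwise-plancherel-resolution} and Theorem~\ref{thm:nonabsorption-section4}. The factorization, the depth statement and the universality of the coefficients are treated in turn.

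\emph{Factorization.} First, by Lemma~\ref{lem:one-particle-scheme-invariance}, the one-particle representative $H[s^0_{\mu,1}]$ in eq.~\eqref{eq:S0-one-particle-pair-decomposition} does not contribute modulo $F_1$. Only the ordered-pair kernel survives. Theorem~\ref{thm:dipole-hard-ope}, through eq.~\eqref{eq:pair-kernel-hard-commutator}, shows that on each ordered pair the commutator $[K^{ab}_{w,\bar w},\Phi_a+\Phi_b]$ depends on $\Phi$ only through $\nabla_a^-\Phi_a=(\delta_a^-\Phi_a)T_a^{-1}$. The first arrow is therefore the map $\Phi\mapsto\nabla^-\Phi$. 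We must check that it lands in $\mathscr D_{\rm Mell}$: a finite sum of terms $f\partial_\Delta^kT_a$ is sent to a finite sum of the same type, with the derivative terms handled by the Leibniz rule.

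The second arrow $K_{\log}$ multiplies $\nabla_a^-\Phi_a$ by the fixed rational prefactor $\eta_aQ_a^2e_b(w-z_a)^{-1}(\bar w-\bar z_b)(\bar z_a-\bar z_b)^{-1}$ and sums over ordered pairs. The result lies in $F_2$, and we take its class in $F_2/F_1$. The third arrow is the analytic projection $\mathcal P_2^{\rm an}$. It is supplied by Proposition~\ref{prop:pairwise-plancherel-resolution}, since the kernel is conormal on the pair diagonal and analytic in the strip $0\le\operatorname{Re}\Delta_a\le1$. The composite of the three arrows is exactly eq.~\eqref{eq:M-Phi-M2-expansion}, which proves the factorization.

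\emph{Depth two.} The image lies in $F_2$, so its depth is at most two. Theorem~\ref{thm:nonabsorption-section4} shows that, for generic charged data with $\nabla^-\Phi\ne0$, the class in $F_2/F_1$ is nonzero. The image is therefore not in $F_1$, and by the definition \eqref{eq:support-depth-definition} its depth is exactly two.

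\emph{Universality of the coefficients.} The continuous coefficients $\mathcal C^{ab}_{\Phi;\nu,\ell}$ and the residue coefficients $\mathcal C^{ab,r}_\Phi$ are Plancherel pairings of the single kernel \eqref{eq:pair-kernel-hard-commutator}. That kernel contains only the normalized soft data, the charges and $\nabla^-\Phi$. The expansion in Proposition~\ref{prop:pairwise-plancherel-resolution} is unique modulo the shadow and null relations, so these coefficients are fixed by the universal kernel. Lemma~\ref{lem:one-particle-scheme-invariance} adds that they are independent of the choice of scheme.

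\emph{Main obstacle.} The delicate step is justifying that the analytic projection $\mathcal P_2^{\rm an}$ is well defined on the shifted weights $\operatorname{Re}\Delta_a=0$. This is where the crossed-pole residues enter, and where the stripwise result Proposition~\ref{prop:appC-stripwise-plancherel} is used, together with Assumption~\ref{ass:appC-meromorphic-continuation} for the global statement. I would state the corollary stripwise, which is unconditional, and invoke the assumption only for the global identification $F_2/F_1\simeq\mathcal M_2$.
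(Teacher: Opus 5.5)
Your proposal is correct and follows the paper's route: the paper gives no separate proof of this corollary, which is read off from Theorem~\ref{thm:dipole-hard-ope} (factorization through $\nabla^-$ and the Plancherel pairing of the fixed kernel), Proposition~\ref{prop:pairwise-plancherel-resolution} (the map to $\mathcal M_2$) and Theorem~\ref{thm:nonabsorption-section4} (depth two), which is exactly how you assemble it. One small simplification: $T^{-1}$ commutes with $\partial_\Delta$, so $\nabla^-(f\,\partial_\Delta^kT_c)=(\delta^-f)\,\partial_\Delta^kT_{c-1}$, and no Leibniz rule is needed to see that $\nabla^-$ preserves $\mathscr D_{\rm Mell}$.
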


\section{Products of dipole currents and the two-particle action}
\label{sec:dipole-dipole-ope}

At the level of the Ward kernels used in this paper, the ordered product of two Banerjee--Mandal--Sahoo dipole insertions contains a two-particle component when the two Ward operators act on distinct charged legs.  This component is symmetric under exchange of the complete current insertions and defines the two-particle module action.  The local same-leg current in the symmetric channel is the first higher-spin current in the Banerjee--Mandal--Sahoo construction, eqs.~(7.5)--(7.7) of \cite{Banerjee:2026soft}.  This section decomposes the ordered product into the one-particle local channel and the distinct-leg two-particle channel, and computes the induced action on \(\mathcal M_2\).

\subsection{Local and pairwise channels}
\label{subsec:normal-ordering-dipole-dipole}

Write the dipole Ward operator as
\begin{equation}
  \mathscr D_\alpha(\bar u)
  =\sum_{a=1}^n e_a
  \frac{\bar Z_{a\alpha}}{\bar u-\bar z_a} \, .
  \label{eq:dipole-ward-operator-section5}
\end{equation}
Its ordered product decomposes according to whether the two multipliers act on the same external leg or on distinct legs,
\begin{equation}
  \mathscr D_\alpha(\bar u)\mathscr D_\beta(\bar v)
  =\mathscr D_{\alpha\beta}^{\rm diag}(\bar u,\bar v)
   +\mathscr D_{\alpha\beta}^{\rm pair}(\bar u,\bar v) \, ,
  \label{eq:two-dipole-naive-action}
\end{equation}
where
\begin{equation}
  \mathscr D_{\alpha\beta}^{\rm diag}(\bar u,\bar v)
  =\sum_a e_a^2
  \frac{\bar Z_{a\alpha}\bar Z_{a\beta}}
       {(\bar u-\bar z_a)(\bar v-\bar z_a)}
  \label{eq:diagonal-dipole-product}
\end{equation}
and
\begin{equation}
  \mathscr D_{\alpha\beta}^{\rm pair}(\bar u,\bar v)
  =\sum_{a\ne b}e_ae_b
  \frac{\bar Z_{a\alpha}\bar Z_{b\beta}}
       {(\bar u-\bar z_a)(\bar v-\bar z_b)} \, .
  \label{eq:pair-dipole-product}
\end{equation}
The diagonal part belongs to \(F_1\).  Expanding it about \(\bar u=\bar v\) gives
\begin{equation}
  \mathscr D_{\alpha\beta}^{\rm diag}(\bar u,\bar v)
  =\sum_{r\geq0}(-1)^r(\bar u-\bar v)^r
  \sum_a e_a^2
  \frac{\bar Z_{a\alpha}\bar Z_{a\beta}}
       {(\bar v-\bar z_a)^{r+2}} \, ,
  \label{eq:diagonal-local-expansion}
\end{equation}
so its leading coefficient is the Banerjee--Mandal--Sahoo local spin-one current of eq.~(7.5) in \cite{Banerjee:2026soft}
\begin{equation}
  \left\langle J_{\alpha\beta}^{(1)}(\bar v)
  \prod_i\mathcal O_i\right\rangle_{\mathcal H,\mu}
  =\sum_a e_a^2
  \frac{\bar Z_{a\alpha}\bar Z_{a\beta}}
       {(\bar v-\bar z_a)^2}C_n \, .
  \label{eq:J1-Ward}
\end{equation}
The remaining terms in \eqref{eq:diagonal-local-expansion} are antiholomorphic descendants in \(F_1\).

The product of two operator-valued distributions can also contain local scheme-dependent terms supported on \(u=v\).  We collect them in
\begin{equation}
  C_{\alpha\beta}^{\rm loc}(\bar u,\bar v;\mu)\in F_1 \, .
  \label{eq:singular-dipole-contraction}
\end{equation}
Their detailed coefficients depend on the infrared subtraction and normal-ordering convention.  No second-support statement uses them.

The renormalized ordered product is
\begin{equation}
  :D_\alpha(\bar u)D_\beta(\bar v):_\mu
  :=D_\alpha(\bar u)D_\beta(\bar v)
  -C_{\alpha\beta}^{\rm loc}(\bar u,\bar v;\mu) \, .
  \label{eq:DD-normal-product-definition}
\end{equation}
Away from \(u=v\), its action is fixed by applying the two Ward operators.  The extension across the diagonal is defined by the selected local subtraction.  Changes of subtraction alter only \(F_1\).

The distinct-leg term defines a two-particle distribution by
\begin{equation}
\begin{aligned}
  \left\langle\mathbb B_{\alpha\beta}^{(2)}(\bar u,\bar v)
  \prod_i\mathcal O_i\right\rangle_{\mathcal H,\mu}
  =\sum_{a\ne b}e_ae_b
  \frac{\bar Z_{a\alpha}\bar Z_{b\beta}}
       {(\bar u-\bar z_a)(\bar v-\bar z_b)}C_n \, .
  \label{eq:B-alpha-beta-Ward}
\end{aligned}
\end{equation}
Its analytic primary resolution is obtained from \eqref{eq:analytic-two-particle-completeness}; in shorthand,
\begin{equation}
  [\mathbb B_{\alpha\beta}^{(2)}]
  =\sum_{a\ne b}\mathcal P_{2,ab}^{\rm an}
  \left[
  e_ae_b
  \frac{\bar Z_{a\alpha}\bar Z_{b\beta}}
       {(\bar u-\bar z_a)(\bar v-\bar z_b)}
  \right] \, .
  \label{eq:B-alpha-beta-M2-expansion}
\end{equation}

\begin{proposition}[Ordered dipole product]
\label{prop:DD-OPE}
Modulo one-particle descendants and local contact terms, the ordered product has the filtered expansion
\begin{equation}
  D_\alpha(\bar u)D_\beta(\bar v)
  =C_{\alpha\beta}^{\rm loc}(\bar u,\bar v;\mu)
   +J_{\alpha\beta}^{(1)}(\bar v)
   +\mathbb B_{\alpha\beta}^{(2)}(\bar u,\bar v)
  \quad \bmod F_1^{\rm desc} \, .
  \label{eq:DD-OPE-filtered}
\end{equation}
The second-support term obeys
\begin{equation}
  \mathbb B_{\alpha\beta}^{(2)}(\bar u,\bar v)
  =\mathbb B_{\beta\alpha}^{(2)}(\bar v,\bar u) \, .
  \label{eq:B-pair-symmetry}
\end{equation}
Consequently, the distinct-leg component of the abelian dipole-current commutator vanishes in \(F_2/F_1\).
\end{proposition}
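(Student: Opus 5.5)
The proof will work at the level of the Ward operators, since $D_\alpha$ acts on hard correlators through $\mathscr D_\alpha(\bar u)$ in \eqref{eq:dipole-ward-operator-section5}. I first apply the two Ward operators in the stated order away from $\bar u=\bar v$. Because each operator is a sum of scalar multipliers over legs, their product splits exactly as in \eqref{eq:two-dipole-naive-action}: a same-leg sum \eqref{eq:diagonal-dipole-product} and a distinct-leg sum \eqref{eq:pair-dipole-product}. The multipliers carry no Mellin shift, so their order on a given leg is irrelevant and no commutator terms arise at this stage.

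For the diagonal piece I use the partial-fraction identity
\[
\frac{1}{(\bar u-\bar z_a)(\bar v-\bar z_a)}=\sum_{r\ge0}(-1)^r\frac{(\bar u-\bar v)^r}{(\bar v-\bar z_a)^{r+2}},
\]
valid for $|\bar u-\bar v|<|\bar v-\bar z_a|$. This gives \eqref{eq:diagonal-local-expansion}. The $r=0$ term is $J^{(1)}_{\alpha\beta}(\bar v)$ by \eqref{eq:J1-Ward}. The $r\ge1$ terms are $\bar v$-derivatives of that kernel, up to factorials, so they are antiholomorphic descendants supported on single legs and lie in $F_1^{\rm desc}$.

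Next I handle the diagonal $\bar u=\bar v$. The distributional extension of the product is fixed only up to the local terms $C^{\rm loc}_{\alpha\beta}$ of \eqref{eq:singular-dipole-contraction}, and these are in $F_1$ by definition. Adding them back through \eqref{eq:DD-normal-product-definition} gives \eqref{eq:DD-OPE-filtered}. The remaining distinct-leg term is exactly the definition \eqref{eq:B-alpha-beta-Ward} of $\mathbb B^{(2)}_{\alpha\beta}$.

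The symmetry \eqref{eq:B-pair-symmetry} follows by relabelling the dummy indices $a\leftrightarrow b$ in the sum over ordered pairs $a\ne b$. Under this relabelling the coefficient $e_ae_b$ is unchanged. The rational factor becomes $\bar Z_{b\alpha}\bar Z_{a\beta}/((\bar u-\bar z_b)(\bar v-\bar z_a))$, which is the $(\beta,\bar v;\alpha,\bar u)$ kernel. Since all factors are commuting scalars, this is an exact identity of kernels, so it holds in $F_2$ and hence in $F_2/F_1$ under $\mathcal P^{\rm an}_2$.

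For the commutator, I write $[D_\alpha(\bar u),D_\beta(\bar v)]=D_\alpha(\bar u)D_\beta(\bar v)-D_\beta(\bar v)D_\alpha(\bar u)$ and apply \eqref{eq:DD-OPE-filtered} to both orderings. The $F_2$ components are $\mathbb B^{(2)}_{\alpha\beta}(\bar u,\bar v)-\mathbb B^{(2)}_{\beta\alpha}(\bar v,\bar u)=0$, and everything else lies in $F_1$. So $\sigma_2$ of the commutator vanishes.

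The one point needing care is that the operator-level ordering and the finite-part extension across $\bar u=\bar v$ do not introduce a second-support contact term. My approach is to note that any distribution supported on $\bar u=\bar v$ but with distinct-leg dependence would need a singularity at $\bar z_a=\bar z_b$. Neither $\mathscr D^{\rm pair}$ nor a local subtraction produces one, so by the residue map of theorem~\ref{thm:nonabsorption-section4} such terms carry zero hard-pair residue and therefore lie in $F_1$.
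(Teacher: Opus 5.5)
Your proof is correct and follows the paper's route. You split the product of the two scalar Ward multipliers into the same-leg sum, expanded about $\bar u=\bar v$ into $J^{(1)}_{\alpha\beta}$ plus descendants in $F_1$, and the distinct-leg sum $\mathbb B^{(2)}_{\alpha\beta}$, relabel $a\leftrightarrow b$ to get the exchange symmetry, and let the $F_2$ parts cancel in the commutator.

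Your final paragraph is unnecessary, since the statement is taken modulo local contact terms and $C^{\rm loc}_{\alpha\beta}\in F_1$ by definition. As written it also has two problems:
\begin{enumerate}
  \item It uses the residue criterion in the wrong direction. The paper only gives $F_1\Rightarrow$ vanishing pair residue, not the converse.
  \item It proves too much. $\mathscr D^{\rm pair}_{\alpha\beta}$ itself has no pole at $\bar z_a=\bar z_b$, so the same reasoning would put $\mathbb B^{(2)}_{\alpha\beta}$ in $F_1$. In the paper its second-support status comes from acting on two distinct legs, not from a pair-diagonal pole.
\end{enumerate}
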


\begin{proof}
Equation \eqref{eq:two-dipole-naive-action} separates the same-leg and distinct-leg contributions.  The former gives \(J_{\alpha\beta}^{(1)}\), its descendants, and the chosen local term.  The latter is \eqref{eq:B-alpha-beta-Ward}.  Relabeling \(a\leftrightarrow b\) in the sum proves \eqref{eq:B-pair-symmetry}.  The analytic primary expansion follows from proposition~\ref{prop:pairwise-plancherel-resolution}.
\end{proof}

The decomposition
\begin{equation}
  V_{1/2}\otimes V_{1/2}
  =\operatorname{Sym}^2V_{1/2}\oplus\wedge^2V_{1/2}
  \simeq V_1\oplus V_0
  \label{eq:spinor-product-decomposition}
\end{equation}
places \(J_{\alpha\beta}^{(1)}\) in the symmetric spin-one channel.  The pairwise primary module contains its own scalar and spin-one projections.  Equation \eqref{eq:B-pair-symmetry} concerns exchange of the two complete current insertions and does not identify those representation channels.

\subsection{Action on the two-particle module}
\label{subsec:two-particle-module-action}

Let \(f^\alpha(\bar z)\) be a meromorphic test spinor with poles away from the hard insertions and define
\begin{equation}
  D[f]=\oint\frac{d\bar z}{2\pi i}\,
  f^\alpha(\bar z)D_\alpha(\bar z) \, .
  \label{eq:integrated-dipole-charge}
\end{equation}
On the \(a\)-th one-particle module it acts by the character
\begin{equation}
  q_a(f)=e_a f^\alpha(\bar z_a)\bar Z_{a\alpha} \, .
  \label{eq:q-character-definition}
\end{equation}
The induced action on a two-particle channel is diagonal,
\begin{equation}
  \rho_2(D[f])U_{ab}
  =\bigl(q_a(f)+q_b(f)\bigr)U_{ab} \, ,
  \qquad U_{ab}\in\mathcal M_{2,ab} \, .
  \label{eq:dipole-action-M2}
\end{equation}

The characters \(q_a(f)\) are commuting scalar multipliers, so
\begin{equation}
  [\rho_2(D[f]),\rho_2(D[g])]=0 \, .
\end{equation}
Thus \(\rho_2(D[f])\) is a representation of the abelian dipole-current algebra.  The distinct-leg terms in the product of \(q_a(f)+q_b(f)\) and \(q_a(g)+q_b(g)\) are \(q_a(f)q_b(g)+q_b(f)q_a(g)\); they reproduce the symmetric pairwise Ward kernel, while the same-leg terms lie in the local channel.

For a normalized logarithmic soft insertion \(X\), let \(K_X^{ab}\) denote its finite-part kernel on the ordered pair \((a,b)\).  Its action on an analytic two-particle distribution is
\begin{equation}
  \rho_2(X)U_{ab}:=[K_X^{ab},U_{ab}]_{\rm fp} \, ,
  \label{eq:log-soft-action-M2}
\end{equation}
where the product is defined by analytic regularization in the relative Cauchy coordinate.

On each ordered pair, associativity of the analytically regularized finite-part product gives
\begin{equation}
  [\rho_2(X),\rho_2(Y)]U
  =\rho_2([X,Y]_1)U
  \qquad \bmod F_3 \, .
  \label{eq:log-soft-representation-M2}
\end{equation}
The left-hand side is
\begin{equation*}
  [K_X,[K_Y,U]]_{\rm fp}-[K_Y,[K_X,U]]_{\rm fp} \, .
\end{equation*}
Finite-part associativity identifies it with \([[K_X,K_Y]_{\rm fp},U]_{\rm fp}\).  Local ambiguities lie in \(F_1\), and a product involving an additional independent pair lies in \(F_3\).  In the diagonal scalar sector the pairwise kernels commute.

The local term \(C_{\alpha\beta}^{\rm loc}\) lies in \(F_1\) and drops out of the second-support symbol.

\section{Hard-current cocycles, Jacobi compatibility and minimal closure}
\label{sec:associativity-minimal-closure}

The nonzero second-support term in the dipole-hard OPE is a module-valued correction to the one-particle current algebra.  Compatibility with the hard-current bracket is the one-cocycle theorem of section~\ref{sec:dipole-hard-ope}.  Compatibility of two logarithmic soft insertions follows from associativity of the finite-part kernel algebra.  Minimality imposes the substantive restriction: the pairwise image cannot be represented in the one-particle quotient, and closure for all hard-current labels generates the analytic module \(\mathcal M_2\).

\subsection{The support-graded current module}
\label{subsec:support-graded-current-complex}

Let
\begin{equation}
  \mathfrak g_1
  =\mathfrak h_{\rm hard}\rtimes\mathfrak s_{\log}
  \label{eq:g1-naive-algebra}
\end{equation}
be the one-particle current algebra.  The bracket on \(\mathfrak h_{\rm hard}\) is \([\Phi,\Psi]_{\star}\).  The algebra \(\mathfrak s_{\log}\) is the one-particle closure generated by the normalized logarithmic insertions, the ordinary dipole charges \(D[f]\), and the local currents \(J_{\alpha\beta}^{(1)}\) with their antiholomorphic descendants arising from same-leg products; the chosen local contact ideal is included in this closure.  Its distinct-leg quotient is abelian.  Local one-particle contact brackets, when present, are retained in \([X,Y]_1\in\mathfrak g_1\).  Their action on hard-current labels is denoted by
\begin{equation}
  [X,H[\Phi]]_1=H[\mathcal L_X\Phi] \, .
  \label{eq:one-particle-soft-hard-bracket}
\end{equation}
The bracket in \eqref{eq:one-particle-soft-hard-bracket} is the projection to \(F_1\).

The second layer is the analytic two-particle module \(\mathcal M_2\simeq F_2/F_1\).  Hard currents act diagonally on its two external legs.  Ordinary dipole charges act by \eqref{eq:dipole-action-M2}, and logarithmic soft insertions act through the finite-part kernel commutator \eqref{eq:log-soft-action-M2}.  These actions preserve the continuous principal-series component, its analytically continued boundary values, and the crossed-pole residue channels.

The dipole-hard OPE defines an alternating two-cochain on \(\mathfrak g_1\) with values in \(\mathcal M_2\),
\begin{equation}
\begin{aligned}
  \nu(X,H[\Phi])&=\mathfrak m_X(\Phi)\, ,
  &\nu(H[\Phi],X)&=-\mathfrak m_X(\Phi)\, ,
  \\
  \nu(H[\Phi],H[\Psi])&=0\, ,
  &\nu(X,Y)&=0
\end{aligned}
\label{eq:nu-definition-section6}
\end{equation}
for the abelian distinct-leg sector.  The last equality expresses proposition~\ref{prop:DD-OPE}: the pairwise part of the ordinary dipole product is symmetric and contributes no module-valued soft-soft commutator.

\subsection{Mixed Jacobi compatibility}
\label{subsec:second-layer-jacobi}

The Chevalley--Eilenberg equation for \(\nu\) on a triple \((X,Y,H[\Phi])\) is
\begin{equation}
\begin{aligned}
  0={}&(d_{\rm CE}\nu)(X,Y,H[\Phi])
  \\
  ={}&\rho_2(X)\mathfrak m_Y(\Phi)
      -\rho_2(Y)\mathfrak m_X(\Phi)
      -\mathfrak m_{[X,Y]_1}(\Phi)
  \\
  &\quad-\mathfrak m_Y(\mathcal L_X\Phi)
      +\mathfrak m_X(\mathcal L_Y\Phi) \, .
  \label{eq:mixed-CE-condition-section6}
\end{aligned}
\end{equation}
All terms take values in the same analytic two-particle module.

For a fixed ordered pair \((a,b)\), let \(K_X^{ab}\) and \(K_Y^{ab}\) be the finite-part kernels representing the two normalized logarithmic insertions, and let \(\Phi_{ab}=\Phi_a+\Phi_b\).  The second-support symbol of \eqref{eq:mixed-CE-condition-section6} is
\begin{equation}
\begin{aligned}
  &[K_X^{ab},[K_Y^{ab},\Phi_{ab}]]_{\rm fp}
  -[K_Y^{ab},[K_X^{ab},\Phi_{ab}]]_{\rm fp}
  \\
  &\hspace{4.0cm}
  -[[K_X^{ab},K_Y^{ab}]_{\rm fp},\Phi_{ab}]_{\rm fp} \, .
  \label{eq:mixed-CE-symbol-section6}
\end{aligned}
\end{equation}
Products on the same pairwise diagonal are defined by analytic regularization in the relative coordinate \(\xi=\bar z_a-\bar z_b\): replace every Cauchy monomial by \(\xi^{-r+\varepsilon}\), multiply in the convergent half-plane, and take the finite part at \(\varepsilon=0\).  The associator is a local distribution supported at \(\xi=0\); it lies in \(F_1\), so the product is associative in the quotient \(F_2/F_1\).  Different local extensions define the same second-support class.

\begin{theorem}[Mixed Jacobi compatibility]
\label{thm:second-layer-jacobi}
The cochain \(\nu\) in \eqref{eq:nu-definition-section6} satisfies
\begin{equation}
  d_{\rm CE}\nu=0
  \label{eq:second-layer-jacobi-closed}
\end{equation}
on all triples containing one hard current and two normalized logarithmic soft insertions, modulo \(F_3\); this is eq.~\eqref{eq:mixed-CE-condition-section6} in \(\mathcal M_2\).
\end{theorem}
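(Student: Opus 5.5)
The proof will reduce the Chevalley--Eilenberg condition \eqref{eq:mixed-CE-condition-section6} to the finite-part associator identity \eqref{eq:mixed-CE-symbol-section6} on each ordered pair. It then uses associativity of the analytically regularized product in $F_2/F_1$.

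\emph{Step one: reduction to a fixed pair.} Every term in \eqref{eq:mixed-CE-condition-section6} is a sum over ordered pairs $(a,b)$. Terms involving an additional independent pair lie in $F_3$, and legs other than $a,b$ commute with $K_X^{ab}$ and $K_Y^{ab}$. So it suffices to check the identity pair by pair, with $\Phi$ replaced by $\Phi_{ab}=\Phi_a+\Phi_b$.

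\emph{Step two: identify each term with a double commutator.} The identifications are as follows.
\begin{itemize}
\item[(i)] By theorem~\ref{thm:dipole-hard-ope} and lemma~\ref{lem:one-particle-scheme-invariance}, $\mathfrak m_Y(\Phi)=\sigma_2[K_Y,\Phi]$ for any choice of one-particle representative.
\item[(ii)] By \eqref{eq:log-soft-action-M2}, $\rho_2(X)\mathfrak m_Y(\Phi)=\sigma_2[K_X,[K_Y,\Phi]]_{\rm fp}$.
\item[(iii)] For $\mathfrak m_{[X,Y]_1}(\Phi)$, I use \eqref{eq:log-soft-representation-M2}: the pairwise kernel of $[X,Y]_1$ is $[K_X,K_Y]_{\rm fp}$ modulo $F_1$, so this term is $\sigma_2[[K_X,K_Y]_{\rm fp},\Phi]$.
\item[(iv)] The terms $\mathfrak m_Y(\mathcal L_X\Phi)$ and $\mathfrak m_X(\mathcal L_Y\Phi)$ come from the one-particle parts $s^0_{X,1},s^0_{Y,1}$ in \eqref{eq:S0-one-particle-pair-decomposition}. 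Writing the full insertion as $H[s_{X,1}]+K_X$, the double commutators $[H[s_{X,1}],[K_Y,\Phi]]$ and $[K_Y,[H[s_{X,1}],\Phi]]$ differ by $[[H[s_{X,1}],K_Y],\Phi]$. The latter is part of the $[X,Y]_1$ pairwise kernel.
\end{itemize}
With these identifications the full expression is $\sigma_2$ of the Jacobiator of the full insertions with $\Phi$.

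The bookkeeping in (iv) is the step I expect to be the main obstacle. It requires that $\rho_2(X)$ on $\mathcal M_2$ includes the hard-current action of $s_{X,1}$ (diagonal by lemma~\ref{lem:hard-current-representation}), and that $[X,Y]_1$ is consistently defined as including the cross terms $[s_{X,1},K_Y]$. I would first check it in the scheme $s^0_{\mu,1}=0$ fixed by \eqref{eq:S0-Ward-pairwise}, where the $\mathcal L$ terms vanish. I would then invoke lemma~\ref{lem:one-particle-scheme-invariance} to argue that a change of one-particle representative changes both sides by the same exact amount.

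\emph{Step three: associativity mod $F_1$.} The remaining symbol is exactly \eqref{eq:mixed-CE-symbol-section6}. Expanding the commutators, it equals a signed sum of associators $(K_X,K_Y,\Phi_{ab})$ of the finite-part product. On a pole-free range of the regulator $\varepsilon$ the regularized products $\xi^{-r+\varepsilon}$ are honest functions, so the product is associative there. Analytic continuation to $\varepsilon=0$ then leaves an associator supported at $\xi=0$, a local distribution in $F_1$. The hard-current label $\Phi_{ab}$ acts by Mellin shifts and smooth profiles and introduces no new Cauchy monomials, so it commutes with the regularization. The associator therefore vanishes under $\sigma_2$, and the stripwise resolution of proposition~\ref{prop:pairwise-plancherel-resolution} transports the identity to $\mathcal M_2$, giving $d_{\rm CE}\nu=0$ modulo $F_3$.

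As a consistency check, in the scalar diagonal sector $K_X^{ab}$ and $K_Y^{ab}$ are both of the form $L\,T_a^{-1}$ with $L$ independent of $\Delta_a$, so they commute. Both sides then reduce, via proposition~\ref{prop:pairwise-exponentiation}, to $L_XL_Y(\delta_a^-)^2\Phi_a T_a^{-2}$ with opposite signs, which confirms the vanishing.
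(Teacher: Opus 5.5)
Your proposal is correct and takes essentially the same route as the paper. The paper also reads the per-pair symbol \eqref{eq:mixed-CE-symbol-section6} as the Jacobi identity of the finite-part kernel algebra, which is associative modulo the $F_1$-valued associator supported at $\xi=0$; it attributes the $\mathcal L_X\Phi$, $\mathcal L_Y\Phi$ and $[X,Y]_1$ terms to soft insertions acting on one-particle labels, and then passes to $F_2/F_1$ and the analytic resolution with no third-support term, while your explicit bookkeeping in step (iv) and the scalar-sector check spell out what the paper states in a single sentence.
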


\begin{proof}
On each ordered pair, \eqref{eq:mixed-CE-symbol-section6} is the Jacobi identity for three elements of the associative finite-part kernel algebra.  It therefore vanishes before the Plancherel transform.  Terms in which a soft current acts on the one-particle label of the other current give the \(\mathcal L_X\Phi\), \(\mathcal L_Y\Phi\), and \([X,Y]_1\) contributions in \eqref{eq:mixed-CE-condition-section6}.  Passing to \(F_2/F_1\) and then applying the analytic resolution preserves the equality.  Since the calculation is pairwise, no third-support term is produced.
\end{proof}

Associativity enters through the Jacobi identity of the ambient finite-part kernel algebra.  The substantive input is the nonzero second-support symbol of the individual commutators: it has no representative in the one-particle quotient, so the projection \(F_2\to F_1\) is not a homomorphism of the filtered bracket.

By the bigrading \eqref{eq:loop-support-bigrading}, theorem~\ref{thm:second-layer-jacobi} lies at physical loop degree two and formal support degree one; it concerns the bilinear contribution fixed by the one-loop Ward kernels.

\subsection{The filtered abelian extension}
\label{subsec:renormalized-extension-section6}

Introduce a formal parameter \(t\) for support-extension degree and set
\begin{equation}
  \mathfrak g_{\rm ren}^{\log}
  =\mathfrak g_1\oplus_{\nu}t\mathcal M_2\, ,
  \qquad t^2=0 \, .
  \label{eq:renormalized-algebra-section6}
\end{equation}
For \(x,y\in\mathfrak g_1\) and \(u,v\in\mathcal M_2\), define
\begin{equation}
  [(x,tu),(y,tv)]_{\rm ren}
  =\bigl([x,y]_1,\,
  t(x\cdot v-y\cdot u+\nu(x,y))\bigr) \, .
  \label{eq:renormalized-bracket-section6}
\end{equation}
The parameter \(t\) is the formal variable introduced in \eqref{eq:formal-support-parameter}; it is unrelated to physical loop degree.  All brackets are also evaluated modulo \(F_3\).

\begin{corollary}[Filtered closure]
\label{cor:renormalized-closure-section6}
The bracket \eqref{eq:renormalized-bracket-section6} satisfies the Jacobi identity modulo \(F_3\).  Its restriction to one hard current and one logarithmic soft insertion reproduces theorem~\ref{thm:dipole-hard-ope}, and its one-particle quotient is \(\mathfrak g_1\).
\end{corollary}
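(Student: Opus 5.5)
The plan is to treat \eqref{eq:renormalized-bracket-section6} as a standard abelian extension of \(\mathfrak g_1\) by the module \(\mathcal M_2\) twisted by the two-cochain \(\nu\). For such extensions, the Jacobi identity is equivalent to two conditions. First, \(\mathcal M_2\) must be a \(\mathfrak g_1\)-module. Second, \(\nu\) must be a Chevalley--Eilenberg two-cocycle. Since \(t^2=0\), every bracket of two \(t\)-components vanishes. Expanding the Jacobiator of \((x,tu),(y,tv),(z,tw)\), I get \(\mathrm{Jac}_{\mathfrak g_1}(x,y,z)\) in the first slot. The \(t\)-slot then collects \((x\cdot(y\cdot w)-y\cdot(x\cdot w)-[x,y]_1\cdot w)\) plus cyclic permutations, together with \((d_{\rm CE}\nu)(x,y,z)\). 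So I would verify three things in order, all modulo \(F_3\).

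(i) \(\mathfrak g_1\) is a Lie algebra. The hard part follows from lemma~\ref{lem:hard-current-representation}. For the semidirect product I would use that \(\mathcal L_X\) in \eqref{eq:one-particle-soft-hard-bracket} is the \(F_1\)-projection of a commutator in the associative kernel algebra. Projection to \(F_1\) is compatible with brackets of \(F_1\) elements, because \([F_1,F_1]\subset F_1\).

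(ii) \(\rho_2\) is a representation. For hard currents, the action on \(\mathcal M_2\) is induced by commutators, so the claim follows from the diagonal action and lemma~\ref{lem:one-particle-scheme-invariance}. For dipole charges, the actions are commuting characters by \eqref{eq:dipole-action-M2}. For logarithmic insertions I would use \eqref{eq:log-soft-representation-M2}. The mixed terms \([\rho_2(X),\rho_2(\Phi)]=\rho_2([X,\Phi]_1)\) follow again from finite-part associativity: the \(F_2\) part of \([X,\Phi]\) acts trivially on \(F_2/F_1\) modulo \(F_3\).

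(iii) \(d_{\rm CE}\nu=0\) on all triples. I would split by type.
\begin{itemize}
\item \((\Phi,\Psi,\Xi)\): \(\nu\) vanishes identically, and the \(\nu\)-terms evaluated on brackets also vanish because \([\Phi,\Psi]_\star\) is again hard.
\item \((X,\Phi,\Psi)\): the cocycle condition reduces exactly to theorem~\ref{thm:hard-current-one-cocycle}, once one checks that the \(\nu(X,[\Phi,\Psi])\) term has the sign matching \eqref{eq:hard-current-one-cocycle}.
\item \((X,Y,\Phi)\): this is theorem~\ref{thm:second-layer-jacobi}.
\item \((X,Y,Z)\): this uses \(\nu(X,Y)=0\) together with the fact that \([X,Y]_1\) stays in \(\mathfrak s_{\log}\), so all terms vanish by proposition~\ref{prop:DD-OPE}.
\end{itemize}

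For the remaining statements, restricting \eqref{eq:renormalized-bracket-section6} to \(x=X\), \(y=H[\Phi]\), \(u=v=0\) gives \(([X,H[\Phi]]_1,t\,\mathfrak m_X(\Phi))\). This is \eqref{eq:dipole-hard-OPE-filtered} with \(\mathbb M_X\) replaced by its symbol. The quotient by the ideal \(t\mathcal M_2\) is \(\mathfrak g_1\), which is immediate since the first component is \([x,y]_1\).

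The main obstacle is step (iii) for the type \((X,Y,Z)\), and more generally the fact that \(\nu\) is defined only on the abelian distinct-leg sector. The contact brackets \([X,Y]_1\) that land in \(\mathfrak s_{\log}\), for instance the \(J^{(1)}\) currents, need to have vanishing \(\nu\)-pairing with each other. I would argue that these contact brackets live in \(F_1\) and that their pairwise parts are symmetric. Then the relevant \(d_{\rm CE}\nu\) terms vanish in \(F_2/F_1\), which is in the spirit of proposition~\ref{prop:DD-OPE}.
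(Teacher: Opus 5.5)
\textbf{Where your argument matches the paper.} Your triple-by-triple check of $d_{\rm CE}\nu=0$ is exactly the paper's proof: trivially on hard triples, theorem~\ref{thm:hard-current-one-cocycle} on $(X,\Phi,\Psi)$ (your sign check is correct), theorem~\ref{thm:second-layer-jacobi} on $(X,Y,\Phi)$, and the abelian distinct-leg sector on soft triples. Your treatment of the restriction to $(X,H[\Phi])$ and of the quotient $\mathfrak g_1$ is also fine.

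\textbf{The gap is in step (ii).} You correctly note that on $(X,H[\Phi],(0,tw))$ the $t$-component of the Jacobiator is $[\rho_2(X),\rho_2(\Phi)]w-\rho_2([X,\Phi]_1)w$. So the module condition is genuinely needed. The paper never checks it: it verifies $d_{\rm CE}\nu=0$ and then invokes the standard abelian-extension calculation, which presupposes that $\mathcal M_2$ is a $\mathfrak g_1$-module. Your justification is that the $F_2$ part of $[X,\Phi]$ acts trivially on $F_2/F_1$ modulo $F_3$. That is false in the paper's conventions. Only products on distinct pairs fall into $F_3$; finite-part products on the same pair stay at depth two (table \eqref{eq:loop-support-bigrading}). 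This is the only reason $\rho_2(X)U_{ab}=[K_X^{ab},U_{ab}]_{\rm fp}$ is nonzero in the first place.

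\textbf{Why the mixed module condition actually fails.} By finite-part associativity, on each pair
$[\rho_2(X),\rho_2(\Phi)]=\rho_2([X,\Phi]_1)+\operatorname{ad}_{\mathbb M_X[\Phi]}$,
and the last term does not vanish. Work in the scalar sector with the pairwise representative, so $s^0_{\mu,1}=0$ and $[X,\Phi]_1=0$. Take $w=\kappa_Y$ and $\Phi_a=\Delta_a^2$.
\begin{itemize}
\item $\rho_2(X)\kappa_Y=[K_X^{ab},K_Y^{ab}]_{\rm fp}=0$, so $\rho_2(\Phi)\rho_2(X)\kappa_Y=0$.
\item By \eqref{eq:section7-explicit-double-commutator}, $\rho_2(X)\rho_2(\Phi)\kappa_Y=-[K_X^{ab},[K_Y^{ab},\Phi_{ab}]]_{\rm fp}=-L_X^{ab}L_Y^{ab}(\delta_a^-)^2\Phi_a\,T_a^{-2}=-2L_X^{ab}L_Y^{ab}T_a^{-2}$.
\end{itemize}
The result is a double Cauchy pole on $\bar z_a=\bar z_b$, which is nonzero in $F_2/F_1$ by proposition~\ref{prop:appC-conormal-separation}. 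Hence, with $\rho_2(X)$ as in \eqref{eq:log-soft-action-M2}, the Jacobiator of \eqref{eq:renormalized-bracket-section6} on $(X,H[\Phi],(0,t\kappa_Y))$ is nonzero, and no argument can establish your step (ii) as stated.

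\textbf{What a repair would require.} You have to change the setup, not the argument. One option is to let $X$ act on $t\mathcal M_2$ only through its $F_1$ representative. This is consistent with $t^2=0$ and with the splitting $X\mapsto X-t\kappa_X$, but the $\rho_2(X)$ terms in theorem~\ref{thm:second-layer-jacobi} then drop out. The other option is to keep same-pair products and give $\mathcal M_2$ the finite-part bracket instead of setting $t^2=0$. Then the extra term $[w,\mathfrak m_X(\Phi)]_{\rm fp}$ coming from the $t$-part of $[X,H[\Phi]]$ cancels $\operatorname{ad}_{\mathbb M_X[\Phi]}w$, but the extension is no longer abelian. Either way, the real obstacle is this triple with a module entry, not the all-soft triple you single out.
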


\begin{proof}
Equation~\eqref{eq:hard-current-one-cocycle} is the Chevalley--Eilenberg equation on two hard currents and one soft insertion; theorem~\ref{thm:second-layer-jacobi} gives the component with two soft insertions and one hard current.  The all-hard component is the Jacobi identity of \(\mathfrak h_{\rm hard}\); the all-soft component is abelian in the distinct-leg sector, with local terms contained in \(F_1\).  Hence \(d_{\rm CE}\nu=0\) on every triple relevant through support depth two.  The standard abelian-extension calculation then proves the Jacobi identity for \eqref{eq:renormalized-bracket-section6}.
\end{proof}

\begin{corollary}[Relative splitting]
\label{cor:relative-splitting-section6}
Define a one-cochain $\alpha:\mathfrak g_1\to\mathcal M_2$ by
$\alpha(H[\Phi])=0$ and $\alpha(X)=\kappa_X$.  Through support depth two,
\begin{equation}
\label{eq:nu-relative-coboundary}
  \nu=d_{\rm CE}\alpha.
\end{equation}
Consequently the change of generators
\begin{equation}
\label{eq:relative-splitting-generator-change}
  X\longmapsto X-t\kappa_X
\end{equation}
splits the extension after adjoining $\mathcal M_2$.  No change of generators valued in $F_1$ produces such a splitting when the pairwise residue is nonzero.
\end{corollary}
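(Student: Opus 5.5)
The plan is to compute $d_{\rm CE}\alpha$ directly on each type of pair of generators of $\mathfrak g_1$ and compare with the defining table \eqref{eq:nu-definition-section6}, using the sign convention of the bracket \eqref{eq:renormalized-bracket-section6}. In that convention the one-cochain differential is
\[
(d_{\rm CE}\alpha)(x,y)=x\cdot\alpha(y)-y\cdot\alpha(x)-\alpha([x,y]_1).
\]
I will first fix this convention so that it agrees with the one used in \eqref{eq:m-is-relative-coboundary}; the overall sign of $\alpha$ may need to be flipped to match.

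There are three cases to check.

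\emph{Two hard currents.} Since $\alpha$ vanishes on $\mathfrak h_{\rm hard}$ and $[\Phi,\Psi]_\star\in\mathfrak h_{\rm hard}$, all three terms vanish. This agrees with $\nu(H[\Phi],H[\Psi])=0$.

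\emph{One soft insertion and one hard current.} Here $(d_{\rm CE}\alpha)(X,H[\Phi])=X\cdot 0-\rho_2(\Phi)\kappa_X-\alpha([X,H[\Phi]]_1)$. The bracket $[X,H[\Phi]]_1=H[\mathcal L_X\Phi]$ is a hard current by \eqref{eq:one-particle-soft-hard-bracket}, so the last term is zero. Proposition~\ref{prop:canonical-primitive-relative-splitting} then identifies $-\rho_2(\Phi)\kappa_X$ with $\mathfrak m_X(\Phi)=\nu(X,H[\Phi])$. Antisymmetry handles the reversed ordering.

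\emph{Two soft insertions.} The expression is $\rho_2(X)\kappa_Y-\rho_2(Y)\kappa_X-\kappa_{[X,Y]_1}$, and I must show it vanishes in $\mathcal M_2$ modulo $F_3$, matching $\nu(X,Y)=0$. Using \eqref{eq:log-soft-action-M2}, $\rho_2(X)\kappa_Y=\sigma_2[K_X,K_Y]_{\rm fp}$, and similarly $\rho_2(Y)\kappa_X=-\sigma_2[K_X,K_Y]_{\rm fp}$ up to the same ordering. On a fixed ordered pair the finite-part commutator is computed by analytic regularization. In the distinct-leg abelian sector the pairwise kernels commute, as noted after \eqref{eq:log-soft-representation-M2} and as required by Proposition~\ref{prop:DD-OPE}. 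Products involving two independent pairs lie in $F_3$, and local terms lie in $F_1$ and are killed by $\sigma_2$. Finally, $\kappa_{[X,Y]_1}$ is the class of the pair kernel of a bracket in $\mathfrak g_1$; since that bracket is valued in the one-particle closure $\mathfrak s_{\log}$, its depth-two symbol should vanish.

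I expect the main obstacle to be this soft--soft case. The danger is that the representation identity \eqref{eq:log-soft-representation-M2} gives $\sigma_2[K_X,K_Y]_{\rm fp}=\rho_2([X,Y]_1)$ applied to something, rather than literally zero. The first thing I would try is to reduce, via bilinearity, to generators on a single ordered pair and invoke the commutativity in the diagonal scalar sector. I would then check that the definition of $\kappa$ on $[X,Y]_1$ is consistent, defining $\kappa$ on one-particle local brackets to be $0$ if necessary.

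For the splitting statement, note that the generator change $X\mapsto X-t\kappa_X$ is exactly the automorphism of $\mathfrak g_1\oplus_\nu t\mathcal M_2$ induced by $\alpha$. A coboundary twist shifts the cocycle by $d_{\rm CE}\alpha$, so the new cocycle is $\nu-d_{\rm CE}\alpha=0$ and the extension becomes a semidirect product. For non-splitting within $F_1$, suppose some $\beta$ valued in $F_1$ satisfied $d_{\rm CE}\beta=\nu$. Evaluating on $(X,H[\Phi])$ gives $\mathfrak m_X(\Phi)=-\rho_2(\Phi)\beta(X)+\beta(\ldots)$, and filtration compatibility places every such term in $F_1$, where its pairwise residue vanishes. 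This contradicts Theorem~\ref{thm:nonabsorption-section4}, exactly as in the proof of Proposition~\ref{prop:canonical-primitive-relative-splitting}.
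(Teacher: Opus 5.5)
Your proposal is correct and follows the paper's proof essentially step for step. Hard--hard pairs vanish trivially. The mixed pair is exactly Proposition~\ref{prop:canonical-primitive-relative-splitting}, with $[X,H[\Phi]]_1=H[\mathcal L_X\Phi]$ killed by $\alpha$. The soft--soft component vanishes because the scalar pair kernels commute. The splitting and the absence of an $F_1$-valued primitive follow from the standard coboundary twist and the pairwise residue criterion.

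One small correction in the soft--soft case: $\kappa_{[X,Y]_1}=0$ needs no ad hoc definition. It also does not follow merely from $[X,Y]_1\in\mathfrak s_{\log}$, since the logarithmic insertions lie there with $\kappa_X\neq0$. It holds because the distinct-leg part of the bracket is $[K_X^{ab},K_Y^{ab}]_{\rm fp}=0$, so $[X,Y]_1$ contains only local contact terms; their kernels lie in $F_1$ and are annihilated by $\sigma_2$.
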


\begin{proof}
On mixed pairs, eq.~\eqref{eq:nu-relative-coboundary} is proposition~\ref{prop:canonical-primitive-relative-splitting}.  On hard-hard pairs both sides vanish.  In the abelian distinct-leg sector the scalar pair kernels commute, hence $\rho_2(X)\kappa_Y-\rho_2(Y)\kappa_X-\kappa_{[X,Y]_1}=0$ modulo $F_1$; this is the soft-soft component.  The standard change-of-splitting formula gives \eqref{eq:relative-splitting-generator-change}, and the final assertion follows from the pairwise residue criterion.
\end{proof}

\begin{corollary}[Universal finite-energy recursion]
\label{cor:universal-finite-energy-recursion}
Let \(\{\Phi_r\}\) generate \(\mathfrak h_{\rm hard}\).  The second-support coefficients of every iterated hard-current commutator with a fixed logarithmic insertion \(X\) are determined by the seed values \(\mathfrak m_X(\Phi_r)\), the module action, and the cocycle identity \eqref{eq:hard-current-one-cocycle}.  The analytic Plancherel transform propagates this recursion to every continuous and crossed-pole channel in the symmetry-governed logarithmic sector.
\end{corollary}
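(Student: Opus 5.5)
The recursion is an induction on the bracket length of iterated commutators in \(\mathfrak h_{\rm hard}\), with the cocycle identity \eqref{eq:hard-current-one-cocycle} as the inductive step.

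\textbf{Step 1: reduce to the span of iterated brackets.} Every element of the Lie subalgebra generated by \(\{\Phi_r\}\) is a finite linear combination of iterated brackets \([\Phi_{r_1},[\Phi_{r_2},\dots,\Phi_{r_k}]_\star\dots]_\star\). The map \(\mathfrak m_X\) is linear in \(\Phi\). This is visible from the explicit representative \eqref{eq:M-Phi-explicit}: both \(\nabla_a^-\) and \(\sigma_2\) are linear. Hence it suffices to determine \(\mathfrak m_X\) on iterated brackets.

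\textbf{Step 2: induct on the length \(k\).} For \(k=1\) the value is the seed \(\mathfrak m_X(\Phi_r)\). For \(k>1\), write the bracket as \([\Phi_{r_1},\Psi]_\star\) with \(\Psi\) of length \(k-1\). Theorem~\ref{thm:hard-current-one-cocycle} then gives
\begin{equation*}
  \mathfrak m_X([\Phi_{r_1},\Psi]_\star)=\Phi_{r_1}\cdot\mathfrak m_X(\Psi)-\Psi\cdot\mathfrak m_X(\Phi_{r_1}).
\end{equation*}
The first term is fixed by the inductive hypothesis together with the diagonal action \(\rho_2\) on \(\mathcal M_2\). The second term requires \(\rho_2(\Psi)\), and this is determined by the generators: the map \(\Phi\mapsto\rho_2(\Phi)\) is a Lie-algebra representation, since it is the diagonal action on two legs built from Lemma~\ref{lem:hard-current-representation}. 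So \(\rho_2(\Psi)\) is an iterated commutator of the operators \(\rho_2(\Phi_{r_i})\).

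A priori the answer could depend on the bracket expression chosen for a given element, because different expressions can represent the same element. This is not a real ambiguity: \(\mathfrak m_X\) is a well-defined function that the recursion merely computes. If \(\mathfrak h_{\rm hard}\) is only topologically generated, I would extend the conclusion by continuity of \(\mathfrak m_X\) in the Mellin and profile topologies used in Lemma~\ref{lem:channel-annihilator-section4}.

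\textbf{Step 3: pass to channels.} I would apply the analytic Plancherel resolution of Proposition~\ref{prop:pairwise-plancherel-resolution}. The hard-current action commutes with the diagonal Lorentz decomposition, so it acts channelwise as \(\rho^{ab}_{\nu,\ell}\) and likewise on each crossed-pole residue projector. The recursion therefore projects to \eqref{eq:hard-current-coefficient-recursion} in every continuous and residue channel. These coefficients are well defined modulo the shadow relation, because the Knapp--Stein intertwiners are equivariant for the diagonal action.

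The main obstacle I expect is justifying that \(\rho_2\) preserves the crossed-pole residue channels and commutes with the continued projectors when the Mellin labels contain shifts \(T_{a}\). Such shifts move the external weights, and with them the set \(\mathfrak R_{ab}\). My first approach would be to work stripwise, on a fixed compact pole-free strip where Proposition~\ref{prop:appC-stripwise-plancherel} applies. There I would restrict to labels whose shifts keep the weights inside the strip, and only then pass to the inverse limit under Assumption~\ref{ass:appC-meromorphic-continuation}.
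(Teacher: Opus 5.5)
Your Steps 1 and 2 are the paper's argument. Its proof applies \eqref{eq:hard-current-one-cocycle} recursively to Lie words in the $\Phi_r$, and each new coefficient is a sum of module actions on coefficients of shorter words. Your remarks on linearity of $\mathfrak m_X$, on $\rho_2$ being a representation, and on well-definedness make this explicit.

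Step 3, however, rests on a claim that is false for general labels: the hard-current action does not commute with the diagonal Lorentz decomposition. Take the label $\Phi=\Delta\,\varphi(z,\bar z)$ with a nonconstant profile $\varphi\in\mathscr P(X)$. For $L$ independent of $\Delta_a$, the shift relation gives $\rho_2(\Phi)(L\,T_a^{-1})=[\Phi_a+\Phi_b,\,L\,T_a^{-1}]=\varphi(z_a,\bar z_a)\,L\,T_a^{-1}$. Multiplication by $\varphi$ is not $SL(2,\mathbb C)$-invariant, since conjugation replaces $\varphi$ by its M\"obius pullback. It therefore does not act diagonally in $(\nu,\ell)$ in general, and for the same reason you cannot appeal to Knapp--Stein equivariance.

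You do not need channel-diagonality. The paper uses only that the continued transform $\mathcal P_2^{\rm an}$ is linear. On each pole-free strip it is in fact a topological isomorphism onto the shadow quotient plus the crossed-pole residue channels (proposition~\ref{prop:appC-stripwise-plancherel}). Applying it to both sides of the cocycle identity gives the same recursion for the channel coefficients. The module action is transported to $\mathcal P_2^{\rm an}\rho_2(\Phi)(\mathcal P_2^{\rm an})^{-1}$, which is in general a kernel in the channel labels and is diagonal only for channel-preserving labels. Compatibility with the shadow relation is automatic, because that relation is built into the target of the isomorphism.

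With Step 3 replaced by this transport, your argument proves the corollary. Your stripwise treatment of shifts that move $\mathfrak R_{ab}$ is a sensible refinement of a point the paper leaves implicit.
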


\begin{proof}
Apply \eqref{eq:hard-current-one-cocycle} recursively to the Lie words in the generators \(\Phi_r\).  At each step the new coefficient is a sum of module actions on coefficients of shorter Lie words.  The analytic Plancherel transform is linear and therefore preserves this recursion.
\end{proof}

\subsection{Minimality}
\label{subsec:minimality-section6}

\begin{theorem}[Minimal closure]
\label{thm:minimal-closure-section6}
Let \(\mathfrak B\) be a filtered, \(SL(2,\mathbb C)\)-covariant current algebra acting on the infrared-subtracted hard correlators.  Assume that
\begin{enumerate}
  \item its one-particle quotient contains \(\mathfrak h_{\rm hard}\rtimes\mathfrak s_{\log}\);
  \item its mixed soft-hard bracket reproduces the logarithmic Ward kernel \eqref{eq:M-Phi-explicit};
  \item it is closed under polynomial Mellin-difference hard currents, local smooth coordinate profiles on each hard leg, soft-point smearing, and the diagonal Lorentz action;
  \item its bracket is Jacobi compatible through support depth two.
\end{enumerate}
Then \(\operatorname{gr}_2\mathfrak B\) contains a subquotient isomorphic to \(\mathcal M_2\), including the residue channels crossed by the soft Mellin shift.  Consequently, \(\mathfrak g_{\rm ren}^{\log}\) is minimal among such extensions.
\end{theorem}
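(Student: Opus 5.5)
The argument constructs, inside $\operatorname{gr}_2\mathfrak B$, a closed submodule that is the image of the soft-accessible generators. It then identifies that submodule with $\mathcal M_2$ through theorem~\ref{thm:soft-accessible-generation-section4}.

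First, let $X\in\mathfrak s_{\log}$ be a normalized logarithmic insertion and $\Phi$ a polynomial Mellin-difference label. Assumption~1 lets us lift both to $\mathfrak B$. Assumption~2 then says that the bracket $[X,H[\Phi]]_{\mathfrak B}$ has depth-two part equal to $\mathbb M_X[\Phi]$ of eq.~\eqref{eq:M-Phi-explicit}, modulo the one-particle filtration. By lemma~\ref{lem:one-particle-scheme-invariance}, its class in $\operatorname{gr}_2\mathfrak B=\mathfrak B_2/\mathfrak B_1$ is independent of the chosen lifts and of any $F_1$-valued scheme change. This class maps to $\mathfrak m_X(\Phi)\in F_2/F_1$, because $\mathfrak B$ acts on hard correlators through its Ward kernels.

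We therefore have a natural map from the span of these classes in $\operatorname{gr}_2\mathfrak B$ onto the span of the $\mathfrak m_X(\Phi)$ in $\mathcal M_2$. This map is surjective onto its image and compatible with the action of $\mathfrak g_1$. Compatibility is where assumption~4 enters: Jacobi compatibility through depth two forces $\operatorname{gr}_2\mathfrak B$ to be a $\mathfrak g_1$-module, and the induced cochain satisfies the cocycle identity of theorem~\ref{thm:hard-current-one-cocycle} and the mixed identity of theorem~\ref{thm:second-layer-jacobi}. As a result, iterated brackets with hard currents and soft insertions act on these classes exactly by $\rho_2$.

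Second, we use closure. Assumption~3 makes the image stable under polynomial Mellin labels, leg profiles, soft-point smearing and the diagonal Lorentz action. Denote by $\mathfrak B_2^{\rm soft}\subset\operatorname{gr}_2\mathfrak B$ the closed submodule generated by the lifted classes. The map $\pi:\mathfrak B_2^{\rm soft}\to F_2/F_1$ given by the Ward action has image containing the generating family of theorem~\ref{thm:soft-accessible-generation-section4}. Lemma~\ref{lem:angular-cyclicity-section4} supplies the angular factors and lemma~\ref{lem:mellin-cyclicity-section4} the Mellin factors; in particular the label $p(\Delta)=-\Delta$ produces every shifted wave packet. Since $\pi$ is continuous, that theorem gives $\overline{\pi(\mathfrak B_2^{\rm soft})}=\mathcal M_2$, residue channels included. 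Lemma~\ref{lem:channel-annihilator-section4} rules out any channel annihilating the image.

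The desired subquotient is then $\mathfrak B_2^{\rm soft}/\ker\pi\simeq\mathcal M_2$. It is Lorentz covariant because $\pi$ intertwines the diagonal actions. Minimality follows at once: $\mathfrak g_{\rm ren}^{\log}$ has second layer exactly $\mathcal M_2$, so any $\mathfrak B$ satisfying the hypotheses contains at least this much. Moreover, by theorem~\ref{thm:nonabsorption-section4} nothing in this layer can be pushed into the one-particle quotient.

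\textbf{Main obstacle.} I expect the hard step to be the topological one: showing that the image of the closure of $\mathfrak B_2^{\rm soft}$ is the closure of the image, and that the isomorphism holds in the Hausdorff shadow quotient rather than merely densely. The plan is to work stripwise via proposition~\ref{prop:appC-stripwise-plancherel}, where $\pi$ composed with the continued Plancherel map is an open map between Fréchet spaces. The open mapping theorem gives closed range on each strip, and we pass to the inverse limit under assumption~\ref{ass:appC-meromorphic-continuation}. The crossed-pole residue channels are finite-dimensional and are handled separately as explicit cokernel data.
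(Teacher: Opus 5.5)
Your route is the paper's: a nonzero class $\mathfrak m_X(\Phi)$ from hypothesis~2 and the residue criterion, its orbit under hard-current brackets, angular and Mellin cyclicity (lemmas~\ref{lem:angular-cyclicity-section4} and \ref{lem:mellin-cyclicity-section4}), theorem~\ref{thm:soft-accessible-generation-section4}, and invariance under $F_1$-valued scheme changes. Your comparison map $\pi:\operatorname{gr}_2\mathfrak B\to F_2/F_1$ and the quotient by $\ker\pi$ add something the paper leaves implicit. The paper simply treats $\mathfrak m_X(\Phi)$ as a class in $\operatorname{gr}_2\mathfrak B$; your map explains why the conclusion is a \emph{subquotient}. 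One justification is wrong, though harmless. Lemma~\ref{lem:one-particle-scheme-invariance} does not make the class independent of the lifts. Lifts of $X$ from the one-particle quotient differ by depth-two elements $u$, not by elements of $F_1$. Replacing $X$ by $X+u$ changes the class of $[X,H[\Phi]]$ by $-\Phi\cdot u$, and corollary~\ref{cor:relative-splitting-section6} shows that the lift $X-t\kappa_X$ kills the class entirely. It is hypothesis~2 that selects the lift, and you should use it for that.

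The real gap is the step you flag yourself. Theorem~\ref{thm:soft-accessible-generation-section4} only gives that $\pi(\mathfrak B_2^{\rm soft})$ is dense in $\mathcal M_2$. So $\mathfrak B_2^{\rm soft}/\ker\pi\simeq\pi(\mathfrak B_2^{\rm soft})$ is a priori a proper dense submodule, not $\mathcal M_2$. Your proposed repair does not work. The open mapping theorem presupposes surjectivity, which is exactly what is missing, and it says nothing about closedness of the range of a non-surjective map. To get closed range you would need $\mathfrak B_2^{\rm soft}$ to be complete and $\pi$ to be open onto its image, and hypotheses 1--4 give neither. As stated, $\operatorname{gr}_2\mathfrak B$ carries no topology at all, so ``the closed submodule generated'' is undefined, and nothing you assume makes it Fr\'echet.

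The missing input is a completeness reading of hypothesis~3. The image $\pi(\operatorname{gr}_2\mathfrak B)$ must be closed in $F_2/F_1$, with the topology transported from $\mathcal M_2$ by proposition~\ref{prop:appC-F2-M2-identification}. Then that image is a closed submodule containing the generating family, hence contains $\mathcal M_2$, and $\pi^{-1}(\mathcal M_2)/\ker\pi\simeq\mathcal M_2$ is the required subquotient. No stripwise open-mapping argument is needed. This is how the paper's phrase ``the resulting closed module'' has to be read. Without it, hypotheses 1--4 yield only a dense submodule of $\mathcal M_2$.
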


\begin{proof}
The pairwise residue criterion gives a nonzero class \(\mathfrak m_X(\Phi)\in\operatorname{gr}_2\mathfrak B\) whenever \(\nabla^-\Phi\neq0\).  Closure under hard-current brackets contains the full cocycle orbit of that class.  Closure under local coordinate profiles and soft-point smearing gives the angular conormal symbols by lemma~\ref{lem:angular-cyclicity-section4}, while lemma~\ref{lem:mellin-cyclicity-section4} gives all analytically shifted Mellin wave packets.  The generation theorem \eqref{eq:soft-accessible-generation} identifies the resulting closed module with \(\mathcal M_2\) modulo its intrinsic shadow-null submodule.  A one-particle scheme change lies in \(F_1\) and therefore leaves this generated subquotient unchanged.
\end{proof}

\section{Four-point realization and exponentiated action}
\label{sec:checks-examples-outlook}

The four-point analysis includes a tree-level scalar-QED photon-exchange block and the contact block used in section~\ref{subsec:pairwise-support-first-obstruction}.  Both test the residue criterion, the hard-current cocycle and the mixed Jacobi identity; the exchange block also retains a genuine propagator pole.

Let
\begin{equation}
  \mathcal O_i=
  \mathcal O^{Q_i,\eta_i}_{\Delta_i,0}(z_i,\bar z_i),
  \qquad i=1,\ldots,4,
  \label{eq:section7-four-scalar-operators}
\end{equation}
with \(\sum_i e_i=0\), and write
\begin{equation}
\begin{aligned}
  C_4(1,2,3,4)={}&
  \int_0^\infty\prod_{i=1}^4d\omega_i\,
  \omega_i^{\Delta_i-1}
  \delta^{(4)}\!\left(\sum_i\eta_i\omega_iq_i\right)
  \\
  &\hspace{1.7cm}\times
  \mathcal H_4^{\rm ren}
  (\omega_i,z_i,\bar z_i;Q_i,\eta_i;\mu) \, .
  \label{eq:section7-C4-definition}
\end{aligned}
\end{equation}

For the running exchange block \eqref{eq:running-exchange-hard-block} and the Mellin-counting current \(\Phi_1(\Delta_1)=\Delta_1\), the ordered pair \((1,2)\) obeys
\begin{equation}
  \sigma_2\mathbb M^0_{12}[\Phi_1]C_4^{\rm ex}
  =-\eta_1Q_1^2e_2
  \frac{\bar w-\bar z_2}
       {(w-z_1)(\bar z_1-\bar z_2)}
  T_1^{-1}C_4^{\rm ex} .
  \label{eq:section7-exchange-M12}
\end{equation}
The propagator pole in \eqref{eq:running-exchange-hard-block} is untouched by the universal Ward operator.  The associated two-particle coefficient is the universal linear functional
\begin{equation}
\begin{aligned}
  \mathcal C_{\rm ex;\nu,\ell}^{12}(w,\bar w)
  ={}&-\eta_1Q_1^2e_2
  \left\langle
  \mathbb P_{12}^{\rm an}(\nu,\ell;\Delta_1-1,\Delta_2),
  \right.\\
  &\left.\hspace{1.2cm}
  \frac{\bar w-\bar z_2}
       {(w-z_1)(\bar z_1-\bar z_2)}
  T_1^{-1}C_4^{\rm ex}
  \right\rangle ,
  \label{eq:section7-exchange-primary-coefficient}
\end{aligned}
\end{equation}
with the corresponding residue projections included when the continued contour crosses a pole.

\subsection{Pairwise residues and primary coefficients}
\label{subsec:section7-low-point}

The normalized logarithmic soft insertion acts by
\begin{equation}
\begin{aligned}
  &\left\langle
  \mathsf S^0_\mu(w,\bar w)
  \mathcal O_1\mathcal O_2\mathcal O_3\mathcal O_4
  \right\rangle_{\mathcal H,\mu}
  \\
  &\qquad=
  \sum_{a=1}^{4}\sum_{\substack{b=1\\ b\ne a}}^{4}
  \frac{\eta_aQ_a^2e_b}{w-z_a}
  \frac{\bar w-\bar z_b}{\bar z_a-\bar z_b}
  T_a^{-1}C_4 \, .
  \label{eq:section7-S0-four-point}
\end{aligned}
\end{equation}
The residue on the unordered diagonal \(\bar z_1=\bar z_2\) is
\begin{equation}
\begin{aligned}
  \operatorname{Res}_{\bar z_1=\bar z_2}
  \sigma_2(\mathsf S^0_\mu C_4)
  ={}&
  \frac{\eta_1Q_1^2e_2}{w-z_1}
  (\bar w-\bar z_2)T_1^{-1}C_4
  \\
  &-
  \frac{\eta_2Q_2^2e_1}{w-z_2}
  (\bar w-\bar z_1)T_2^{-1}C_4 \, .
  \label{eq:section7-residue-12}
\end{aligned}
\end{equation}
For generic charged data in the sense defined before theorem~\ref{thm:nonabsorption-section4}, this distribution is nonzero.  It is therefore excluded from \(F_1\) by theorem~\ref{thm:nonabsorption-section4} and has the analytic primary decomposition of proposition~\ref{prop:pairwise-plancherel-resolution}.

For a concrete hard-current insertion, take the contact block
\begin{equation}
  C_4^{\rm cont}=g\,\mathcal I_4
  \label{eq:section7-contact-block}
\end{equation}
and retain the Mellin-counting label \(\Phi_1(\Delta_1)=\Delta_1\).  Its scalar backward difference is \(\delta_1^-\Phi_1=-1\), hence \(\nabla_1^-\Phi_1=-T_1^{-1}\).  The ordered pair \((1,2)\) gives
\begin{equation}
  \sigma_2\mathbb M^0_{12}[\Phi_1]C_4^{\rm cont}
  =-\eta_1Q_1^2e_2
  \frac{\bar w-\bar z_2}
       {(w-z_1)(\bar z_1-\bar z_2)}
  T_1^{-1}C_4^{\rm cont} \, .
  \label{eq:section7-explicit-M12}
\end{equation}
Its pairwise residue is
\begin{equation}
  \operatorname{Res}_{\bar z_1=\bar z_2}
  (w-z_1)\sigma_2\mathbb M^0_{12}[\Phi_1]C_4^{\rm cont}
  =-\eta_1Q_1^2e_2(\bar w-\bar z_2)g\,
  T_1^{-1}\mathcal I_4 \, .
  \label{eq:section7-explicit-residue}
\end{equation}
The continuous two-particle coefficient is the meromorphic Plancherel pairing
\begin{equation}
\begin{aligned}
  \mathcal C^{12}_{\Phi_1;\nu,\ell}(w,\bar w)
  ={}&-\eta_1Q_1^2e_2g
  \left\langle
  \mathbb P_{12}^{\rm an}(\nu,\ell;\Delta_1-1,\Delta_2),
  \right.
  \\
  &\left.\hspace{1.6cm}
  \frac{\bar w-\bar z_2}
       {(w-z_1)(\bar z_1-\bar z_2)}
  T_1^{-1}\mathcal I_4
  \right\rangle \, ,
  \label{eq:section7-explicit-C12}
\end{aligned}
\end{equation}
On the pole-free component chosen for this check, \(\mathfrak R_{12}=\varnothing\).  If the Mellin contour is deformed across an intertwiner pole, the corresponding coefficient is obtained from the same pairing with \(\mathbb P_{12}^{\rm res}\).  Formula \eqref{eq:section7-explicit-C12} then fixes the multiparticle coefficient from the contact partial-wave coefficient.

For the contact block \eqref{eq:section7-contact-block}, the map \(\Phi\mapsto\sigma_2\mathbb M^0[\Phi]C_4^{\rm cont}\) obeys the hard-current one-cocycle identity \eqref{eq:hard-current-one-cocycle}.  Equation~\eqref{eq:section7-explicit-residue} shows that it is nonzero for \(\Phi_1=\Delta_1\) whenever \(\eta_1Q_1^2e_2g\neq0\).

For the exponential Mellin label $\Phi_{1,s}=e^{s\Delta_1}$, proposition~\ref{prop:pairwise-exponentiation} gives a closed tower on the ordered pair $(1,2)$.  Writing
\begin{equation}
\label{eq:section7-L12-definition}
  L_{12}(w,\bar w)
  =\eta_1Q_1^2e_2
   \frac{\bar w-\bar z_2}{(w-z_1)(\bar z_1-\bar z_2)},
\end{equation}
one obtains
\begin{equation}
\label{eq:section7-exponentiated-contact-action}
  e^{\tau K_{12}}e^{s\Delta_1}e^{-\tau K_{12}}C_4^{\rm cont}
  =e^{s\Delta_1}
   \exp\!\left[\tau L_{12}(e^{-s}-1)T_1^{-1}\right]
   C_4^{\rm cont}.
\end{equation}
Writing the expansion as $\sum_{n\geq0}\tau^n\mathcal C_{s;\nu,\ell}^{12,(n)}/n!$, the continuous primary coefficients are
\begin{equation}
\label{eq:section7-exponentiated-primary-coefficient}
\begin{aligned}
  \mathcal C_{s;\nu,\ell}^{12,(n)}
  ={}&(e^{-s}-1)^n
  \left\langle
  \mathbb P_{12}^{\rm an}(\nu,\ell;\Delta_1-n,\Delta_2),
  \right.\\
  &\left.\hspace{2.0cm}
  L_{12}^{n}e^{s\Delta_1}T_1^{-n}C_4^{\rm cont}
  \right\rangle.
\end{aligned}
\end{equation}
The residue channels obey the same formula with $\mathbb P_{12}^{\rm res}$.  Factoring the leading soft pole as in eq.~\eqref{eq:reduced-fixed-leg-kernel} and setting $\tau=\lambda_{\rm em}$ reproduces the hard-current lift \eqref{eq:ckp-hard-current-lift}; hence the coefficients are fixed by the same one-loop ordered-pair operator that appears in the CKP exponential.

\subsection{The mixed soft-soft-hard identity}
\label{subsec:section7-soft-soft-hard-check}

Let \(X=(w_1,\bar w_1)\), \(Y=(w_2,\bar w_2)\), and let \(K_X^{ab}\), \(K_Y^{ab}\) be the corresponding ordered-pair kernels.  On the pair \((a,b)\), the universal bilinear contribution to the mixed Jacobiator is
\begin{equation}
\begin{aligned}
  \Omega_{XY;\Phi}^{ab}={}&
  [K_X^{ab},[K_Y^{ab},\Phi_a+\Phi_b]]_{\rm fp}
  -[K_Y^{ab},[K_X^{ab},\Phi_a+\Phi_b]]_{\rm fp}
  \\
  &-[[K_X^{ab},K_Y^{ab}]_{\rm fp},\Phi_a+\Phi_b]_{\rm fp} \, .
  \label{eq:section7-Omega-four-point}
\end{aligned}
\end{equation}
For a scalar label \(\Phi_a=f_a(\Delta_a)\), write
\(K_X^{ab}=L_X^{ab}T_a^{-1}\) and
\(K_Y^{ab}=L_Y^{ab}T_a^{-1}\).  Direct composition gives
\begin{equation}
\begin{aligned}
  [K_X^{ab},[K_Y^{ab},\Phi_a+\Phi_b]]_{\rm fp}
  =L_X^{ab}L_Y^{ab}
  \bigl(&f_a(\Delta_a-2)-2f_a(\Delta_a-1)
  \\
  &+f_a(\Delta_a)\bigr)T_a^{-2} \, .
  \label{eq:section7-explicit-double-commutator}
\end{aligned}
\end{equation}
The expression is symmetric under \(X\leftrightarrow Y\).  In the diagonal scalar sector \([K_X^{ab},K_Y^{ab}]_{\rm fp}=0\), and hence \(\Omega_{XY;\Phi}^{ab}=0\).  Matrix-valued kernels obey the full three-commutator identity of theorem~\ref{thm:second-layer-jacobi}.

Summing \eqref{eq:section7-Omega-four-point} over ordered pairs gives the second-support symbol of the Jacobiator.  The finite-part commutator Jacobi identity sets each pairwise summand to zero, while products involving two independent hard pairs have support depth at least three.  Therefore
\begin{equation}
  \sigma_2\mathcal J_{XY;\Phi}^{(4)}=0
  \qquad \bmod F_3 \, .
  \label{eq:section7-Jacobiator-vanishes}
\end{equation}

Projecting to \(F_1\) before forming the iterated brackets deletes the module-valued terms \(\mathfrak m_X(\Phi)\).  The resulting quotient therefore does not retain the filtered action, although the ambient Jacobi identity remains valid.  The extension of section~\ref{subsec:renormalized-extension-section6} restores the missing target.

\subsection{Representation content and further directions}
\label{subsec:section7-representation-outlook}

The symmetric dipole product contains the local spin-one current \(J_{\alpha\beta}^{(1)}\) and the symmetric projection of the two-particle distribution,
\begin{equation}
  :D_{(\alpha}(\bar u)D_{\beta)}(\bar v):_\mu
  =J_{\alpha\beta}^{(1)}(\bar v)
   +\mathbb B_{(\alpha\beta)}^{(2)}(\bar u,\bar v)
  \quad \bmod F_1^{\rm desc} \, .
  \label{eq:section7-symmetric-DD}
\end{equation}
The first term transforms in \(\operatorname{Sym}^2V_{1/2}\simeq V_1\), while the second is resolved by the analytic two-particle projectors.  This verifies the first higher-spin channel of the logarithmic current tower without asserting closure of the complete tower \cite{Banerjee:2026soft,Guevara:2021abz,Himwich:2021dau,Strominger:2021mtt}.

For nonabelian gauge theory, matrix-valued ordered-pair kernels should define a hard-current one-cocycle with values in color-ordered analytic two-particle modules.  The web algebra should replace the commuting scalar kernel algebra in the mixed compatibility condition.  In gravity, the same support filtration is accompanied by orbital and spin differential operators.  At higher loop order, overlapping pairwise kernels may produce higher support layers \(\mathcal M_k\).  These extensions require new soft input and are not assumed in the results above.

\appendix
\section{Conventions and logarithmic Mellin renormalization}
\label{app:conventions-logarithmic-mellin}

The all-outgoing parametrization separates orientation from the positive Mellin energy.  Soft-energy logarithms are treated by finite-part Mellin transforms, which produce the logarithmic current module.

\subsection{Spinor-helicity variables and celestial coordinates}
\label{appA:spinor-celestial}

We use mostly-plus signature, so that a future-directed null direction associated with a point on the celestial sphere is
\begin{equation}
\label{eq:appA-q-vector}
  q^\mu(z,\bar z)
  =\bigl(1+z\bar z,\,z+\bar z,\,-i(z-\bar z),\,1-z\bar z\bigr)\, .
\end{equation}
It obeys
\begin{equation}
\label{eq:appA-q-inner}
  q(z,\bar z)^2=0\, ,
  \qquad
  q(z,\bar z)\cdot q(w,\bar w)=-2|z-w|^2\, .
\end{equation}
All external particles are written in an all-outgoing convention,
\begin{equation}
\label{eq:appA-p-parametrization}
  p_i^\mu=\eta_i\omega_i q^\mu(z_i,\bar z_i)\, ,
  \qquad
  \omega_i>0\, ,
  \qquad
  \eta_i=\pm1\, .
\end{equation}
The physical electric charge is denoted by \(Q_i\), while the charge entering Ward identities is
\begin{equation}
\label{eq:appA-signed-charge}
  e_i=\eta_iQ_i\, .
\end{equation}
With this convention energy variables are always integrated over the positive real line, and incoming particles are represented by the sign \(\eta_i=-1\) rather than by negative energy.

The spinor-helicity variables are
\begin{equation}
\label{eq:appA-spinors}
  \lambda_{i\alpha}=\sqrt{2\omega_i}\binom{1}{z_i}\, ,
  \qquad
  \tilde\lambda_{i\dot\alpha}=\eta_i\sqrt{2\omega_i}\binom{1}{\bar z_i}\, .
\end{equation}
Then \(p_{i\alpha\dot\alpha}=\lambda_{i\alpha}\tilde\lambda_{i\dot\alpha}\).  We take \(\epsilon^{12}=1\) and define
\begin{equation}
\label{eq:appA-brackets}
  \langle ij\rangle=\epsilon^{\alpha\beta}\lambda_{i\alpha}\lambda_{j\beta}
  =2\sqrt{\omega_i\omega_j}\,(z_j-z_i)\, ,
  \qquad
  [ij]=\epsilon^{\dot\alpha\dot\beta}\tilde\lambda_{i\dot\alpha}\tilde\lambda_{j\dot\beta}
  =2\eta_i\eta_j\sqrt{\omega_i\omega_j}\,(\bar z_j-\bar z_i)\, .
\end{equation}
\begin{equation}
\label{eq:appA-dot-bracket}
  2p_i\cdot p_j=-\langle ij\rangle[ij] \,,
  \qquad
  p_i\cdot p_j=-2\eta_i\eta_j\omega_i\omega_j|z_i-z_j|^2\, .
\end{equation}
The invariant used in the hard infrared factor is therefore
\begin{equation}
\label{eq:appA-sij}
  s_{ij}=-2p_i\cdot p_j-i0
  =4\eta_i\eta_j\omega_i\omega_j|z_i-z_j|^2-i0\, .
\end{equation}
The sign of the infinitesimal imaginary part is fixed by the all-outgoing amplitude convention.

A positive-helicity polarization vector with reference spinor \(r_\alpha\) is
\begin{equation}
\label{eq:appA-positive-polarization-spinor}
  \varepsilon_{+\,\alpha\dot\alpha}(s;r)
  =\frac{\sqrt2\,r_\alpha\tilde\lambda_{s\dot\alpha}}{\langle r s\rangle}\, .
\end{equation}
Changing \(r\) shifts \(\varepsilon_+\) by a multiple of the soft momentum, and hence leaves the ordered-pair kernel in appendix~\ref{app:one-loop-soft-photon-kernels} invariant.  In a local patch one may choose a reference spinor so that
\begin{equation}
\label{eq:appA-positive-polarization-vector}
  \varepsilon_+^\mu(z,\bar z)
  =\frac{1}{\sqrt2}\bigl(\bar z,\,1,\,-i,\,-\bar z\bigr)\, ,
  \qquad
  \varepsilon_+\cdot q(z,\bar z)=0\, .
\end{equation}
This representative is useful for deriving the celestial-coordinate form of the soft kernel.  The current algebra itself is gauge-invariant and does not depend on it.

Under \(g=\left(\begin{smallmatrix}a&b\\ c&d\end{smallmatrix}\right)\in SL(2,\mathbb C)\), the celestial coordinate transforms as
\begin{equation}
\label{eq:appA-mobius}
  z'=\frac{az+b}{cz+d}\, ,
  \qquad
  \bar z'=\frac{\bar a\bar z+\bar b}{\bar c\bar z+\bar d}\, ,
\end{equation}
and the null vector transforms as
\begin{equation}
\label{eq:appA-q-transform}
  \Lambda(g)^\mu{}_{\nu}q^\nu(z,\bar z)=|cz+d|^2q^\mu(z',\bar z')\, .
\end{equation}
A fixed momentum may be re-expanded either before or after the Lorentz transformation, with the inverse rescaling absorbed into the Mellin integration variable.  The two descriptions give the same primary transformation law.

The Mellin transform of a massless plane-wave operator is
\begin{equation}
\label{eq:appA-celestial-operator}
  \mathcal O_{\Delta,J}^{Q,\eta}(z,\bar z)
  =\int_0^\infty d\omega\,\omega^{\Delta-1}
  a_J^{Q,\eta}(\omega,z,\bar z)\, .
\end{equation}
For principal-series states \(\Delta=1+i\lambda\), \(\lambda\in\mathbb R\).  The two-dimensional weights are
\begin{equation}
\label{eq:appA-celestial-weights}
  h=\frac{\Delta+J}{2}\, ,
  \qquad
  \bar h=\frac{\Delta-J}{2}\, .
\end{equation}
The finite transformation law is
\begin{equation}
\label{eq:appA-primary-transformation}
  \mathcal O_{\Delta,J}^{Q,\eta}(z,\bar z)
  \longmapsto
  (cz+d)^{\Delta+J}(\bar c\bar z+\bar d)^{\Delta-J}
  \mathcal O_{\Delta,J}^{Q,\eta}(z',\bar z')\, .
\end{equation}
We use the convention in the conformal basis of celestial amplitudes \cite{Pasterski:2016qvg,Pasterski:2017kqt,Pasterski:2017kqt2,Pasterski:2021rjz}.

\subsection{Global conformal generators and the antiholomorphic spin doublet}
\label{appA:global-generators}

For a primary of weights \((h,\bar h)\), the infinitesimal generators acting on the insertion point are taken to be
\begin{equation}
\label{eq:appA-global-generators}
  \mathcal L_n^{(h)}=-z^{n+1}\partial_z-(n+1)h z^n\, ,
  \qquad
  \bar{\mathcal L}_n^{(\bar h)}=-\bar z^{n+1}\partial_{\bar z}-(n+1)\bar h\bar z^n\, ,
  \qquad n=-1,0,1\, .
\end{equation}
They satisfy
\begin{equation}
\label{eq:appA-sl2-commutator}
  [\mathcal L_m,\mathcal L_n]=(m-n)\mathcal L_{m+n}\, ,
  \qquad
  [\bar{\mathcal L}_m,\bar{\mathcal L}_n]=(m-n)\bar{\mathcal L}_{m+n}\, ,
  \qquad
  [\mathcal L_m,\bar{\mathcal L}_n]=0\, .
\end{equation}
The sign convention matches the finite transformation \eqref{eq:appA-primary-transformation}: the transformed field at the transformed point is obtained by exponentiating the vector fields \(-z^{n+1}\partial_z\) and \(-\bar z^{n+1}\partial_{\bar z}\), together with the corresponding weight terms.

The Banerjee--Mandal--Sahoo dipole currents reviewed in section~\ref{subsec:antiholo-dipole-current} carry an additional fundamental index of the antiholomorphic real form \cite{Banerjee:2026soft}.  Following their spinor conventions, we write
\begin{equation}
\label{eq:appA-Zbar-spinor}
  \bar Z^\alpha(\bar z)=\binom{\bar z}{1}\, ,
  \qquad
  \epsilon_{\alpha\beta}=\begin{pmatrix}0&1\\ -1&0\end{pmatrix}\, ,
  \qquad
  \bar Z_\alpha(\bar z)=\epsilon_{\alpha\beta}\bar Z^\beta(\bar z)=\binom{1}{-\bar z}\, .
\end{equation}
For \(g=\left(\begin{smallmatrix}a&b\\ c&d\end{smallmatrix}\right)\in SL(2,\mathbb R)_R\), acting by
\begin{equation}
\label{eq:appA-SL2R-action}
  \bar z'=\frac{a\bar z+b}{c\bar z+d}\, ,
\end{equation}
we have
\begin{equation}
\label{eq:appA-Z-transform}
  \bar Z^\alpha(\bar z')=(c\bar z+d)^{-1}g^\alpha{}_{\beta}\bar Z^\beta(\bar z)\, ,
  \qquad
  \bar Z_\alpha(\bar z')=(c\bar z+d)^{-1}(g^{-1})^\beta{}_{\alpha}\bar Z_\beta(\bar z)\, .
\end{equation}
The basic dipole Ward kernel of the Banerjee--Mandal--Sahoo Ward identity, eqs.~(4.3) and (6.3) of \cite{Banerjee:2026soft}, is
\begin{equation}
\label{eq:appA-dipole-kernel}
  \frac{\bar Z_{i\alpha}}{\bar z-\bar z_i}\, .
\end{equation}
Using \eqref{eq:appA-Z-transform} and
\begin{equation}
\label{eq:appA-cauchy-ratio-transform}
  \bar z'-\bar z_i'=\frac{\bar z-\bar z_i}{(c\bar z+d)(c\bar z_i+d)}\, ,
\end{equation}
one finds that covariance of the Ward identity is equivalent to
\begin{equation}
\label{eq:appA-D-alpha-transform}
  D_\alpha^{(\mu)}(\bar z)
  \longmapsto
  \frac{1}{c\bar z+d}\,
  g^\beta{}_{\alpha}D_\beta^{(\mu)}(\bar z')\, .
\end{equation}
\(D_\alpha\) is a weight-one-half antiholomorphic field valued in the fundamental two-dimensional spin representation, rather than an ordinary antiholomorphic current of weight one.  Accordingly, the invariant contraction of two dipole currents in section~\ref{sec:dipole-dipole-ope} has the pole order of a product of two fields of weight one-half, while the symmetric tensor channel produces the spin-one current checked in section~\ref{subsec:section7-representation-outlook}.

For a constant spinor \(\bar W^\alpha=(\bar W,1)^t\), the contracted kernel is
\begin{equation}
\label{eq:appA-contracted-kernel}
  \bar W^\alpha\frac{\bar Z_{i\alpha}}{\bar z-\bar z_i}
  =\frac{\bar W-\bar z_i}{\bar z-\bar z_i}\, .
\end{equation}
This expression separates the monopole charge distribution from the antiholomorphic first moment, as in the Banerjee--Mandal--Sahoo interpretation of the current.  The coefficient of \(\bar W\) is the monopole component \(\sum_i e_i/(\bar z-\bar z_i)\).  The constant term is the position-weighted component \(-\sum_i e_i\bar z_i/(\bar z-\bar z_i)\).  At large \(\bar z\), charge conservation removes the total monopole charge and leaves the dipole moment \cite{Banerjee:2026soft}.

\subsection{Higher logarithms in Mellin space}
\label{appA:logarithmic-mellin-transforms}

For comparison with the higher logarithmic tower, the same calculation gives
\begin{equation}
  \int_0^\infty d\omega\,\omega^{s-1}
  \left(\log\frac{\omega}{\mu}\right)^m f(\omega)
  =\bigl(\partial_s-\log\mu\bigr)^mM_f(s)
  =\frac{(-1)^m m!f(0)}{s^{m+1}}+O(s^{-m}) \, .
  \label{eq:appA-higher-log-transform}
\end{equation}
For a soft factor of homogeneity \(\omega^r\), replace \(s\) by \(\Delta+r\); the pole is centered at \(\Delta=-r\).  The Mellin derivative acts on celestial operators according to
\begin{equation}
  \partial_\Delta\mathcal O_{\Delta,J}^{Q,\eta}(z,\bar z)
  =\int_0^\infty d\omega\,\omega^{\Delta-1}\log\omega\,
  a_J^{Q,\eta}(\omega,z,\bar z) \, .
  \label{eq:appA-delta-derivative}
\end{equation}
Higher powers of the logarithm produce higher Jordan partners.  Only the first logarithmic layer enters the analysis.

\subsection{Finite parts, residues and scale changes}
\label{appA:finite-parts-scale-changes}

Let \(F(s)\) be meromorphic at \(s=0\), with Laurent expansion
\begin{equation}
\label{eq:appA-Laurent}
  F(s)=\frac{A_{-2}}{s^2}+\frac{A_{-1}}{s}+A_0+O(s)\, .
\end{equation}
We use
\begin{equation}
\label{eq:appA-residue-definitions}
  \operatorname{DRes}_{s=0}F=A_{-2}\, ,
  \qquad
  \operatorname{Res}_{s=0}F=A_{-1}\, ,
  \qquad
  \operatorname{FP}_{s=0}F=A_0\, .
\end{equation}
If the conformally soft point is \(\Delta=\Delta_\ast\), then \(s\) is replaced by \(\Delta-\Delta_\ast\).  These operations are applied after infrared subtraction of the hard amplitude.  They do not act on the unrenormalized S-matrix.

The renormalized plus-distributions make scale dependence transparent.  For a smooth cutoff \(\chi\) equal to one near \(\omega=0\), define
\begin{equation}
\label{eq:appA-plus-one-over-omega}
  \left\langle\left[\frac{1}{\omega}\right]_+^{\chi},f\right\rangle
  =\int_0^\infty \frac{d\omega}{\omega}\,\bigl(f(\omega)-\chi(\omega)f(0)\bigr)\, .
\end{equation}
Similarly,
\begin{equation}
\label{eq:appA-plus-log-over-omega}
  \left\langle\left[\frac{\log(\omega/\mu)}{\omega}\right]_+^{\chi},f\right\rangle
  =\int_0^\infty \frac{d\omega}{\omega}\,\log\frac{\omega}{\mu}
  \bigl(f(\omega)-\chi(\omega)f(0)\bigr)\, .
\end{equation}
Changing \(\chi\) changes these distributions by a multiple of \(\delta(\omega)\).  In celestial language this is a local contact counterterm at the conformally soft insertion.  The dependence on \(\mu\) is independent of \(\chi\):
\begin{equation}
\label{eq:appA-plus-scale-change}
  \left[\frac{\log(\omega/\mu')}{\omega}\right]_+^{\chi}
  =\left[\frac{\log(\omega/\mu)}{\omega}\right]_+^{\chi}
  -\log\frac{\mu'}{\mu}\left[\frac{1}{\omega}\right]_+^{\chi}\, .
\end{equation}
If \(S_\alpha\) denotes the ordinary conformally soft partner and \(D_\alpha^{(\mu)}\) the logarithmic representative normalized by \eqref{eq:appA-plus-log-over-omega}, then
\begin{equation}
\label{eq:appA-current-scale-change}
  D_\alpha^{(\mu')}=D_\alpha^{(\mu)}-\log\frac{\mu'}{\mu}\,S_\alpha\, .
\end{equation}
The triangular transformation accounts for the scheme dependence of the logarithmic current module at the order considered here.  Since \(S_\alpha\) acts in the one-particle layer, the dipole-hard class in \(F_2/F_1\) is independent of \(\mu\).  Appendix~\ref{app:hochschild-jacobi} formulates the corresponding scheme independence of the extension class.

The finite part of a logarithmic product is defined by subtracting the pole part in the same Laurent expansion.  For two logarithmic currents this gives the normal ordering used in section~\ref{sec:dipole-dipole-ope}.  Different finite subtractions differ by local one-particle currents and central contact terms; lemma~\ref{lem:one-particle-scheme-invariance} shows that the second-support class is unchanged.

\subsection{Cauchy kernels and contact terms on the celestial sphere}
\label{appA:cauchy-contact-terms}

We use
\begin{equation}
\label{eq:appA-area-measure}
  d^2z=\frac{i}{2}dz\wedge d\bar z\, ,
  \qquad
  \int_{\mathbb C}d^2z\,\delta^{(2)}(z-w)f(z,\bar z)=f(w,\bar w)\, .
\end{equation}
With this normalization,
\begin{equation}
\label{eq:appA-cauchy-distribution}
  \partial_{\bar z}\frac{1}{z-w}=\pi\delta^{(2)}(z-w)\, ,
  \qquad
  \partial_z\frac{1}{\bar z-\bar w}=\pi\delta^{(2)}(z-w)\, .
\end{equation}
The logarithmic form is
\begin{equation}
\label{eq:appA-log-laplacian}
  \partial_z\partial_{\bar z}\log|z-w|^2=\pi\delta^{(2)}(z-w)\, .
\end{equation}
Higher Cauchy singularities are defined by differentiation,
\begin{equation}
\label{eq:appA-higher-cauchy}
  \frac{1}{(z-w)^{m+1}}
  =\frac{(-1)^m}{m!}\partial_w^m\frac{1}{z-w}\, ,
  \qquad
  m\ge0\, .
\end{equation}
These identities are understood as distributional finite parts.  The same convention applies to antiholomorphic Cauchy kernels.

On \(\mathbb{CP}^1\) the formulas are used patchwise.  Changing patches replaces a Cauchy kernel by a transformed Cauchy kernel plus a smooth function.  Smooth terms and contact terms supported at a single insertion belong to \(F_1\).  The ordered-pair kernel also has a conormal singularity on a pair diagonal; proposition~\ref{prop:appC-conormal-separation} and the continued Plancherel transform identify its second-support class.

Equation~\eqref{eq:appA-cauchy-distribution} implies that a local counterterm obtained by differentiating a Cauchy pole at one hard insertion may change the ordinary soft-current action or a one-particle hard-current representative.  It cannot remove a pole at \(\bar z_a=\bar z_b\) for \(a\ne b\).  This is the distributional core of the non-absorption theorem in section~\ref{sec:dipole-hard-ope} and appendix~\ref{app:two-particle-projectors-support}.

\section{One-loop logarithmic soft-photon kernels}
\label{app:one-loop-soft-photon-kernels}

This appendix rewrites the known one-loop logarithmic soft theorem in the infrared-subtracted, normalized celestial form used in the main text.  Its leading high-energy term is local in the holomorphic variable of the emitting leg and has ordered-pair antiholomorphic support.

For leg \(a\), let \(\eta_a=\pm1\), \(Q_a\) be the physical charge, and \(e_a=\eta_aQ_a\).  The hard momenta are first kept slightly massive in order to regulate collinear regions.  The high-energy limit is taken only after the logarithmic coefficient has been extracted.  The hierarchy is
\begin{equation}
\label{eq:appB-hierarchy}
  \omega_s\ll m\ll E\, ,
\end{equation}
where \(\omega_s\) is the soft photon energy, \(m\) is the common mass scale of the charged external particles, and \(E\) is a hard energy scale.  In this regime the leading logarithmic coefficient is independent of the subleading power corrections in \(m/E\).  The massive derivation of the logarithmic theorem is due to Sahoo and Sen, and the all-order classical electromagnetic soft theorem gives a complementary classical organization of the late-time expansion \cite{Sahoo:2018lxl,Karan:2025emsoft,Choi:2024superphaserotation,Compere:2025multipole}.  The all-loop high-energy celestial interpretation and the dipole-current rewriting follow \cite{Campiglia:2019wxe,Krishna:2023fxg,Banerjee:2026soft}.

\subsection{Infrared-subtracted hard functions}
\label{appB:IR-subtracted-hard-functions}

The logarithmic soft theorem is an identity for infrared-subtracted hard functions.  We write the bare amplitude with \(n\) charged hard particles as
\begin{equation}
\label{eq:appB-IR-factorization}
  \mathcal A_n^{\rm bare}
  =Z_n^{\rm IR}(\epsilon_{\rm IR},\mu;\{p_i,Q_i\})\,
  \mathcal H_n^{\rm ren}(\mu;\{p_i,Q_i\})\, .
\end{equation}
Here \(Z_n^{\rm IR}\) contains the universal soft and collinear singularities, while \(\mathcal H_n^{\rm ren}\) is finite after the subtraction.  The precise regulator is immaterial for the algebraic statements of the paper, provided the same subtraction is used before and after insertion of the soft photon.  This convention is the abelian analogue of the hard-function factorization familiar from perturbative gauge theory; it is also the convention in which loop-corrected logarithmic soft theorems have a finite meaning \cite{Sahoo:2018lxl,Krishna:2023fxg}.

Let \(k=\omega_s q(w,\bar w)\) be the soft photon momentum.  The soft theorem is obtained from the ratio of the \((n+1)\)-point and \(n\)-point hard functions,
\begin{equation}
\label{eq:appB-hard-ratio}
  \mathcal H_{n+1}^{\rm ren}(k^+;1,\ldots,n)
  =\left[ S_{\rm W}^{+}(k)+\log\frac{\mu}{\omega_s}\,S_{\log}^{+}(k)+O(\omega_s^0)\right]
  \mathcal H_n^{\rm ren}(1,\ldots,n)\, .
\end{equation}
The Weinberg factor \(S_{\rm W}^{+}\) is a one-particle sum, whereas \(S_{\log}^{+}\) is the coefficient relevant here.  The form \(\log(\mu/\omega_s)\) keeps the subtraction scale explicit.  Changing \(\mu\) shifts the finite representative by an ordinary conformally soft current but leaves its class in \(F_2/F_1\) unchanged.

The celestial hard correlator is the Mellin transform of \(\mathcal H_n^{\rm ren}\),
\begin{equation}
\label{eq:appB-celestial-hard-transform}
  C_n(1,\ldots,n)
  =\int_0^\infty\prod_{i=1}^n d\omega_i\,\omega_i^{\Delta_i-1}
  \mathcal H_n^{\rm ren}(\eta_i\omega_i q_i,Q_i,J_i)\, .
\end{equation}
When a soft factor contains \(1/\omega_a\), the Mellin integral converts it into the shift \(T_a^{-1}\) defined in eq.~\eqref{eq:T-minus-one-section3}.  No other Mellin shift occurs in the leading high-energy logarithmic kernel.

\subsection{The one-loop logarithmic soft theorem}
\label{appB:one-loop-log-soft-theorem}

The normalized logarithmic insertion used in the main text is obtained by removing the common perturbative coefficient and fixing the remaining numerical factor by eq.~\eqref{eq:S0-Ward-pairwise}.  Set $q_s=q(w,\bar w)$, so that $k=\omega_s q_s$.  In this convention the leading high-energy ordered-pair coefficient is
\begin{equation}
\label{eq:appB-covariant-pairwise-factor}
  S_{\log,\rm pair}^{+}(k)
  =\sqrt2\sum_{a\ne b}Q_a^2 e_b\,
  \frac{(\varepsilon_+(k)\cdot p_a)(q_s\cdot p_b)-(\varepsilon_+(k)\cdot p_b)(q_s\cdot p_a)}
  {(q_s\cdot p_a)(p_a\cdot p_b)}\, .
\end{equation}
Here $q_s$ is the null direction of the soft photon and $\varepsilon_+(k)$ is the positive-helicity polarization vector.  Equation~\eqref{eq:appB-covariant-pairwise-factor} is a momentum-space factor; the Mellin shift \(T_a^{-1}\) appears only after the factor \(1/\omega_a\) is transformed.  No additional emitting-leg sign multiplies \(Q_a^2e_b\): the orientation sign is already contained in \(p_a=\eta_a\omega_a q_a\) and reappears as \(\eta_a/\omega_a\) in lemma~\ref{lem:appB-spinor-to-celestial}.  The full coefficient decomposes as
\begin{equation}
\label{eq:appB-soft-factor-decomposition}
  S_{\log}^{+}(k)
  =S_{\log,\rm pair}^{+}(k)+S_{\log,\rm self}^{+}(k)+S_{\log,\rm sub}^{+}(k)\, .
\end{equation}
The self term has no hard--hard Cauchy pole and belongs to $F_1$.  All statements about $F_2/F_1$ in this paper refer to the leading high-energy logarithmic coefficient $S_{\log,\rm pair}^{+}$.  The $m/E$-suppressed remainder lies outside this truncation, and no assertion about its second-support class is needed.

\subsection{Spinor-helicity and invariant forms}
\label{appB:spinor-helicity-form}

In the massless high-energy limit, use spinors
\begin{equation}
\label{eq:appB-spinors}
  p_{a\alpha\dot\alpha}=\lambda_{a\alpha}\tilde\lambda_{a\dot\alpha}\, ,
  \qquad
  k_{\alpha\dot\alpha}=\lambda_{s\alpha}\tilde\lambda_{s\dot\alpha}\, ,
\end{equation}
with the convention of eq.~\eqref{eq:spinors}.  A positive-helicity polarization with reference spinor \(r\) is
\begin{equation}
\label{eq:appB-positive-polarization-spinor}
  \varepsilon_{+\,\alpha\dot\alpha}(s;r)
  =\frac{\sqrt2\,r_\alpha\tilde\lambda_{s\dot\alpha}}{\langle r s\rangle}\, .
\end{equation}
Then the ordered-pair factor can be written as
\begin{equation}
\label{eq:appB-spinor-pairwise-kernel}
  S_{\log,\rm pair}^{+}(s)
  =\sum_{a\ne b}Q_a^2 e_b\,
  \widehat{\mathcal I}_{ab}^{+}(s;r)\, .
\end{equation}
where the normalized invariant ratio is
\begin{equation}
\label{eq:appB-Iab-spinor}
  \widehat{\mathcal I}_{ab}^{+}(s;r)
  =\sqrt2\,
  \frac{(\varepsilon_+(s;r)\cdot p_a)(k\cdot p_b)-(\varepsilon_+(s;r)\cdot p_b)(k\cdot p_a)}
  {(k\cdot p_a)(p_a\cdot p_b)}\, .
\end{equation}
The reference-spinor dependence is a gauge artifact.  If \(r\) is changed, \(\varepsilon_+(s;r)\) changes by a multiple of \(k\).  The numerator in \eqref{eq:appB-Iab-spinor} changes by
\begin{equation}
\label{eq:appB-gauge-change}
  (k\cdot p_a)(k\cdot p_b)-(k\cdot p_b)(k\cdot p_a)=0\, .
\end{equation}
\(\widehat{\mathcal I}_{ab}^{+}\) is gauge invariant and defines an unambiguous celestial distribution.

The denominator in eq.~\eqref{eq:appB-Iab-spinor} contains \(p_a\cdot p_b\) rather than \(k\cdot p_b\).  The soft pole is attached to the emitting leg \(a\), while the second leg enters through the relative direction between \(a\) and \(b\).  This invariant structure produces the antiholomorphic pole \((\bar z_a-\bar z_b)^{-1}\) below.

\begin{lemma}
\label{lem:appB-spinor-to-celestial}
With the null vectors and polarization of eqs.~\eqref{eq:q-vector}, \eqref{eq:p-parametrization} and \eqref{eq:positive-helicity-polarization}, the normalized invariant kernel \eqref{eq:appB-Iab-spinor} reduces to
\begin{equation}
\label{eq:appB-celestial-reduction-lemma}
  \widehat{\mathcal I}_{ab}^{+}(s)
  =\frac{\eta_a}{\omega_a}\,
  \frac{\bar w-\bar z_b}{(w-z_a)(\bar z_a-\bar z_b)}
\end{equation}
in the normalization of eq.~\eqref{eq:appB-covariant-pairwise-factor}.
\end{lemma}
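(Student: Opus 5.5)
The plan is a direct substitution of the explicit null vectors and polarization into the normalized invariant ratio \eqref{eq:appB-Iab-spinor}, followed by an algebraic factorization. First I will use gauge invariance, eq.~\eqref{eq:appB-gauge-change}, to replace $\varepsilon_+(s;r)$ by the explicit representative \eqref{eq:positive-helicity-polarization}. Second, since \eqref{eq:appB-Iab-spinor} is homogeneous of degree zero in $k$ (one factor of $k$ upstairs, one downstairs), I will replace $k=\omega_s q_s$ by $q_s=q(w,\bar w)$, so that $\omega_s$ drops out.

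The key input is a single elementary dot product. With the mostly-plus metric, direct evaluation against \eqref{eq:q-vector} gives
\begin{equation*}
  \varepsilon_+(w,\bar w)\cdot q(z,\bar z)=\sqrt2\,(\bar z-\bar w),
\end{equation*}
where the $z$-dependence cancels between the second and third components and the $z\bar z$ terms cancel between the first and fourth. Combining this with \eqref{eq:q-inner-products} and $p_i=\eta_i\omega_i q_i$, all four bilinears entering \eqref{eq:appB-Iab-spinor} become explicit:
\begin{equation*}
  \varepsilon_+\cdot p_a=\sqrt2\,\eta_a\omega_a(\bar z_a-\bar w),\qquad
  q_s\cdot p_a=-2\eta_a\omega_a|w-z_a|^2,\qquad
  p_a\cdot p_b=-2\eta_a\eta_b\omega_a\omega_b|z_a-z_b|^2 .
\end{equation*}

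The step where a sign or factor could go wrong is the factorization of the numerator, which I would check carefully. After extracting $-2\sqrt2\,\eta_a\eta_b\omega_a\omega_b$, the numerator is proportional to
\begin{equation*}
  (\bar z_a-\bar w)|w-z_b|^2-(\bar z_b-\bar w)|w-z_a|^2 .
\end{equation*}
Writing $|w-z|^2=(w-z)(\bar w-\bar z)$, both terms share the factor $(\bar w-\bar z_a)(\bar w-\bar z_b)$, and the remaining holomorphic difference $(w-z_a)-(w-z_b)$ collapses to $z_b-z_a$. This factorization is the only nontrivial identity in the argument.

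For the denominator, I will write it as
\begin{equation*}
  4\eta_b\,\omega_a^2\omega_b\,(w-z_a)(\bar w-\bar z_a)(z_a-z_b)(\bar z_a-\bar z_b),
\end{equation*}
using $\eta_a^2=1$. Cancelling $(\bar w-\bar z_a)$, $\omega_b$ and $\eta_b$, and using $(z_b-z_a)/(z_a-z_b)=-1$, the prefactor is
\begin{equation*}
  \sqrt2\cdot(-2\sqrt2)\cdot(-1)/4=+1 .
\end{equation*}
This leaves $\eta_a\omega_a^{-1}(\bar w-\bar z_b)/[(w-z_a)(\bar z_a-\bar z_b)]$, which is \eqref{eq:appB-celestial-reduction-lemma}.

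Finally, I will remark that the $1/\omega_a$ factor becomes $T_a^{-1}$ under \eqref{eq:appB-celestial-hard-transform}. Multiplying by $Q_a^2e_b$ therefore reproduces the ordered-pair kernel \eqref{eq:ordered-pair-kernel-section3} exactly.
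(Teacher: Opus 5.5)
Your proposal is correct and follows the paper's proof essentially line by line. It uses the same dot product $\varepsilon_+\cdot q(z_i,\bar z_i)=\sqrt2(\bar z_i-\bar w)$, the same numerator $-2\sqrt2\,\eta_a\eta_b\omega_a\omega_b(\bar w-\bar z_a)(\bar w-\bar z_b)(z_b-z_a)$ and denominator $4\eta_b\omega_a^2\omega_b(w-z_a)(\bar w-\bar z_a)(z_a-z_b)(\bar z_a-\bar z_b)$, and the same final cancellation against the overall $\sqrt2$. Your homogeneity remark justifying $k\to q_s$ and your spelled-out numerator factorization make explicit steps that the paper's proof leaves implicit.
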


\begin{proof}
Using eq.~\eqref{eq:positive-helicity-polarization}, one has
\begin{equation}
\label{eq:appB-polarization-dot-q}
  \varepsilon_+(w,\bar w)\cdot q(z_i,\bar z_i)
  =\sqrt2(\bar z_i-\bar w)
\end{equation}
with our mostly-plus metric.  The null inner products are
\begin{equation}
\label{eq:appB-null-products}
  q(w,\bar w)\cdot q(z_i,\bar z_i)=-2|w-z_i|^2\, ,
  \qquad
  q(z_a,\bar z_a)\cdot q(z_b,\bar z_b)=-2|z_a-z_b|^2\, .
\end{equation}
Substituting \(p_i=\eta_i\omega_i q_i\) gives
\begin{align}
\label{eq:appB-numerator-reduction}
  &(\varepsilon_+\cdot p_a)(q_s\cdot p_b)-(\varepsilon_+\cdot p_b)(q_s\cdot p_a)
  \nonumber\\
  &\quad
  =-2\sqrt2\,\eta_a\eta_b\omega_a\omega_b
  (\bar w-\bar z_a)(\bar w-\bar z_b)(z_b-z_a)
\end{align}
where \(q_s=q(w,\bar w)\).  The denominator is
\begin{equation}
\label{eq:appB-denominator-reduction}
  (q_s\cdot p_a)(p_a\cdot p_b)
  =4\eta_b\omega_a^2\omega_b
  (w-z_a)(\bar w-\bar z_a)(z_a-z_b)(\bar z_a-\bar z_b)\, .
\end{equation}
The quotient of \eqref{eq:appB-numerator-reduction} by \eqref{eq:appB-denominator-reduction} is \(2^{-1/2}\eta_a\omega_a^{-1}(\bar w-\bar z_b)/[(w-z_a)(\bar z_a-\bar z_b)]\).  Multiplication by the factor \(\sqrt2\) in \eqref{eq:appB-Iab-spinor} gives \eqref{eq:appB-celestial-reduction-lemma}.
\end{proof}

The factor \(1/\omega_a\) in \eqref{eq:appB-celestial-reduction-lemma} becomes the Mellin shift \(T_a^{-1}\).  Since \(e_a^2=Q_a^2\), the resulting coefficient is \(\eta_a e_a^2e_b\), in agreement with the ordered-pair dipole Ward kernel of \cite{Banerjee:2026soft}.  The coordinate-space ordered-pair operator is
\begin{equation}
\label{eq:appB-coordinate-pairwise-operator}
  S_{\log,\rm pair}^{+}(w,\bar w)
  =\sum_{a\ne b}
  \frac{\eta_a Q_a^2 e_b}{w-z_a}
  \frac{\bar w-\bar z_b}{\bar z_a-\bar z_b}
  T_a^{-1}\, .
\end{equation}

\subsection{Celestial-coordinate kernel}
\label{appB:celestial-coordinate-kernel}

The celestial kernel has two singular factors of different origin,
\begin{equation}
\label{eq:appB-two-poles}
  \frac{1}{w-z_a}
  \qquad\hbox{and}\qquad
  \frac{1}{\bar z_a-\bar z_b}\, .
\end{equation}
The first feature is the holomorphic soft pole, of the same type that appears in the tree-level conformally soft photon current.  The second feature is the relative antiholomorphic direction of the charged pair, giving a pole on the pairwise celestial diagonal of two hard insertions rather than in the soft coordinate.

Define the diagonal dipole multiplier
\begin{equation}
\label{eq:appB-dipole-multiplier-ab}
  d_b[\bar w](\bar z_a)
  = e_b\frac{\bar w-\bar z_b}{\bar z_a-\bar z_b}\, .
\end{equation}
The separation of the emitting-leg soft pole from this multiplier gives
\begin{equation}
\label{eq:appB-kernel-as-dipole-sum}
  S_{\log,\rm pair}^{+}(w,\bar w)
  =\sum_a\frac{\eta_a Q_a^2}{w-z_a}
  \left(\sum_{b\ne a} d_b[\bar w](\bar z_a)\right)T_a^{-1}\, .
\end{equation}
The inner sum is the finite part of the dipole-current Ward identity at \(\bar z=\bar z_a\).  The contracted dipole current obeys
\begin{equation}
\label{eq:appB-dipole-ward-recall}
  \left\langle D^{(\mu)}[\bar w](\bar z)\prod_i\mathcal O_i\right\rangle
  =\sum_i e_i\frac{\bar w-\bar z_i}{\bar z-\bar z_i}C_n\, .
\end{equation}
Taking \(\bar z\to\bar z_a\) and subtracting the self-pole produces
\begin{equation}
\label{eq:appB-finite-dipole-limit}
  {\rm FP}_{\bar z\to\bar z_a}
  \left[ D^{(\mu)}[\bar w](\bar z)
  T_a^{-1}\mathcal O_a(z_a,\bar z_a)\right]
  =\sum_{b\ne a}e_b\frac{\bar w-\bar z_b}{\bar z_a-\bar z_b}
  T_a^{-1}\mathcal O_a(z_a,\bar z_a)
\end{equation}
inside any hard correlator.  This term is the normal-ordered descendant \(\mathcal N_a^{(\mu)}[\bar w]\) of eq.~\eqref{eq:normal-ordered-dipole-descendant}.  The coordinate-space soft theorem is therefore equivalent to the local identity \eqref{eq:S0-local-dipole-Ward}.

The kernel \eqref{eq:appB-coordinate-pairwise-operator} is covariant under the antiholomorphic global conformal group.  Under \(\bar z\mapsto (a\bar z+b)/(c\bar z+d)\), the ratio
\begin{equation}
\label{eq:appB-ratio-covariance}
  \frac{\bar w-\bar z_b}{\bar z_a-\bar z_b}
\end{equation}
transforms by the same factor as the contracted dipole current in eq.~\eqref{eq:D-W-transform}.  Under the holomorphic group, \((w-z_a)^{-1}\) is the standard weight-one soft pole.  The shift \(T_a^{-1}\) changes the conformal dimension of the emitting field from \(\Delta_a\) to \(\Delta_a-1\), compensating the energy factor in the high-energy soft theorem.

After Mellin transformation of the hard energies, the covariant ordered-pair factor \eqref{eq:appB-covariant-pairwise-factor}, written in the Banerjee--Mandal--Sahoo normalization \cite{Banerjee:2026soft} and in the normalization used in the main text, gives the correlator identity
\begin{equation}
\label{eq:appB-coordinate-ward}
\begin{aligned}
  &\left\langle\mathsf S^0_\mu(w,\bar w)
  \prod_{i=1}^n\mathcal O_i\right\rangle_{\mathcal H,\mu}
  \\
  &\quad
  =\sum_{a\ne b}
  \frac{\eta_a Q_a^2 e_b}{w-z_a}
  \frac{\bar w-\bar z_b}{\bar z_a-\bar z_b}
  T_a^{-1}C_n(1,\ldots,n)\, .
\end{aligned}
\end{equation}
The coordinate reduction of the invariant soft factor is lemma~\ref{lem:appB-spinor-to-celestial}.  The factor \(1/\omega_a\) in that lemma becomes \(T_a^{-1}\) under the Mellin transform of leg \(a\), by the definition \eqref{eq:T-minus-one-section3}.  Multiplication by the charge factor \(Q_a^2e_b\) and summation over all ordered pairs gives \eqref{eq:appB-coordinate-ward}.  The self and subleading terms of \eqref{eq:appB-soft-factor-decomposition} are not included in the definition of \(\mathsf S^0_\mu\) used for the second-layer algebra; they are one-particle finite counterterms or power-suppressed corrections.

\subsection{Charge conservation and self terms}
\label{appB:charge-conservation-self-terms}

Self terms arise both from terms in the massive one-loop coefficient in which one charged particle carries all hard data and from bringing the dipole current to the emitting point $\bar z_a$.  In the local celestial expression the latter is the pole
\begin{equation}
\label{eq:appB-self-pole}
  e_a\frac{\bar w-\bar z_a}{\bar\xi-\bar z_a}
  T_a^{-1}\mathcal O_a(z_a,\bar z_a)
\end{equation}
subtracted in eq.~\eqref{eq:normal-ordered-dipole-descendant}.  Both contributions are local in one celestial insertion.  Consequently their kernels lie in \(F_1\).  They may affect the finite representative of the soft insertion, but not the class of the ordered-pair obstruction in \(F_2/F_1\).

Charge conservation acts differently.  Physical renormalized hard correlators obey
\begin{equation}
\label{eq:appB-charge-conservation}
  \sum_{i=1}^n e_i=0\, .
\end{equation}
This identity may be used to rewrite some constant pieces of the antiholomorphic numerator.  For example,
\begin{equation}
\label{eq:appB-charge-conservation-rewrite}
  \sum_{b\ne a}e_b\frac{\bar w-\bar z_b}{\bar z_a-\bar z_b}
  =\sum_{b\ne a}e_b\frac{\bar w-\bar z_a}{\bar z_a-\bar z_b}
  +\sum_{b\ne a}e_b\, .
\end{equation}
By \eqref{eq:appB-charge-conservation}, the last sum equals \(-e_a\) and contributes only to the one-particle sector.  The first term retains the pairwise pole \((\bar z_a-\bar z_b)^{-1}\).  Thus charge conservation can move representatives between the explicit pairwise kernel and the local self sector, but it cannot remove the residue on the pairwise diagonal unless all relevant charge products vanish.

The support-filtration argument is independent of the chosen representative.  Let \(R^{(1)}(w,\bar w)\in F_1\) be any one-particle counterterm and replace
\begin{equation}
\label{eq:appB-representative-shift}
  \mathsf S^0_\mu(w,\bar w)
  \longmapsto
  \mathsf S^0_\mu(w,\bar w)+R^{(1)}(w,\bar w)\, .
\end{equation}
Then the image in the associated graded is unchanged,
\begin{equation}
\label{eq:appB-associated-graded-invariance}
  \sigma_2(\mathsf S^0_\mu+R^{(1)})
  =\sigma_2(\mathsf S^0_\mu)\in F_2/F_1\, .
\end{equation}
Here \(\sigma_2\) denotes the projection to the second support layer.  Appendix~\ref{app:two-particle-projectors-support} gives the microlocal proof that this layer is represented by two-particle celestial primaries.  

\subsection{Review of the Banerjee--Mandal--Sahoo dipole Ward identity}
\label{appB:review-dipole-ward-identity}

This subsection rewrites the Banerjee--Mandal--Sahoo dipole Ward identity and its local normal-ordered form, eqs.~(4.3)--(4.7) of \cite{Banerjee:2026soft}, in the conventions of this appendix.  The contracted dipole current has the Ward identity
\begin{equation}
\label{eq:appB-dipole-current-definition}
  \left\langle D^{(\mu)}[\bar W](\bar z)
  \prod_i\mathcal O_i\right\rangle_{\mathcal H,
  \mu}
  =\sum_i e_i\frac{\bar W-\bar z_i}{\bar z-\bar z_i}C_n\, .
\end{equation}
Set \(\bar W=\bar w\) and take the finite part at the emitting point \(\bar z=\bar z_a\) after subtracting the self-pole of leg \(a\).  By definition,
\begin{equation}
\label{eq:appB-normal-ordered-finite-part}
\begin{aligned}
  &\left\langle
  :D^{(\mu)}[\bar w]T_a^{-1}\mathcal O_a:(z_a,\bar z_a)
  \prod_{i\ne a}\mathcal O_i
  \right\rangle_{\mathcal H,
  \mu}
  \\
  &\quad
  =\sum_{b\ne a}e_b
  \frac{\bar w-\bar z_b}{\bar z_a-\bar z_b}
  T_a^{-1}C_n\, .
\end{aligned}
\end{equation}
Multiplying by the holomorphic soft pole and by the emitting-leg charge factor gives
\begin{equation}
\label{eq:appB-local-S0-from-dipole}
\begin{aligned}
  &\sum_a\frac{\eta_aQ_a^2}{w-z_a}
  \left\langle
  :D^{(\mu)}[\bar w]T_a^{-1}\mathcal O_a:(z_a,\bar z_a)
  \prod_{i\ne a}\mathcal O_i
  \right\rangle_{\mathcal H,
  \mu}
  \\
  &\quad
  =\sum_{a\ne b}
  \frac{\eta_aQ_a^2 e_b}{w-z_a}
  \frac{\bar w-\bar z_b}{\bar z_a-\bar z_b}
  T_a^{-1}C_n\, .
\end{aligned}
\end{equation}
The right hand side is the ordered-pair identity \eqref{eq:appB-coordinate-ward}.  Hence
\begin{equation}
\label{eq:appB-S0-local-final}
  \mathsf S^0_\mu(w,\bar w)
  \sim
  \sum_a\frac{\eta_aQ_a^2}{w-z_a}
  :D^{(\mu)}[\bar w]T_a^{-1}\mathcal O_a:(z_a,\bar z_a)
\end{equation}
inside all renormalized hard correlators, which is the Banerjee--Mandal--Sahoo local identity \eqref{eq:S0-local-dipole-Ward}.

The same rewriting also displays the two-layer decomposition of the Ward identity.  The one-particle part is the local normal-ordering subtraction and the possible self term \(S_{\log,\rm self}^{+}\).  The second-layer part is
\begin{equation}
\label{eq:appB-second-layer-symbol}
  \sigma_2(\mathsf S^0_\mu)(w,\bar w)
  =\sum_{a\ne b}
  \eta_aQ_a^2e_b\,
  \frac{1}{w-z_a}
  \frac{\bar w-\bar z_b}{\bar z_a-\bar z_b}
  T_a^{-1}
  \quad \in F_2/F_1\, .
\end{equation}
Commutation with a Mellin-difference label produces the dipole-hard OPE, while composition of two second-support symbols produces the mixed kernel identity.

\section{Analytic two-particle projectors and support filtration}
\label{app:two-particle-projectors-support}

One-particle kernels and pairwise conormal singularities are separated by the support filtration.  Meromorphic continuation through the shift \(T^{-1}\) gives the two-particle Plancherel resolution used in the minimality theorem.

\subsection{Analytic test spaces and the support filtration}
\label{appC:configuration-spaces}

Let \(X=\mathbb{CP}^1\).  In an affine patch, celestial coordinates are paired with compactly supported smooth test functions.  For the Mellin variable, fix \(\varepsilon>0\) and define \(\mathscr W_\varepsilon\) to be the Fr\'echet space of functions holomorphic in
\begin{equation}
  -\varepsilon<\operatorname{Re}\Delta<1+\varepsilon
\end{equation}
whose restrictions to every closed vertical substrip are Schwartz in \(\operatorname{Im}\Delta\).  The seminorms are
\begin{equation}
  p_{N,k,I}(F)=
  \sup_{\operatorname{Re}\Delta\in I}
  (1+|\operatorname{Im}\Delta|)^N
  |\partial_\Delta^kF(\Delta)| \, ,
  \label{eq:appC-analytic-seminorms}
\end{equation}
with \(I\Subset(-\varepsilon,1+\varepsilon)\).  The shifts \(T^{\pm1}\) are continuous between the corresponding boundary-value spaces.

For a species \(\chi=(J,Q,\eta)\), set
\begin{equation}
  \mathcal V_\chi^{\rm an}
  =\mathscr W_\varepsilon\widehat\otimes C_c^\infty(X)\otimes V_\chi \, ,
  \qquad
  \mathcal V_{ab}^{\rm an}
  =\mathcal V_{\chi_a}^{\rm an}\widehat\otimes
   \mathcal V_{\chi_b}^{\rm an} \, .
  \label{eq:appC-one-particle-test-space}
\end{equation}
Their strong duals contain the distributional celestial operators and the analytically continued hard correlators.  Restriction to \(\operatorname{Re}\Delta=1\) recovers the ordinary principal-series wave packets.

Let \(\mathscr K_n\) be the algebra of distribution kernels generated by local current insertions, Mellin-difference operators, and the normalized logarithmic soft kernels.  A kernel belongs to \(F_1\mathscr K_n\) when, after smearing all current points, its singular support contains no pair diagonal.  The layer \(F_2\mathscr K_n\) allows one connected ordered pair of hard coordinates.  In local coordinates,
\begin{equation}
  \operatorname{sing\,supp}K\subset
  \bigcup_i\Delta_{xi}
  \quad(K\in F_1),
  \qquad
  \operatorname{sing\,supp}K\subset
  \bigcup_i\Delta_{xi}\cup\bigcup_{a\ne b}\Delta_{ab}
  \quad(K\in F_2) \, ,
  \label{eq:appC-support-filtration}
\end{equation}
with the second condition understood modulo \(F_1\) and with at most one connected hard pair in each summand.  Mellin shifts do not change celestial singular support.

For the rational kernels in the paper, the Cauchy residue characterizes the quotient \(F_2/F_1\)
\begin{equation}
  \operatorname{Res}_{ab}K
  =\frac{1}{2\pi i}\oint_{|\bar z_a-\bar z_b|=\delta}
  d(\bar z_a-\bar z_b)\,K \, .
  \label{eq:appC-pair-residue-definition}
\end{equation}
Every one-particle kernel has vanishing residue on every hard-pair diagonal,
\begin{equation}
  K\in F_1\quad\Longrightarrow\quad
  \operatorname{Res}_{ab}K=0\qquad(a\ne b) \, .
  \label{eq:appC-F1-zero-pair-residue}
\end{equation}
The logarithmic kernel
\begin{equation}
  K_{w,\bar w}^{ab}
  =\frac{\eta_aQ_a^2e_b}{w-z_a}
   \frac{\bar w-\bar z_b}{\bar z_a-\bar z_b}T_a^{-1}
  \label{eq:appC-basic-pair-kernel}
\end{equation}
has residue
\begin{equation}
  \operatorname{Res}_{ab}K_{w,\bar w}^{ab}
  =\frac{\eta_aQ_a^2e_b}{w-z_a}
  (\bar w-\bar z_b)T_a^{-1} \, .
  \label{eq:appC-basic-residue}
\end{equation}
It is nonzero for generic charged data.

\subsection{Conormal separation}
\label{appC:conormal-wavefront}

The distribution \((\bar z_a-\bar z_b)^{-1}\) is conormal to the real codimension-two diagonal \(\Delta_{ab}\) in the standard microlocal sense \cite{Hormander:1998},
\begin{equation}
  \operatorname{WF}\!\left((\bar z_a-\bar z_b)^{-1}\right)
  \subset N^*\Delta_{ab}\setminus0 \, .
  \label{eq:appC-cauchy-wavefront}
\end{equation}
A one-particle current kernel has wavefront set in the union of conormal bundles to the current-leg diagonals.  Smearing the current point cannot create a component in \(N^*\Delta_{ab}\).  Mellin shifts and analytic continuation in the external dimensions act in spectral variables and preserve this celestial wavefront statement.

\begin{proposition}[Conormal separation]
\label{prop:appC-conormal-separation}
Let \(K\in F_2\mathscr K_n\) be a finite sum of local one-particle kernels and pairwise Cauchy kernels.  If the projection of \(\operatorname{WF}(K)\) to \(N^*\Delta_{ab}\setminus0\) is nonempty for some \(a\ne b\), then \([K]\neq0\) in \(F_2/F_1\).  For the meromorphic kernels of the logarithmic soft theorem, the same conclusion follows from \(\operatorname{Res}_{ab}K\neq0\).
\end{proposition}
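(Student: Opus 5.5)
The argument is by contradiction: assume $[K]=0$ in $F_2/F_1$, so $K=r$ modulo terms with no singular support on $\Delta_{ab}$, with $r\in F_1\mathscr K_n$. We then show that $\operatorname{WF}(K)$ cannot reach $N^*\Delta_{ab}\setminus0$.

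First, localise. Choose a point $x_0\in\Delta_{ab}$ away from every current-leg diagonal $\Delta_{xi}$ and from every other pair diagonal $\Delta_{cd}$, $\{c,d\}\neq\{a,b\}$. Such points exist and are dense in $\Delta_{ab}$, because the excluded sets are lower-dimensional or transverse there. Multiply by a cutoff $\chi\in C_c^\infty$ supported near $x_0$. By the definition \eqref{eq:appC-support-filtration}, every element of $F_1$ has singular support in $\bigcup_i\Delta_{xi}$, so $\chi r$ is smooth. Mellin shifts and analytic continuation act only in the spectral variables and do not alter celestial singular support, as noted before the proposition, so this holds uniformly after pairing with the analytic wave packets in $\mathscr W_\varepsilon$.

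Second, isolate the pair term. Near $x_0$ the one-particle summands of $K$ are smooth for the same reason, and the Cauchy kernels attached to other pairs are smooth because $x_0$ avoids their diagonals. The only possibly singular summand is therefore the $(a,b)$ term $A(x)(\bar z_a-\bar z_b)^{-1}$, with $A$ smooth (rational in $w,\bar w$, times $T_a^{-1}$). This is the step I expect to be the main obstacle. The hypothesis only says that the projection of $\operatorname{WF}(K)$ to $N^*\Delta_{ab}\setminus0$ is nonempty. That wavefront point might lie over a bad point of $\Delta_{ab}$, where it meets another diagonal, rather than over a generic one. To handle this I would use that $\operatorname{WF}$ is closed and conic, and that by \eqref{eq:appC-cauchy-wavefront} the $(a,b)$ Cauchy term has wavefront exactly $N^*\Delta_{ab}\setminus0$ over the set where $A\big|_{\Delta_{ab}}\neq0$. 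The other summands contribute wavefront only in conormals to other submanifolds. So if a conormal direction to $\Delta_{ab}$ occurs, then $A\big|_{\Delta_{ab}}\not\equiv0$, hence it is nonzero on an open dense subset. That subset contains generic points $x_0$ as above.

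Third, conclude. At such an $x_0$, $\chi K$ differs from $\chi A(\bar z_a-\bar z_b)^{-1}$ by a smooth function. This Cauchy term is not smooth: its $\partial_{z_a}$-derivative equals $\pi A\,\delta^{(2)}(z_a-z_b)$ by \eqref{eq:appA-cauchy-distribution}, which is nonzero. On the other hand $\chi r$ is smooth, so $K-r$ has singular support meeting $\Delta_{ab}$. This contradicts $K-r$ having no singular support on $\Delta_{ab}$, hence $[K]\neq0$.

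For the meromorphic statement, I would instead compute the residue \eqref{eq:appC-pair-residue-definition} on a small loop around $\Delta_{ab}$ near $x_0$. This loop avoids all other diagonals. By \eqref{eq:appC-F1-zero-pair-residue}, $\operatorname{Res}_{ab}r=0$, while $\operatorname{Res}_{ab}K=A\big|_{\Delta_{ab}}\neq0$, as in \eqref{eq:appC-basic-residue}. Since $\operatorname{Res}_{ab}$ is linear and vanishes on $F_1$, it descends to $F_2/F_1$, and $\operatorname{Res}_{ab}K\neq0$ gives $[K]\neq0$ directly.
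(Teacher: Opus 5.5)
Your argument is correct in outline but has one faulty inference, fixable as below. For the wavefront assertion it is the paper's argument done more carefully. The paper only says that $[K]=0$ puts $K$ in $F_1$, and that $F_1$ elements have no wavefront in $N^*\Delta_{ab}$. You use only the singular-support definition \eqref{eq:appC-support-filtration} and localise at a generic point of $\Delta_{ab}$. That avoids relying on the paper's assertion that $F_1$ wavefront sets lie in conormals to current-leg diagonals, which the singular-support definition does not control over intersection points. For the residue assertion you take a different route. The paper identifies the residue with the leading conormal symbol and reduces to the first assertion. You use that $\operatorname{Res}_{ab}$ is linear and vanishes on $F_1$ by \eqref{eq:appC-F1-zero-pair-residue}, so it descends to $F_2/F_1$. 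That is more direct, needs no symbol calculus, and does not depend on the pole order; for a higher-order Cauchy pole the leading symbol is not the residue.

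The faulty inference is in your second step. You claim that a conormal wavefront direction forces $A|_{\Delta_{ab}}\not\equiv0$. This holds for the bare soft kernel, where $A$ is rational in $(z_a,\bar z_b,w,\bar w)$. It fails for general pairwise Cauchy kernels in $\mathscr K_n$, because profile multiplication lets $A$ vanish on the diagonal. Subtracting the profile $\chi(z_a)\,z_b\chi(z_b)$ from $z_a\chi(z_a)\,\chi(z_b)$ multiplies the kernel by $(z_a-z_b)\chi(z_a)\chi(z_b)$. Since $(z_a-z_b)/(\bar z_a-\bar z_b)$ is bounded but discontinuous on $\Delta_{ab}$, this kernel has conormal wavefront and zero residue, and your delta-function test in step three detects nothing.

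To repair it, write $\zeta=z_a-z_b$ and let $A$ be smooth near $y\in\Delta_{ab}$. Then $A/\bar\zeta$ is smooth near $y$ if and only if $\partial_\zeta^kA|_{\Delta_{ab}}=0$ near $y$ for all $k\geq0$. Necessity follows from $\partial_\zeta^k(\bar\zeta u)=\bar\zeta\,\partial_\zeta^k u$. Sufficiency follows from Borel's lemma plus the fact that functions flat on $\Delta_{ab}$ are smoothly divisible by $\bar\zeta$. These coefficients are continuous along $\Delta_{ab}$. If the $(a,b)$ term were smooth at every generic point, they would vanish on a dense set, hence everywhere, and no conormal wavefront would remain wherever $A$ is smooth. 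So some generic $x_0$ has $\partial_\zeta^kA(x_0)\neq0$ for some $k$, and the $(a,b)$ term is singular there. Your contradiction with the smoothness of $\chi r$ then goes through unchanged. The example also shows that a nonzero residue is a strictly stronger hypothesis than a conormal wavefront component, which is consistent with your treating the two assertions separately.
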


\begin{proof}
If \([K]=0\), then \(K\) differs from zero by an element of \(F_1\).  Such an element has no wavefront component conormal to a hard-pair diagonal, contradicting the hypothesis.  For a Cauchy kernel, the leading conormal symbol is its residue; hence a nonzero residue gives the stated wavefront component.
\end{proof}

Proposition~\ref{prop:appC-conormal-separation} gives the microlocal form of theorem~\ref{thm:nonabsorption-section4}; local counterterms and Mellin-subtraction changes remain in \(F_1\).

\subsection{Meromorphic Plancherel resolution}
\label{appC:plancherel-shadow}

\begin{assumption}[Meromorphic two-particle continuation]
\label{ass:appC-meromorphic-continuation}
There is an exhaustion by compact spectral strips \(K_1\subset K_2\subset\cdots\) such that: (i) the normalized two-particle projectors, their inverse transform, and the Knapp--Stein intertwiners extend meromorphically in the external and exchanged dimensions; (ii) each \(K_j\) meets only finitely many poles, all of finite rank; (iii) away from these poles the continued kernels act continuously on the nuclear Fr\'echet test spaces defined in \eqref{eq:appC-one-particle-test-space}; and (iv) restriction to smaller strips commutes with the projectors, residue maps, and shadow intertwiners.
\end{assumption}

The principal-line decomposition and the Knapp--Stein intertwiners are standard.  Existing results establish distributional continuation of celestial transforms, meromorphic partial waves, and shadow intertwiners on the relevant principal or pole-free domains \cite{Dobrev:1977qv,Atanasov:2021cje,Borji:2024distributional,Pacifico:2025cpw,Himwich:2025shadow}.  Assumption~\ref{ass:appC-meromorphic-continuation} asks for their compatibility under the shifted dipole continuation and the strip restriction maps.  It is the expected compatibility of these continuation maps in the dipole channel, and it is the only analytic hypothesis used in the global support--primary identification and the full minimality theorem.  All local residue statements and all stripwise coefficient formulas used in the main text are independent of this global inverse-limit assumption.

On the unitary principal line, diagonal \(SL(2,\mathbb C)\) harmonic analysis gives projectors \(\mathbb P_{ab}(\nu,\ell)\) satisfying
\begin{equation}
  \sum_\ell\int_{\mathbb R}d\nu\,
  \rho_\ell(\nu)\mathbb P_{ab}(\nu,\ell)
  =\mathbf1_{ab} \, .
  \label{eq:appC-projector-completeness}
\end{equation}
The intertwining kernels are meromorphic functions of the external dimensions and of the exchanged dimension.  Their distributional continuation is defined by pairing with \(\mathscr W_\varepsilon\) and deforming the spectral contour \cite{Borji:2024distributional,Pacifico:2025cpw,Himwich:2025shadow}.

Let \(\gamma_s\) be the path \(\Delta_a(s)=1-s+i\lambda_a\), \(0\leq s\leq1\).  Deforming the projectors along \(\gamma_s\) gives
\begin{equation}
\begin{aligned}
  \mathbf1_{ab}^{\rm an}={}&
  \sum_\ell\int_{\mathbb R}d\nu\,
  \rho_\ell(\nu)
  \mathbb P_{ab}^{\rm an}
  (\nu,\ell;\Delta_a-1,\Delta_b)
  \\
  &+\sum_{r\in\mathfrak R_{ab}}
  \mathbb P_{ab}^{\rm res}(r;\Delta_a-1,\Delta_b) \, .
  \label{eq:appC-analytic-completeness}
\end{aligned}
\end{equation}
The residue set \(\mathfrak R_{ab}\) consists of the poles crossed by the contour.  The formula is independent of the deformation path as long as no pole endpoint is crossed.  A different path changes the continuous and discrete pieces separately but not their sum.

The continued Knapp--Stein map identifies the two labels of one Plancherel channel,
\begin{equation}
  \mathcal S_{\nu,\ell}:\mathcal V_{1+i\nu,\ell}
  \longrightarrow \mathcal V_{1-i\nu,-\ell},
  \qquad (\nu,\ell)\sim(-\nu,-\ell) \, .
  \label{eq:appC-shadow-identification}
\end{equation}
This identification removes double counting in the partial-wave channel label.  It does not identify Mellin-basis and shadow-basis operators as boundary insertions; a shadow-completed OPE may retain both representatives while the partial-wave transform groups them into one channel \cite{Himwich:2025shadow,Liu:2026shadowcompletion}.
On every compact spectral strip away from poles, \(\mathcal S_{\nu,\ell}\) is a continuous map of nuclear Fr\'echet test spaces, hence its graph is closed.  Let \(\mathscr T_{ab}(K)\) be the analytic two-particle test space on a compact pole-free strip \(K\), let \(\mathcal N_{ab}^{\rm sh}(K)\) be the closed span of the graph relations \((v,-\mathcal S_{\nu,\ell}v)\) and the finite-dimensional intertwiner kernels on the boundary of \(K\), and set
\begin{equation}
  \mathscr Q_{ab}(K)
  =\mathscr T_{ab}(K)/\mathcal N_{ab}^{\rm sh}(K) .
  \label{eq:appC-stripwise-shadow-quotient}
\end{equation}
This quotient is complete and Hausdorff.  The residue spaces crossed while the contour is moved to \(K\) are adjoined as finite-dimensional summands.

\begin{proposition}[Stripwise continued Plancherel transform]
\label{prop:appC-stripwise-plancherel}
For each compact pole-free strip \(K\) satisfying assumption~\ref{ass:appC-meromorphic-continuation}, the continued Plancherel transform together with the crossed-pole residue maps is a topological isomorphism from the pairwise conormal quotient on \(K\) to \(\mathscr Q_{ab}(K)\) plus its finite-dimensional residue channels.
\end{proposition}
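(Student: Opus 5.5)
The plan is to reduce the statement to the ordinary principal-series Plancherel theorem by contour deformation, one compact pole-free strip at a time.

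\textbf{Step 1: setup on the principal line.} Fix a compact pole-free strip \(K\) covered by assumption~\ref{ass:appC-meromorphic-continuation}. For \(\operatorname{Re}\Delta_a=\operatorname{Re}\Delta_b=1\), the diagonal \(SL(2,\mathbb C)\) Plancherel theorem \eqref{eq:appC-projector-completeness} gives a unitary isomorphism. Its restriction to the nuclear test space \(\mathcal V_{ab}^{\rm an}\) is continuous, so the forward transform \(\mathcal P\) and inverse transform \(\mathcal P^{-1}\) compose to the identity in both orders. Both compositions are given by integral kernels. By part (i) of the assumption these kernels extend meromorphically in the external dimensions.

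\textbf{Step 2: continuation along the path.} I would continue along \(\gamma_s\), \(\Delta_a(s)=1-s+i\lambda_a\). The two identities \(\mathcal P^{-1}\circ\mathcal P=\mathbf 1\) and \(\mathcal P\circ\mathcal P^{-1}=\mathbf 1\) persist by uniqueness of analytic continuation, once both sides are paired with test vectors in \(\mathscr W_\varepsilon\). Each time the spectral contour is crossed by an intertwiner pole, Cauchy's theorem splits off a residue term \(\mathbb P^{\rm res}_{ab}(r)\), which gives \eqref{eq:appC-analytic-completeness}. By part (ii) of the assumption there are only finitely many such poles, and each has finite rank. Hence the residue maps are continuous maps onto finite-dimensional spaces, and the continued transform splits as a continuous part plus a finite-rank part.

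\textbf{Step 3: shadow double counting and injectivity.} The continuous part sends the pairwise conormal quotient into \(\mathscr T_{ab}(K)\). Its image is invariant under \(\mathcal S_{\nu,\ell}\), since the projectors intertwine the channels \((\nu,\ell)\) and \((-\nu,-\ell)\). Composing with the quotient by \(\mathcal N^{\rm sh}_{ab}(K)\) therefore gives a well-defined continuous map into \(\mathscr Q_{ab}(K)\). This quotient is Hausdorff because the graphs are closed, which is part (iii) of the assumption.

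Injectivity on the conormal quotient proceeds as follows.
\begin{itemize}
\item Suppose all continuous coefficients vanish modulo shadow relations and all residues vanish.
\item Applying the continued inverse shows that the kernel is zero modulo smooth and one-particle terms.
\item Proposition~\ref{prop:appC-conormal-separation} then identifies this as the zero class.
\end{itemize}
Surjectivity follows from the continued inverse, which on \(K\) is continuous away from the poles and is the two-sided inverse.

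\textbf{Step 4: topological statement and main obstacle.} Both the pairwise conormal quotient and \(\mathscr Q_{ab}(K)\oplus\bigoplus_r\mathbb C\) are Fréchet spaces, so a continuous bijection between them is a topological isomorphism by the open mapping theorem. The main obstacle is Step 3. The delicate point is that the quotient by \(\mathcal N^{\rm sh}_{ab}(K)\) exactly matches the kernel of the continued transform. In particular, the finite-dimensional intertwiner kernels on \(\partial K\) must be neither over- nor under-counted relative to the adjoined residue channels. My first approach is to check this separately near each pole: compute the Laurent expansions of the Knapp--Stein operator and the projector there, and verify that the rank of the residue equals the dimension of the kernel being quotiented.
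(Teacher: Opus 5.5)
Your proposal follows the paper's proof: continue both compositions of the Plancherel transform and its inverse off the unitary line by the identity theorem, adjoin the finite-rank residue projectors when the contour crosses poles, and let the closed Knapp--Stein graph quotient remove the kernel while the residue summands account for the cokernel. The one divergence is Step~4, where the paper does not use the open mapping theorem. Your appeal to it would need the conormal quotient of distributional kernels to be Fr\'echet, which you have not justified. The step is unnecessary anyway: the paper gets bicontinuity directly from the continuity of the continued contour formulas under assumption~\ref{ass:appC-meromorphic-continuation}(iii), which you already invoke in Step~3, so you can drop it.
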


\begin{proof}
On the unitary line the normalized Plancherel transform and its inverse are continuous and their two compositions are the identity.  Assumption~\ref{ass:appC-meromorphic-continuation} makes both compositions meromorphic operator-valued functions on the strip.  Their difference from the identity vanishes on the unitary line and therefore vanishes throughout every connected pole-free component by the identity theorem.  Moving the contour across a pole adds its finite-rank residue projector, so the continued transform and its inverse remain mutually inverse after the residue channels are adjoined.  The graph of the continued Knapp--Stein map is closed, the quotient removes its kernel, and the residue summands account for the cokernel.  The contour formulas are continuous in the strip topology.
\end{proof}

For an exhaustion \(K_1\subset K_2\subset\cdots\), the restriction maps on the stripwise quotients and residue data form a compatible inverse system.  Define
\begin{equation}
  \mathcal M_{2,ab}^{\rm an}
  =\varprojlim_j\left(
    \mathscr Q_{ab}(K_j)
    \oplus
    \bigoplus_{r\in\mathfrak R_{ab}(K_j)}
    \mathcal M_{ab}^{\rm res}(r)
  \right).
  \label{eq:appC-projective-shadow-quotient}
\end{equation}
It is a closed subspace of a product of complete Hausdorff locally convex spaces and is therefore complete and Hausdorff.  This inverse-limit definition avoids an interchange of quotient and projective limit; it is canonically equivalent to the compatible shadow quotient used in the main text.  The shadow relation and crossed-pole channels are thus built into the analytic two-particle module.

\begin{proposition}[Analytic support--primary identification]
\label{prop:appC-F2-M2-identification}
For kernels generated by the logarithmic soft theorem and Mellin-difference hard currents,
\begin{equation}
  F_2/F_1\simeq\mathcal M_2 \, ,
  \label{eq:appC-F2-M2}
\end{equation}
where \(\mathcal M_2\) is the analytically continued module \eqref{eq:M2-definition}.  The map is the continuous Plancherel transform plus the crossed-pole residue projections in \eqref{eq:appC-analytic-completeness}.
\end{proposition}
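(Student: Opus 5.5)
The plan is to build the isomorphism \eqref{eq:appC-F2-M2} as a composite of two maps already available, and then prove that the composite is injective and surjective. The first map is the quotient map from \(F_2\) to \(F_2/F_1\). On a kernel generated by the logarithmic soft theorem and Mellin-difference hard currents, this map keeps only the pairwise Cauchy part \(\sum_{a\ne b}B_{ab}\). That part is well defined because of lemma~\ref{lem:one-particle-scheme-invariance} and the decomposition into one-particle and pairwise pieces, for instance \eqref{eq:S0-one-particle-pair-decomposition} together with theorem~\ref{thm:dipole-hard-ope}. The second map sends each \(B_{ab}\) to its continued Plancherel transform plus its crossed-pole residue projections, as in \eqref{eq:appC-analytic-completeness}.

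On every compact pole-free strip \(K_j\) of the exhaustion in assumption~\ref{ass:appC-meromorphic-continuation}, proposition~\ref{prop:appC-stripwise-plancherel} already states that this transform is a topological isomorphism. Its domain is the pairwise conormal quotient on \(K_j\), and its target is \(\mathscr Q_{ab}(K_j)\) with the finite-dimensional residue channels adjoined. Part (iv) of the assumption makes these stripwise maps commute with restriction. They therefore define a morphism of inverse systems, and passing to the limit gives a map into \(\mathcal M_{2,ab}^{\rm an}\) as defined in \eqref{eq:appC-projective-shadow-quotient}. Summing over unordered pairs \(a<b\), after symmetrizing the ordered-pair data, gives the map into \(\mathcal M_2\).

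\textbf{Well-definedness on \(F_2/F_1\).} I would check that an element of \(F_1\) has zero pairwise conormal part. This follows from \eqref{eq:appC-F1-zero-pair-residue} together with the wavefront statement \eqref{eq:appC-cauchy-wavefront}. Consequently the transform factors through the quotient.

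\textbf{Injectivity.} Suppose the image of a class is zero. Then on every strip, all continuous coefficients and all residue coefficients of each \(B_{ab}\) vanish modulo the shadow-null relations. Stripwise injectivity (proposition~\ref{prop:appC-stripwise-plancherel}) then puts \(B_{ab}\) in the shadow-null subspace of the pairwise conormal quotient, which is zero in that quotient. Since the quotient is taken modulo smooth and one-particle kernels, \(B_{ab}\) has vanishing leading conormal symbol, that is, vanishing residue. Proposition~\ref{prop:appC-conormal-separation} then gives \(B_{ab}\in F_1\), so the class is zero. This is exactly the argument at the end of proposition~\ref{prop:pairwise-plancherel-resolution}, now run stripwise.

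\textbf{Surjectivity.} This is the step I expect to be the main obstacle. The difficulty is that the generated kernels form a restricted family, not the whole pairwise test space. I would first show that the image of the generated kernels is dense. For this I would use lemmas~\ref{lem:angular-cyclicity-section4} and \ref{lem:mellin-cyclicity-section4}, together with lemma~\ref{lem:channel-annihilator-section4} and the Hahn--Banach theorem, which is exactly how theorem~\ref{thm:soft-accessible-generation-section4} is proved. I would then need closedness of the image, so that density upgrades to equality. Stripwise, closedness follows because the transform is a topological isomorphism of complete spaces. The delicate point is the inverse limit: I must ensure that surjectivity on each strip survives passage to the limit. My first attempt would be a Mittag-Leffler argument. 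The finite-rank residue summands and the closed graph quotients should make the kernels of the restriction maps satisfy the Mittag-Leffler condition, so that \(\varprojlim^1\) vanishes and compatible stripwise preimages lift to a global class. Finally, I would note that \(F_2/F_1\) is interpreted as the closure of the generated classes in the same inverse-limit topology. With that interpretation the identification becomes a topological isomorphism.
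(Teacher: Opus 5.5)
Your construction of the map and your injectivity argument follow the paper: the stripwise isomorphisms of proposition~\ref{prop:appC-stripwise-plancherel}, compatibility from assumption~\ref{ass:appC-meromorphic-continuation}(iv), passage to the inverse limit \eqref{eq:appC-projective-shadow-quotient}, and the observation that vanishing of all continuous and residue coefficients leaves a one-particle representative.

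The real difference is surjectivity. The paper takes the domain of the stripwise transform to be the pairwise conormal quotient itself, so each level map is already a bijection. Since the level maps commute with restriction, so do their inverses. The limit map is therefore a bijection, and its inverse is continuous for the initial inverse-limit topology. No density argument and no $\varprojlim^1$ are needed.

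You instead take $F_2/F_1$ to be the span of the generated kernels. You get density from lemmas~\ref{lem:angular-cyclicity-section4}--\ref{lem:channel-annihilator-section4} together with Hahn--Banach, and then complete. This addresses a point the paper passes over silently: the generated classes need not exhaust the stripwise conormal quotient. It has two costs.
\begin{enumerate}
  \item You must redefine $F_2/F_1$ as a closure.
  \item Lemma~\ref{lem:angular-cyclicity-section4} requires $\eta_aQ_a^2e_b\neq0$, so you only reach the charged-pair summands of $\mathcal M_2$. This matches the paper's ``summing over charged ordered pairs'', but you should state it as a hypothesis.
\end{enumerate}

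The Mittag-Leffler step you flag as the main obstacle is not actually an obstacle. The level maps are injective, so for a compatible family in the target limit the stripwise preimages are unique. By injectivity they are automatically compatible, so they already define an element of the limit of the domains. A limit of levelwise topological isomorphisms is a topological isomorphism. Density of the image at each level gives density in $\mathcal M_{2,ab}^{\rm an}$, so the closure of your generated family maps onto $\mathcal M_{2,ab}^{\rm an}$ directly.

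The condition you propose on the kernels of the restriction maps is unjustified as stated, and it is also beside the point. The Mittag-Leffler condition concerns the images of the transition maps, and no $\varprojlim^1$ term arises here at all.
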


\begin{proof}
On each compact pole-free strip, proposition~\ref{prop:appC-stripwise-plancherel} identifies the conormal quotient with the stripwise primary data and the residues of every pole crossed during the contour deformation.  Assumption~\ref{ass:appC-meromorphic-continuation}(iv) makes these identifications compatible with restriction, hence their inverse limit is a continuous bijection to \(\mathcal M_{2,ab}^{\rm an}\).  The inverse is continuous stripwise and therefore continuous for the initial inverse-limit topology.  If all continuous and residue coefficients vanish, every stripwise class is zero; by proposition~\ref{prop:appC-conormal-separation} the remaining representative is one-particle local.  Conversely, any nonzero continuous or residue coefficient defines a nonzero support class.  Summing over charged ordered pairs gives \eqref{eq:appC-F2-M2}.
\end{proof}

\subsection{Continuity of the generated family}
\label{appC:generation-M2}
The angular and Mellin density statements are proved in lemmas~\ref{lem:angular-cyclicity-section4} and \ref{lem:mellin-cyclicity-section4}.  On every compact strip, multiplication by a smooth nonvanishing soft profile, the Mellin shift, and the continued Plancherel transform are continuous.  The generated family therefore maps densely to each stripwise quotient of proposition~\ref{prop:appC-stripwise-plancherel}.  Lemma~\ref{lem:channel-annihilator-section4} then passes this density through the closed shadow quotient and the explicit residue summands, while compatibility of the restriction maps passes it to the inverse limit \eqref{eq:appC-projective-shadow-quotient}.  These continuity statements are the analytic-topological input used in theorem~\ref{thm:soft-accessible-generation-section4}.

\section{Filtered Hochschild and Chevalley--Eilenberg complexes}
\label{app:hochschild-jacobi}

Hochschild cochains govern the ordered celestial product, and antisymmetrization gives the Chevalley--Eilenberg complex of the current algebra \cite{Hochschild:1945,Gerstenhaber:1964,ChevalleyEilenberg:1948}.  The formal parameter records support-extension degree rather than physical loop order.

\subsection{The first support extension}
\label{appD:filtered-ope-complex}

Let \(\mathscr A_1\) be the one-particle ordered current algebra generated by \(\mathfrak h_{\rm hard}\) and \(\mathfrak s_{\log}\), with all local contact terms included.  Let \(E=\mathcal M_2\) be the analytic second-support module.  Work in
\begin{equation}
  \widehat{\mathscr A}
  =\mathscr A_1\oplus tE,
  \qquad t^2=0 \, ,
  \label{eq:appD-A1-definition}
\end{equation}
which is the ordered-product analogue of retaining \(F_1\oplus F_2/F_1\) and discarding \(F_3\).

Write
\begin{equation}
  m_t=m_0+t\,c \, ,
  \label{eq:appD-deformed-product}
\end{equation}
where \(m_0\) is the one-particle OPE product and \(c\in C^2(\mathscr A_1,E)\) is an ordered representative of the dipole-hard second-support term.  We choose it so that
\begin{equation}
  c(X,H[\Phi])-c(H[\Phi],X)
  =\mathfrak m_X(\Phi) \, ,
  \label{eq:appD-ordered-representative}
\end{equation}
while its hard-hard component and its antisymmetric soft-soft component vanish.  A different choice of the symmetric part of \(c\) changes the ordered product but not the current-algebra extension.

The first-order associator is the Hochschild differential
\begin{equation}
\begin{aligned}
  (d_{\rm Hoch}c)(a,b,d)={}&
  a\cdot c(b,d)-c(m_0(a,b),d)
  \\
  &+c(a,m_0(b,d))-c(a,b)\cdot d \, .
  \label{eq:appD-Hochschild-differential}
\end{aligned}
\end{equation}
Associativity modulo \(t^2\) is equivalent to
\begin{equation}
  d_{\rm Hoch}c=0 \, .
  \label{eq:appD-Hochschild-cocycle}
\end{equation}
For the celestial kernels, this equation is inherited from associative composition in the finite-part kernel algebra.  Every second-support term is represented in the internal target \(E\).

\subsection{Antisymmetrization}
\label{appD:hochschild-to-CE}

Let \(\mathfrak g_1\) be the commutator Lie algebra of \(\mathscr A_1\).  The antisymmetrization of \(c\),
\begin{equation}
  \nu(a,b)=c(a,b)-c(b,a) \, ,
  \label{eq:appD-nu-antisym}
\end{equation}
is the cochain \eqref{eq:nu-definition-section6}.  Its Chevalley--Eilenberg differential is
\begin{equation}
\begin{aligned}
  (d_{\rm CE}\nu)(x,y,z)={}&
  x\cdot\nu(y,z)-y\cdot\nu(x,z)+z\cdot\nu(x,y)
  \\
  &-\nu([x,y]_1,z)+\nu([x,z]_1,y)-\nu([y,z]_1,x) \, .
  \label{eq:appD-CE-differential}
\end{aligned}
\end{equation}

\begin{proposition}[Hochschild--Lie compatibility]
\label{prop:appD-antisym-Hochschild}
If \(d_{\rm Hoch}c=0\), then \(d_{\rm CE}\nu=0\).  Conversely, for the mixed component used here, \(d_{\rm CE}\nu=0\) is the antisymmetric part of the ordered associativity equation.
\end{proposition}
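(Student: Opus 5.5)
The plan is to prove the classical fact that the antisymmetrization of a Hochschild cocycle of an associative algebra with values in a bimodule is a Chevalley--Eilenberg cocycle of the commutator Lie algebra with values in the induced module. We then check that the bimodule structure on $E=\mathcal M_2$ makes the induced module action exactly the diagonal action $x\cdot u$ used in \eqref{eq:appD-CE-differential}. First, we record that $E$ is an $\mathscr A_1$-bimodule through left and right finite-part composition. The Lie action is then $x\cdot u:=x u-u x$, which is the commutator action $\rho_2$ of section~\ref{sec:associativity-minimal-closure}; this identification holds because the actions are induced by composition of kernels, as in proposition~\ref{prop:canonical-primitive-relative-splitting}. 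Since $\mathfrak g_1$ carries the bracket $[x,y]_1=m_0(x,y)-m_0(y,x)$, the setting is exactly the standard one.

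For the forward direction, we will show the identity
\begin{equation}
  (d_{\rm CE}\nu)(x,y,z)=\sum_{\pi\in S_3}\operatorname{sgn}(\pi)\,(d_{\rm Hoch}c)(\pi(x),\pi(y),\pi(z)) .
\end{equation}
To do this, expand each of the six terms of \eqref{eq:appD-Hochschild-differential} over permutations and group them. The terms $a\cdot c(b,d)$ and $-c(a,b)\cdot d$ combine under signed summation into $x\cdot\nu(y,z)-y\cdot\nu(x,z)+z\cdot\nu(x,y)$ with the commutator action. The terms $-c(m_0(a,b),d)$ and $c(a,m_0(b,d))$ combine into $-\nu([x,y]_1,z)+\nu([x,z]_1,y)-\nu([y,z]_1,x)$. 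This is a finite bookkeeping exercise with twelve terms on each side. Once the identity holds, $d_{\rm Hoch}c=0$ immediately gives $d_{\rm CE}\nu=0$. Note that the identity uses only $\nu=c-c^{\rm op}$ and never the choice of symmetric part in \eqref{eq:appD-ordered-representative}, consistent with the remark following that equation.

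For the converse on the mixed component, the displayed identity already exhibits $d_{\rm CE}\nu$ on a triple $(X,Y,H[\Phi])$ or $(X,H[\Phi],H[\Psi])$ as the total antisymmetrization of the associator. That antisymmetrization is its projection to the sign representation, so it is precisely the antisymmetric part of the ordered associativity equation, which is the claim. On these components, \eqref{eq:mixed-CE-condition-section6} and \eqref{eq:hard-current-one-cocycle} are the concrete incarnations; theorems~\ref{thm:second-layer-jacobi} and \ref{thm:hard-current-one-cocycle} confirm consistency.

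The main obstacle I expect is the bimodule step: making sure that left and right finite-part compositions of one-particle elements with classes in $F_2/F_1$ are well defined and associative modulo $F_1$. They must satisfy $(ab)u=a(bu)$, $(au)b=a(ub)$ and $u(ab)=(ua)b$, since the sign bookkeeping silently uses these. I would handle this by the analytic regularization in the relative coordinate $\xi=\bar z_a-\bar z_b$ described before theorem~\ref{thm:second-layer-jacobi}. Its associators are supported at $\xi=0$ and hence lie in $F_1$. One-particle operators preserve $F_1$, as in lemma~\ref{lem:one-particle-scheme-invariance}, so the actions descend to the quotient. Everything is evaluated modulo $F_3$, and the $t^2=0$ truncation guarantees that no products of two $E$-valued terms appear.
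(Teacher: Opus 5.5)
Your proposal is correct and follows the same route as the paper. The paper also antisymmetrizes the Hochschild differential over the six orderings of $(a,b,d)$: the product terms assemble into the three Lie brackets, and the remaining terms assemble into the commutator module action. It likewise reads the converse on mixed triples as this same identity. Your explicit check that the left and right finite-part actions descend to $F_2/F_1$ makes precise a step the paper leaves implicit.
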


\begin{proof}
Expand the six ordered associators obtained by antisymmetrizing \((a,b,d)\).  The terms containing the ordered product combine into the three Lie brackets in \eqref{eq:appD-CE-differential}; the remaining terms give the module action.  Thus the antisymmetrization of \(d_{\rm Hoch}c\) is \(d_{\rm CE}\nu\).  In this extension the antisymmetric component is \(\nu(X,H[\Phi])=\mathfrak m_X(\Phi)\); the reverse implication on mixed triples follows directly.
\end{proof}

\subsection{The hard-current and mixed cocycle equations}
\label{appD:mixed-symbol}

On a triple \((X,H[\Phi],H[\Psi])\), the equation \(d_{\rm CE}\nu=0\) becomes
\begin{equation}
  \mathfrak m_X([\Phi,\Psi]_{\star})
  =\Phi\cdot\mathfrak m_X(\Psi)
   -\Psi\cdot\mathfrak m_X(\Phi) \, ,
  \label{eq:appD-hard-one-cocycle}
\end{equation}
which is theorem~\ref{thm:hard-current-one-cocycle}.  On a triple \((X,Y,H[\Phi])\), it becomes
\begin{equation}
\begin{aligned}
  0={}&\rho_2(X)\mathfrak m_Y(\Phi)
      -\rho_2(Y)\mathfrak m_X(\Phi)
      -\mathfrak m_{[X,Y]_1}(\Phi)
  \\
  &-\mathfrak m_Y(\mathcal L_X\Phi)
   +\mathfrak m_X(\mathcal L_Y\Phi) \, .
  \label{eq:appD-second-layer-Hochschild}
\end{aligned}
\end{equation}

The mixed identity in theorem~\ref{thm:second-layer-jacobi} reduces on every ordered pair \((a,b)\) to
\begin{equation}
\begin{aligned}
  &[K_X^{ab},[K_Y^{ab},\Phi_{ab}]]_{\rm fp}
  -[K_Y^{ab},[K_X^{ab},\Phi_{ab}]]_{\rm fp}
  \\
  &\hspace{3.8cm}
  -[[K_X^{ab},K_Y^{ab}]_{\rm fp},\Phi_{ab}]_{\rm fp}=0 \, .
  \label{eq:appD-mixed-symbol}
\end{aligned}
\end{equation}
For \(\xi=\bar z_a-\bar z_b\), replace a singular monomial \(\xi^{-r}\) by \(\xi^{-r+\varepsilon}\).  Products converge for \(\operatorname{Re}\varepsilon\) sufficiently large and continue meromorphically to \(\varepsilon=0\).  The finite-part associator is supported at \(\xi=0\), hence it is a one-particle local term and lies in \(F_1\).  Associativity is used only after projection to \(F_2/F_1\), where these local terms vanish.  Equation~\eqref{eq:appD-mixed-symbol} is therefore the ordinary commutator Jacobi identity in the quotient; different local extensions give the same class.

\subsection{Extension, minimality and scheme independence}
\label{appD:renormalized-extension}

The Lie bracket on \(\mathfrak g_1\oplus tE\) is
\begin{equation}
  [(x,tu),(y,tv)]
  =([x,y]_1,\,t(x\cdot v-y\cdot u+\nu(x,y))) \, .
  \label{eq:appD-ren-bracket}
\end{equation}

For the bracket \eqref{eq:appD-ren-bracket}, the \(\mathfrak g_1\)-component of the Jacobiator vanishes because \(\mathfrak g_1\) is a Lie algebra, while its \(tE\)-component is \((d_{\rm CE}\nu)(x,y,z)\).  The bracket is therefore Lie modulo \(t^2\) exactly when \(d_{\rm CE}\nu=0\).  For the celestial cochain \eqref{eq:nu-definition-section6}, this condition follows from \eqref{eq:appD-hard-one-cocycle} and the pairwise identity \eqref{eq:appD-mixed-symbol}.

The analytic step in theorem~\ref{thm:minimal-closure-section6} is the generation argument.  The pairwise residue criterion gives a nonzero class in \(F_2/F_1\) for every Mellin label with nonzero backward difference.  Theorem~\ref{thm:soft-accessible-generation-section4} identifies the closed orbit under the required profile, Mellin and diagonal Lorentz actions with \(\mathcal M_2\).  Since a one-particle redefinition lies in \(F_1\), it leaves this generated subquotient unchanged.

A scheme change modifies the logarithmic kernel by \(K_X\mapsto K_X+r_X\) with \(r_X\in F_1\), and a finite hard-current redefinition changes \(H[\Phi]\) by \(tH[R(\Phi)]\).  The induced change of the mixed cochain is
\begin{equation}
  \mathfrak m_X(\Phi)
  \longmapsto
  \mathfrak m_X(\Phi)+\sigma_2([r_X,\Phi])
  =\mathfrak m_X(\Phi) \, ,
  \label{eq:appD-scale-change-cocycle}
\end{equation}
while the change produced by \(R\) is a Chevalley--Eilenberg coboundary in the chosen representative.  Therefore
\begin{equation}
  [\nu]_{F_2/F_1}
  \quad\text{and}\quad
  \mathfrak g_{\rm ren}^{\log}
  \label{eq:appD-scale-independent}
\end{equation}
are independent of local one-particle subtraction schemes.  Local central terms in the dipole product remain scheme dependent and are confined to \(F_1\).

\section*{Acknowledgments}
I thank Shamik Banerjee, Raju Mandal and Biswajit Sahoo for correspondence, and Shamik Banerjee for pointing out that an earlier version did not attribute their dipole-current construction explicitly enough.

\section*{Data Availability}
No datasets were generated or analyzed in this work.

\bibliographystyle{JHEP}
\bibliography{biblio}

\end{document}